\newcommand{\blind}{0}
\theoremstyle{plain}
\newtheorem{definition}{Definition}
\newtheorem{remark}{Remark}
\newtheorem{theorem}{Theorem}
\newtheorem{proposition}{Proposition}
\newtheorem{assumption}{Assumption}
\newtheorem{lemma}{Lemma}
\begin{document}
\def\spacingset#1{\renewcommand{\baselinestretch}%
{#1}\small\normalsize} \spacingset{1}

\if0\blind
{
  \title{\bf High-dimensional Filtering using Nested Sequential Monte Carlo}
  \author{Christian A. Naesseth\footnote{Corresponding author. Address: Department of Electrical Engineering, Link\"oping University, SE-581 83 Link\"oping, Sweden. Email: \href{mailto:christian.a.naesseth@liu.se}{christian.a.naesseth@liu.se}}\hspace{.2cm},
   Fredrik Lindsten\footnote{Department of Information Technology, Uppsala University}\hspace{.2cm} 
   and 
   Thomas B. Sch\"on\footnotemark[\value{footnote}]\hspace{.2cm}
  }
  \maketitle
} \fi

\if1\blind
{
  \bigskip
  \bigskip
  \bigskip
  \begin{center}
    {\LARGE\bf High-dimensional Filtering using Nested Sequential Monte Carlo}
\end{center}
  \medskip
} \fi

\bigskip
\begin{abstract}
Sequential Monte Carlo (\smc) methods comprise one of the most successful approaches to approximate Bayesian filtering. However, \smc without good proposal distributions struggle in high dimensions. We propose nested sequential Monte Carlo (\nsmc), a methodology that generalises
the \smc framework by requiring only approximate, properly weighted, samples from the \smc proposal distribution, while still resulting in a correct \smc algorithm. This way we can \emph{exactly approximate} the locally optimal proposal, and extend the class of models for which we can perform efficient inference using \smc. We show improved accuracy over other state-of-the-art methods on several spatio-temporal state space models.
\end{abstract}

\noindent%
{\it Keywords:}  particle filtering, spatio-temporal models, state space models, approximate Bayesian inference, backward simulation
\vfill

\newpage
\spacingset{1.45}

% ======================================================================
%                           Introduction
% ======================================================================
\section{Introduction}\label{sec:introduction}
Inference in complex and high-dimensional statistical models is a very challenging problem that is
ubiquitous in applications such as climate informatics
\citep{monteleoniEtAl2013}, bioinformatics \citep{cohen2004bioinformatics} and machine learning
\citep{wainwright2008graphical}, to mention a few.

We are interested in \emph{sequential} Bayesian inference in settings where we have a sequence of posterior distributions that we need to compute. To be specific, we are focusing on settings where the model (or state variable) is high-dimensional, but where there are \emph{local dependencies}. One example of the type of models we consider are the so-called spatio-temporal models \citep{Wikle:2015,CressieW:2011,RueH:2005}. 

Sequential Monte Carlo (\smc) methods comprise one of the most successful methodologies for sequential Bayesian inference.
However, \smc struggles in high dimensions and these methods are rarely used for dimensions, say, higher than ten \citep{rebeschiniH2015can}.
The purpose of the \nsmc methodology is to push this limit well beyond the single digits.

The basic strategy is to mimic the behavior of a so-called \emph{fully adapted} (or locally optimal) \smc algorithm.
Full adaptation can drastically improve the efficiency of \smc in high dimensions \citep{snyder2015performance}. Unfortunately, it can rarely
be implemented in practice since the fully adapted proposal distributions are typically intractable.
\nsmc addresses this difficulty by requiring only approximate, \emph{properly weighted}, samples
from the proposal distribution. This enables us to use a second layer of \smc to simulate approximately from the proposal. The proper weighting condition ensures the validity of \nsmc,
thus providing a generalisation of the family of \smc methods. This paper extends preliminary work \citep{naessethLS2015nested} with the ability to handle more expressive models, more informative central limit theorems and convergence proofs, as well as new experiments.

% ======================================================================
%                             Related Work
% ======================================================================
% ======================================================================
%                       RELATED WORK
% ======================================================================
\subsection*{Related work}\label{sec:rw}
There has been much recent interest in using Monte Carlo methods as nested procedures of other Monte Carlo algorithms. The $\operatorname{SMC}^2$ and $\operatorname{IS}^2$ algorithms by \citet{chopinJP2013smc2} and \citet{tranSPK2013importance}, respectively, are algorithms for learning static parameters as well as latent variable(s). In these methods one \smc/\is method for the parameters is coupled with another for the latent variables. \citet{chenSOL2011decentralized} and \citet{johansenWD2012exact} on the other hand addresses the state inference problem by splitting $x_t$ into two components and run coupled \smc samplers for these. These methods solve different problems and the ``internal'' \smc samplers are constructed differently, for approximate marginalization instead of simulation.

By viewing the state inference problem as a sequential problem in the components of $x_t$ we can make use of the method for general graphical models by \citet{naessethLS2014}. This method is combined with the island particle filter \citep{vergeDDM2013on}, and studied more closely by \citet{beskosCJKZ2014a} under the name space-time particle filter (\stpf). The \stpf does not generate an approximation of the fully adapted \smc. Another key distinction is that in \stpf each particle in the ``outer'' \smc sampler corresponds to a complete particle system, whereas for \nsmc it will correspond to different hypotheses about the latent state $x_t$ as in standard \smc. This leads to lower communication costs and better memory efficiency in \eg distributed implementations. We have also found that \nsmc typically outperforms \stpf, even when run on a single machine with matched computing times.

The method proposed by \citet{jaoua2013bayesian} can be viewed as a special case of \nsmc when the nested procedure to generate samples is given by \is with the proposal being the transition probability. Independent resampling PF (IR-PF) introduced in \citet{lamberti2016independent} generates samples in the same way as \nsmc with \is, instead of \smc, as the nested procedure. However, IR-PF uses a different weighting that requires both the outer and the inner number of particles to tend to infinity for consistency. Furthermore, we provide results in the supplementary material that show \nsmc significantly outperforming IR-PF on an example studied in \citet{lamberti2016independent}.

There are other \smc-related methods that have been introduced to tackle high-dimensional problems, see \eg the so-called block PF studied by \citet{rebeschini2015}, the location particle smoother by \citet{briggsDM2013data}, and various methods reviewed in \citet{djuric2013particle}. These methods are, however, all inconsistent because they are based on approximations that result in systematic errors.

The concept of proper weighting (or random weights) is not new and has been used in the so-called random weights particle filter \citep{fearnheadPRS2010random}. They require exact samples from a proposal $q_t$ but use a nested Monte Carlo method to unbiasedly estimate the importance weights $w_t$. In \citet{martino2016weighting} the authors study proper weighting as a means to perform \emph{partial resampling}, \ie only resample a subset of the particles at each time. The authors introduce the concept of ``unnormalized'' proper weighting, which is essentially the same as proper weighting that was introduced and used to motivate \nsmc in \citet{naessethLS2015nested}. Furthermore, \citet{stern2015} uses proper weighting and \nsmc to solve an inference problem within statistical historical linguistics.

Another approach to solve the sequential inference problem is the sequential Markov chain Monte Carlo class of methods \citep{yang2013}. It was shown by \citet{septier2016} that the optimal sequential \mcmc algorithm actually is equivalent to the fully adapted \smc.

% ======================================================================
%                        Motivating examples
% ======================================================================
% ======================================================================
%                       MOTIVATING EXAMPLES
% ======================================================================
\section{Sequential probabilistic models}\label{sec:models}
In statistics, data science and machine learning, probabilistic modeling and Bayesian inference are essential tools to finding underlying patterns and unobserved quantities of interest. To illustrate the nested \smc sampler we will make use of two general classes of sequential probabilistic models, the so-called \emph{Markov random field} (\mrf) and the \emph{state space model} (\ssm). Sequential probabilistic models are in general built up of a sequence of (probabilistic) models that share common random variables and structure. These models will serve to illustrate the usefullness and wide applicability of the method we propose. We are interested in the type of sequential models where the latent variables are fairly high-dimensional. In subsequent sections we will also show explicitly how we can make use of structure between the (latent) random variables to design an efficient \smc sampler that lets us scale to much higher dimensions than possible with standard \smc methods, usually by up to 1--2 orders of magnitude. Note also that the \nsmc is by no means restricted to the classes of models we illustrate in this section, rather it can in principle be applied to any sequence of distributions we would like to approximate. We will refer to this sequence of distributions of interest as the \emph{target distributions}.

%Note that typically we will assume that the latent variables $x$ and the observed variables $y$ are high-dimensional and admit densities with respect to some dominating measures $\myd x$ and $\myd y$, respectively.

% ======================================================================
%                      MARKOV RANDOM FIELD
% ======================================================================
\subsection{Markov random fields}\label{sec:mrf}
The Markov random field is a type of undirected probabilistic graphical model \citep{jordan2004graphical}. The \mrf is typically not represented as a sequence of distributions (or models), but it has previously been shown \citep{hamze2005hot,everitt2012bayesian,NaessethLS:2014IT,naessethLS2014,naessethLS2015nested,naessethLS2015ws,lindstenjnksab2014} that it can be very useful to artificially introduce a sequence to simplify the inference problem. Furthermore, it is also possible to postulate the model as an \mrf that increases with ``time'', useful in \eg climate science \citep{fuBLS2012drought,naessethLS2015nested}. In the exposition below we will first for simplicity assume that we have an \mrf that is of fixed dimension, \ie the latent variable $x=(x_1,\ldots,x_{n_x})$ is a finite-dimensional multivariate random variable. The conditional independencies of an \mrf are described by the structure of the graph $G = \{\Ve,\Ed\}$, where $\Ve = \{1,\ldots,n_x\}$ is the vertex set and $\Ed = \{(i,j): (i,j) \in \Ve \times \Ve,~\exists~\text{edge between vertex}~i~\text{and}~j\}$ is the edge set. Given $G$ we can define a joint probability density function for $x$ that incorporates this structure as
\begin{align}
\pi(x) = \frac{1}{Z} \prod_{i\in \Ve} \phi(x_i,y_i) \prod_{(i,j) \in \Ed} \psi(x_i,x_j),
\label{eq:mrf}
\end{align}
where $y=(y_1,\ldots,y_{n_x})$ is the observed variable and $\phi,\psi$ are called observation and interaction potentials, respectively. The normalization constant that ensures that $\pi(\cdot)$ integrates to one is given by
\[
Z \eqdef \int \prod_{i\in \Ve} \phi(x_i,y_i) \prod_{(i,j) \in \Ed} \psi(x_i,x_j) \myd x.
\] 
Note that \eqref{eq:mrf} is usually referred to as a pairwise \mrf in the literature due to $\pi(\cdot)$ factorising into potentials that only depend on pairs of components of the random variable $x$. For clarity we restrict ourselves to this type, however the method we propose in this paper can be applied to more general types of graphs, see \eg \citet{naessethLS2014} for ideas on how to extend \smc inference to non-pairwise \mrf{s}. 

Now, the sequential \mrf is obtained if we consider a random variable $x_{1:t} = (x_1,\ldots,x_t)$, for some $t =1,\ldots,T$, that factorises according to
\begin{align}
\pi_t(x_{1:t}) = \frac{1}{Z_t} \gamma_t(x_{1:t}) \eqdef \frac{1}{Z_t} \gamma_t(x_{1:t-1}) \prod_{i\in \Ve} \phi(x_{t,i},y_{t,i}) \rho(x_{t-1,i},x_{t,i}) \prod_{(i,j) \in \Ed} \psi(x_{t,i},x_{t,j}),
\label{eq:seq:mrf}
\end{align}
where $G = \{\Ve,\Ed\}$ again encodes the structure of the graphical model and $\rho(\cdot)$ is a new type of interaction potential that links $x_{t-1}$ to $x_t$. Furthermore, the normalisation constant is given by $Z_t \eqdef \int \gamma_t(x_{1:t}) \myd x_{1:t}$. %and the initial distribution we get by setting $\gamma_0$ as follows
%\begin{align*}
%\gamma_0(x_{0}) = \prod_{i\in \Ve} \phi(x_{0,i},y_{0,i}) \prod_{(i,j) \in \Ed} \psi(x_{0,i},x_{0,j}).
%\end{align*}
We illustrate a typical example of a sequential \mrf in Figure~\ref{fig:seq:mrf}. It can amongst other things be used to model spatio-temporal phenomena, it was \eg used by \citet{naessethLS2015nested} to detect drought based on annual average precipitation rates collected from various sites in North America and Africa over the last century.
\begin{figure}[h]
\centering
\input{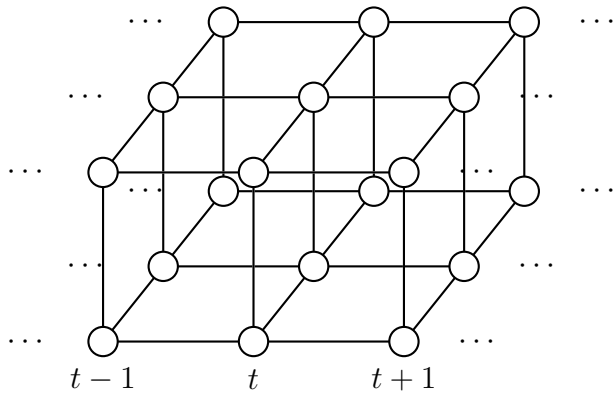}
\caption{Illustration of a sequential \mrf where $G$ is given by $2 \times 3$ grid with nearest neighbour interaction.}\label{fig:seq:mrf}
\end{figure}
We would like to remark on one peculiarity that arises when the sequential \mrf is used to model a spatio-temporal process. Consider $\pi_t(\cdot)$ without measurements as a prior on a spatio-temporal model, \ie the observation potentials $\phi$ in \eqref{eq:seq:mrf} do not depend on $y_t$. In this case we get that the marginals for $t < T$ change depending on the value of $T$, \ie in general ${\pi_t(x_{1:t}) \neq \pi_T(x_{1:t}) = \int \pi_T(x_{1:T}) \myd x_{t+1:T}}$.  Typically we would expect that \emph{a priori} what happens for a dynamical process at time $t$ should not be affected by the length of time-series we consider. The next class of models we consider can introduce dependencies in both time and space without giving rise to this counter-intuitive result.

% ======================================================================
%                       STATE SPACE MODELS
% ======================================================================
%\subsection{State space models}

% ======================================================================
%                       SPATIO-TEMPORAL SSMs
% ======================================================================
\subsection{Spatio-temporal state space models}
Before we move on to define the spatio-temporal state space model (\stssm), we will briefly review \ssm{s}, a comprehensive and important model type commonly used for studying dynamical systems. For a more detailed account, and with pointers to the wide range of applications, we refer the readers to \eg \citet{cappe2005,douc2014nonlinear,shumway2010time}. 

In state space models the sequential structure typically enters as a known, or postulated, dynamics on the unobserved latent state $x_t$ that is then partially observed through the measurements $y_t$. A common definition for \ssm{s} is through its functional form
\begin{subequations}
\begin{align}
x_t &= a(x_{t-1},v_t), &v_t \sim p_v (\cdot),\label{eq:ssm:x:functional}\\
y_t &= c(x_t,e_t), &e_t \sim p_e (\cdot),\label{eq:ssm:y:functional}
\end{align}
\label{eq:ssm:functional}
\end{subequations}
where $v_t$ and $e_t$, often called process and measurement noise, respectively, are random variables with some given distributions $p_v(\cdot), p_e(\cdot)$. Furthermore, we have that the initial state $x_1$ is a random variable with some initial distribution $\mu(\cdot)$. For simplicity we will assume that both $a(x_{t-1},\cdot) : \reals^{n_x} \to \reals^{n_x}$ and $c(x_t,\cdot) : \reals^{n_y} \to \reals^{n_y}$ are bijective and continuously differentiable. Then by the transformation theorem we can equivalently express \eqref{eq:ssm:functional} through the corresponding probability density functions (\pdf)
\begin{subequations}
\begin{align}
x_t | x_{t-1} &\sim f(x_t|x_{t-1}),\label{eq:ssm:x:density}\\
y_t | x_t &\sim g(y_t|x_t),\label{eq:ssm:y:density}
\end{align}
\label{eq:ssm:density}
\end{subequations}
and we define the sequential probabilistic model (or target distribution) as follows
\begin{align}
\pi_t(x_{1:t}) \eqdef \frac{\gamma_t(x_{1:t})}{Z_t} = \frac{1}{Z_t} \mu(x_1) g(y_1|x_1)\prod_{s=2}^t f(x_s|x_{s-1}) g(y_s|x_s).
\label{eq:ssm}
\end{align}

%The probability density functions $f$ and  $g$ are given as follows
%\begin{align}
%f(x_t|x_{t-1}) &= p_v(a^{-1}(x_{t-1},x_t)) \left|\det{\frac{\partial a^{-1}(x_{t-1},x_t)}{\partial x_t}}\right|,\nonumber\\
%g(y_t | x_t) &= p_e(c^{-1}(x_{t},y_t)) \left|\det{\frac{\partial c^{-1}(x_{t},y_t)}{\partial y_t}}\right|,
%\label{eq:xy:density}
%\end{align}
%where $a^{-1}(x_{t-1},\cdot), c^{-1}(x_t,\cdot)$ are the inverses of $a(x_{t-1},\cdot), c(x_t,\cdot)$, respectively, and $\frac{\partial a^{-1}}{\partial x_t}, \frac{\partial c^{-1}}{\partial y_t}$ denotes their corresponding Jacobians. 

We will assume that $g(y_t|x_t)$ is available and can be evaluated pointwise. % However, we do not make any such assumptions regarding the transition density $f(x_t|x_{t-1})$. 
This condition is often satisfied in practical applications.

A typical assumption when using the \ssm to model spatio-temporal systems is to introduce the spatial dependency only between time steps $t-1$ and $t$, see \eg the paper by \citet{wikle2010general}. This can be achieved by defining a model $a(\cdot)$ such that the product of the induced distributions $f(x_t|x_{t-1})g(y_t|x_t)$, conditionally on $x_{t-1}$, completely factorize over the components of $x_t$, see also \citep{rebeschini2015} where \smc applied to such a model is studied. Here we will study the case where we introduce spatial dependencies within each time step through the disturbance term $v_t$. We define the \stssm as a combination of the functional and \pdf representation of an \ssm where the distribution for $v_t$ is given by an \mrf as in \eqref{eq:mrf}
\begin{subequations}
\begin{align}
\left(
\begin{array}{c}
x_{t,1}\\
\vdots \\
x_{t,n_x}
\end{array}
\right)
&= 
\left(
\begin{array}{c}
a_1(x_{t-1},v_{t,1}) \\
\vdots \\
a_{n_x}(x_{t-1},v_{t,n_x})
\end{array}
\right),
&v_t \sim \frac{1}{Z_v} \prod_{i\in \Ve} \phi(v_{t,i}) \prod_{(i,j) \in \Ed} \psi(v_{t,i},v_{t,j}),\label{eq:stssm:x:functional}\\
y_t | x_t &\sim g(y_t|x_t).\label{eq:stssm:y:density}
\end{align}
\label{eq:stssm}
\end{subequations}

We make no assumptions on local dependencies between $x_t$ and $x_{t-1}$, however, to keep it simple we will assume that the graph $G = \{\Ve,\Ed\}$ describing the distribution for $v_t$ does not depend on time $t$.
%The \stssm is achieved for the above formulation, \ie \eqref{eq:stssm}, when $p_v(v_t)$ is modelled by a \mrf as in \eqref{eq:mrf}. 
Furthermore, we will in this paper mainly consider models where dependencies between components in $v_t$ are ``few'', \eg the \mrf is sparse with few elements in $\Ed$, and where components of $y_t$ in $g$ only depends on subsets of $x_t$. To illustrate the dependency structure in an \stssm we propose a combination of the traditional undirected graph for the \mrf and the directed acyclic graph for the \ssm, see Figure~\ref{fig:stssm}.
\begin{figure}[h]
\centering
\tikzstyle{edge} = [-]
\tikzstyle{edge2} = [->,very thick,>=latex]
\tikzstyle{edge3} = [->]
\tikzstyle{arrw} = [very thick,shorten <=2pt,shorten >=2pt]
\tikzstyle{var} = [draw,circle,inner sep=0,minimum width=0.5cm]
\tikzstyle{initvar} = [draw,circle,inner sep=0,minimum width=0.9cm]
\tikzstyle{obs} = [draw,circle,inner sep=0,minimum width=0.5cm, fill=black!20]
  \begin{tikzpicture}[>=stealth,node distance=0.6cm]
    \begin{scope}
      % Draw x-nodes and observations
      \foreach \x in {0,1,2,3,4,5} {
        \foreach \y in {0,1,2,3} {
          \pgfmathtruncatemacro\xend{\x+1}
          \pgfmathtruncatemacro\yend{4-\y}
          \node at (1.9*\x,\y) (x\x\y) [var] {$x_{\xend,\yend}$};
          \node at (1.9*\x+0.9,\y-0.15) (y\x\y) [obs] {};
        }
      }
      
      % Draw bounding boxes
      \foreach \x in {0,1,2,3,4,5} {
        \pgfmathtruncatemacro\xend{\x+1}
        \node[draw,very thick,rectangle,rounded corners=3mm,minimum width=0.6cm,fit=(x\x0) (x\x3)] (x\x){};
      }
      
      %\node at (-1.9,1.5) (xinit) [initvar] {$x_{1}$};
      %\draw[edge2] (xinit) -- (x0);
      
      % Draw horizontal edges
      \foreach \x in {0,1,2,3,4} {
        \pgfmathtruncatemacro\xend{\x+1}
          \draw[edge2] (x\x) -> (x\xend);
      }
      % Draw diagonal edges
      \foreach \x in {0,1,2,3,4,5} {
        \pgfmathtruncatemacro\xend{\x+1}
        \foreach \y in {0,1,2,3} {
          \draw[edge3] (x\x\y) -- (y\x\y);
        }
      }
      % Draw vertical edges
      \foreach \x in {0,1,2,3,4,5} {
        \foreach \y in {0,1,2} {
          \pgfmathtruncatemacro\yend{\y+1}
          \draw[edge] (x\x\y) -- (x\x\yend) {};
        }
      }
      \node at (6*1.9,1.5) (dots) [circle] {$\cdots$};
      \draw[edge2] (x5) -> (dots) {};
    \end{scope}
     %\draw [blue] (current bounding box.south west) rectangle (current bounding box.north east);
  \end{tikzpicture}
\caption{Illustration of a spatio-temporal state space model with $n_x=4$, one conditionally independent observation per component in $x_t$, and the \mrf for $v_t$ is given by a chain. Grey circles illustrate the observations $y_t$.}\label{fig:stssm}
\end{figure}
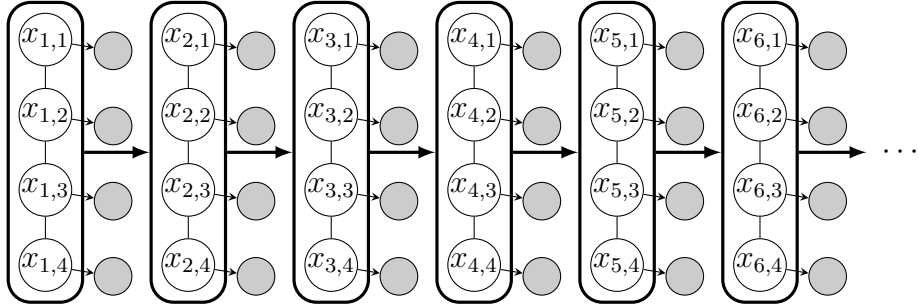
This allows us to model more complex dynamical processes than \citet{naessethLS2015nested} who assumed that $f(x_t|x_{t-1})g(y_t|x_t)$ factorized with only local dependencies between components of $x_t$. Furthermore, we can clearly see that the peculiarity discussed in Section~\ref{sec:mrf} is not present in this model; the marginal of the prior does not change with $T$ as expected.

% ======================================================================
%                         Background/Nested IS
% ======================================================================
%\input{sec-background}

%\newpage
% ======================================================================
%                         Method & Theory
% ======================================================================
\section{Nested Sequential Monte Carlo Methods}\label{sec:method}
Inference in sequential probabilistic models essentially boils down to computing the target distribution $\pi_t(x_{1:t})$ for $t=1,2,\ldots$; typically an intractable problem with no analytical or numerically efficient solution. This means that we have to resort to approximations. In this paper we focus on one particular succesful solution to the problem, the so called sequential Monte Carlo family of algorithms first introduced in the papers by \citet{GordonSS:1993,stewart1992,kitagawa1996monte}. 

The basic idea with \smc is to move a set of weighted samples (particles) $\{(x_{1:t-1}^i, w_{t-1}^i)\}_{i=1}^N$ approximating $\pi_{t-1}$, to a new set of particles $\{(x_{1:t}^i, w_{t}^i)\}_{i=1}^N$ which approximates $\pi_t$. These samples define an empirical approximation of the target distribution
\begin{align}
\pi_t^N(\myd x_{1:t}) \eqdef \sum_{i=1}^N \frac{w_t^i}{\sum_\ell w_t^\ell} \delta_{x_{1:t}^i}(\myd x_{1:t}),
\end{align}
where $\delta_{x}(\myd x)$ is a Dirac measure at $x$. In the next section we will detail an especially efficient way of moving the particles, known as fully adapted \smc \citep{pittS1999filtering}, ensuring that all normalized weights are equal to $\frac{1}{N}$.

% ======================================================================
%                             FA SMC
% ======================================================================
\subsection{Fully Adapted Sequential Monte Carlo}\label{sec:fapf}
The procedure to move the particles and their weights from time $t-1$ to $t$ in any \smc sampler is typically done in a three-stage approach. The first, \emph{resampling}, stochastically chooses $N$ particles at time $t-1$ that seem promising, discarding low-weighted ones. The second stage, \emph{propagation}, generates new samples for time $t$ conditioned on the resampled particles. The final stage, \emph{weighting}, corrects for the discrepancy between the target distribution and the \emph{proposal}, \ie the instrumental distribution used in the propagation step.

Fully adapted \smc \citep{pittS1999filtering} makes specific choices on the resampling weights, $\nu_{t-1}$, and the proposal, $q_t(x_t|x_{1:t-1})$, such that all the importance weights $w_t$ are equal. By introducing \emph{ancestor indices} $a_{t-1}^i \in \{1,\ldots,N\}$, we can describe the resampling step by simulating $N$ times \iid from
\begin{align}
\Prb(a_{t-1}^i = j) = \frac{\nu_{t-1}^j}{\sum_{\ell=1}^M \nu_{t-1}^\ell}, \quad \nu_{t-1}^j \eqdef \int \frac{\gamma_t\left((x_{1:t-1}^j,x_t)\right)}{\gamma_{t-1}(x_{1:t-1}^j)} \myd x_t.
\end{align}
Propagation then follows by simulating $x_t^i$ conditional on $x_{1:t-1}^{a_{t-1}^i}$, for $i=1,\ldots,N$, as follows
\begin{align}
x_t^i | x_{1:t-1}^{a_{t-1}^i} &\sim q_t(x_t|x_{1:t-1}^{a_{t-1}^i}) \eqdef \pi_t(x_t |x_{1:t-1}^{a_{t-1}^i}) = \frac{\pi_t((x_{1:t-1}^{a_{t-1}^i},x_t))}{\pi_t(x_{1:t-1}^{a_{t-1}^i})}, \\
x_{1:t}^i &= \left(x_{1:t-1}^{a_{t-1}^i},x_t^i\right).\nonumber
\end{align}
This proposal is sometimes referred to as the (locally) \emph{optimal proposal} because it minimizes incremental variances in the importance weights $w_t^i$. Weighting is easy since all weights are equal, \ie the unnormalized weights are all set to $w_t^i = 1$. The fully adapted \smc sampler  in fact corresponds to a locally optimal choice of both resampling weights and proposal with an incremental variance in the importance weights $w_t^i$ that is zero.

Note that in most cases it is impossible to implement this algorithm exactly, since we can not calculate $\nu_{t-1}$ and/or simulate from $q_t$. Nested \smc solves this by \emph{requiring only approximate resampling weights and approximate samples from $q_t$}, in the sense that is formalized in Section~\ref{sec:nsmc}. However, we will start by detailing some specific cases when we can efficiently implement exact fully adapted \smc. These cases are of interest in themselves, however, here we will use them to build intuition for how the approximations in \nsmc are constructed.

%
%\subsubsection{Forward Filtering--Backward Simulation}
\subsection{Forward Filtering--Backward Simulation}\label{sec:ffbs}
The problems we need to solve are those of computing $\nu_{t-1}$ and simulating from $q_t$ efficiently, \ie in such a way that the computational complexity is controlled. %If we for a minute disregard computational complexity 
There are at least two important special cases when we can use fully adapted \smc. The first is if the state space $\setX$ is discrete and finite, \ie $x_t \in \{1,\ldots,S\}^{\otimes n_x}, \forall t$. Even though exact algorithms are known in this case \citep{cappe2005} the computational complexity typically scales quadratically with the cardinality of $x_t$, thus \smc methods can still be of interest \citep{fearnhead2003line,NaessethLS:2014IT,naessethLS2015nested}. The second case is if $\frac{\gamma_t(x_{1:t})}{\gamma_{t-1}(x_{1:t-1})}$ is an unnormalized Gaussian distribution, \eg in the \stssm this would correspond to 
\begin{align*}
x_t &= a(x_{t-1})+v_t, \quad v_t \sim \text{Gaussian \mrf},\\
y_t | x_t &\sim \N(y_t; C x_t, R),
\end{align*}
for some matrix $C$, covariance matrix $R$, and an \mrf in the components of $v_t$ where all pair-wise potentials are Gaussian.

Now, even though in principle the fully adapted \smc is available these special cases, the computational complexity can be prohibitive. In fact in general it is of the order of $\Ordo(S^{n_x})$ and $\Ordo(n_x^3)$ for the finite state space and Gaussian case, respectively. However, when there are local dependencies it is possible to make use of an underlying chain (or tree) structure, as proposed by \citet{NaessethLS:2014IT} for the finite state space case, to make efficient implementations with only $\Ordo(S^2 n_x)$ and $\Ordo(n_x)$ complexity, respectively. This approach makes use of forward filtering--backward simulation (sampling), from \citet{carter1994gibbs,fruhwirth1994data}, on the \emph{components} of $x_t$ to compute $\nu_{t-1}$ and sample $q_t$ exactly. Let us as an example consider the above \stssm with $C=I$ and $R=I$ and the Gaussian \mrf given by
\begin{align*}
p_v(v_t) = \frac{1}{Z_v} \exp{\left\{-\frac{\tau}{2} \sum_{d=1}^{n_x} v_{t,d}^2 -\frac{\lambda}{2} \sum_{d=2}^{n_x}(v_{t,d}-v_{t,d-1})^2\right\}},
\end{align*}
for some positive constants $\tau$ and $\lambda$. Then straightforward computations gives the proposal and resampling weights
\begin{align*}
q_t(x_t|x_{1:t-1}) &= \frac{1}{\nu_{t-1}}\frac{\gamma_t(x_{1:t})}{\gamma_{t-1}(x_{1:t-1})} = \frac{1}{\nu_{t-1}}f(x_t|x_{t-1})g(y_t|x_t),\\
\nu_{t-1} &= \int f(x_t|x_{t-1})g(y_t|x_t) \myd x_t. 
\end{align*}
However, an equivalent way to simulate from this distribution and calculate $\nu_{t-1}$ is given below
\begin{align*}
x_t &= a(x_{t-1})+v_t', \quad v_t' \sim \frac{1}{\nu_{t-1}} g(y_t|a(x_{t-1})+v_t) p_v(v_t),\\
\nu_{t-1} &= \int g(y_t|a(x_{t-1})+v_t) p_v(v_t) \myd v_t.
\end{align*}
Due to the structure in $p_v(\cdot)$ and $g(y_t|x_t)$ we can see that the distribution to sample from corresponds to a Gaussian \mrf with a chain-structure in the $v_{t,d}$'s (\cf Figure~\ref{fig:stssm})
\begin{align}
&p(v_{t,1:n_x}|y_t,x_{t-1}) = \frac{g(y_t|a(x_{t-1})+v_t) p_v(v_t)}{\prod_{d=1}^{n_x} p(y_{t,d}|y_{t,1:d-1},x_{t-1})} \nonumber\\
&\quad\propto\frac{1}{Z_v}\exp{\left\{ -\frac{1}{2}\sum_{d=1}^{n_x} \left[(y_{t,i}-a_d(x_{t-1})-v_{t,d})^2 +\tau v_{t,d}^2\right] -\frac{\lambda}{2} \sum_{d=2}^{n_x}(v_{t,d}-v_{t,d-1})^2\right\}}. \label{eq:ex:optprop}
\end{align}
Because of this structure we can efficiently compute the normalization constant of \eqref{eq:ex:optprop} by means of ``forward'' filtering, keeping track of the incremental contributions to $\nu_{t-1}$, $p(y_{t,d}|y_{t,1:d-1},x_{t-1}), ~d=1,\ldots,n_x$. Sampling the distribution is then done by an explicit ``backward'' pass, simulating $v_{t,d}' \sim p(v_{t,d} | v_{t,d+1:n_x}', y_{t,d:n_x})$, $d=n_x,n_x-1,\ldots,1$. We provide an illustration of the process in Figure~\ref{fig:ffbs}. See also \citet{NaessethLS:2014IT} for an example of how this is done in practice for a discrete state space.

The main idea behind nested \smc is to emulate this behavior for arbitrary sequential probabilistic models. Because computing $\nu_{t-1}$ and simulating from $q_t$ exactly is intractable in general we propose to run an \smc-based forward filtering--backward simulation \citep{GodsillDW:2004,LindstenS:2013} method on the components of $x_t$ (or $v_t$) to approximate $\nu_{t-1}$ and draws from $q_t$.
\begin{figure}[h]
\hspace{-5mm}
\centering
\begin{subfigure}[b]{0.24\textwidth}
\resizebox{1.0\textwidth}{!}{
\tikzstyle{edge} = [-]
\tikzstyle{edge2} = [->,very thick,>=latex]
\tikzstyle{edge3} = [->]
\tikzstyle{arrw} = [very thick,shorten <=2pt,shorten >=2pt]
\tikzstyle{var} = [draw,circle,inner sep=0,minimum width=0.7cm]
\tikzstyle{cvar} = [draw,circle,inner sep=0,minimum width=0.7cm, fill=black!20]
\tikzstyle{initvar} = [draw,circle,inner sep=0,minimum width=0.9cm]
\tikzstyle{obs} = [draw,circle,inner sep=0,minimum width=0.5cm, fill=black!20]
  \begin{tikzpicture}[>=stealth,node distance=0.6cm]
    \begin{scope}
      % Draw x-nodes and observations
      \foreach \x in {1} {
        \foreach \y in {0,1,2,3} {
          \pgfmathtruncatemacro\xend{\x+1}
          \pgfmathtruncatemacro\yend{4-\y}
          \node at (1.9*\x,\y) (x\x\y) [cvar] {};
        }
      }
      \foreach \x in {2} {
        \foreach \y in {3} {
          \pgfmathtruncatemacro\xend{\x+1}
          \pgfmathtruncatemacro\yend{4-\y}
          \node at (1.9*\x,\y) (x\x\y) [var] {$v_{t,\yend}$};
          \node at (1.9*\x+0.9,\y-0.15) (y\x\y) [obs] {};
        }
      }
      
      % Draw bounding boxes
      \foreach \x in {1} {
        \pgfmathtruncatemacro\xend{\x+1}
        \node[draw,very thick,rectangle,rounded corners=3mm,minimum width=0.6cm,fit=(x\x0) (x\x3), label=above:{$x_{t-1}$}] (x\x){};
      }
      \foreach \x in {2} {
        \pgfmathtruncatemacro\xend{\x+1}
        \node[draw,very thick,rectangle,rounded corners=3mm,minimum width=0.6cm,fit=(x\x0) (x\x3), label=above:{$v_{t,1}$}] (x\x){};
      }
      %\node at (-1.9,1.5) (xinit) [initvar] {$x_{1}$};
      %\draw[edge2] (xinit) -- (x0);
      
      % Draw horizontal edges
      \foreach \x in {1} {
        \pgfmathtruncatemacro\xend{\x+1}
          \draw[edge2] (x\x) -> (x\xend);
      }
      % Draw diagonal edges
      \foreach \x in {2} {
        \pgfmathtruncatemacro\xend{\x+1}
        \foreach \y in {3} {
          \draw[edge3] (x\x\y) -- (y\x\y);
        }
      }
      % Draw vertical edges
      \foreach \x in {1} {
        \foreach \y in {0,1,2} {
          \pgfmathtruncatemacro\yend{\y+1}
          \draw[edge] (x\x\y) -- (x\x\yend) {};
        }
      }
      \node at (0.5,1.5) (dots) [circle] {$\cdots$};
    \end{scope}
     %\draw [blue] (current bounding box.south west) rectangle (current bounding box.north east);
  \end{tikzpicture}
}
\end{subfigure}
\begin{subfigure}[b]{0.24\textwidth}
\resizebox{1.0\textwidth}{!}{
\tikzstyle{edge} = [-]
\tikzstyle{edge2} = [->,very thick,>=latex]
\tikzstyle{edge3} = [->]
\tikzstyle{arrw} = [very thick,shorten <=2pt,shorten >=2pt]
\tikzstyle{var} = [draw,circle,inner sep=0,minimum width=0.7cm]
\tikzstyle{cvar} = [draw,circle,inner sep=0,minimum width=0.7cm, fill=black!20]
\tikzstyle{initvar} = [draw,circle,inner sep=0,minimum width=0.9cm]
\tikzstyle{obs} = [draw,circle,inner sep=0,minimum width=0.5cm, fill=black!20]
\begin{tikzpicture}[>=stealth,node distance=0.6cm]
    \begin{scope}
      % Draw x-nodes and observations
      \foreach \x in {1} {
        \foreach \y in {0,1,2,3} {
          \pgfmathtruncatemacro\xend{\x+1}
          \pgfmathtruncatemacro\yend{4-\y}
          \node at (1.9*\x,\y) (x\x\y) [cvar] {};
        }
      }
      \foreach \x in {2} {
        \foreach \y in {2,3} {
          \pgfmathtruncatemacro\xend{\x+1}
          \pgfmathtruncatemacro\yend{4-\y}
          \node at (1.9*\x,\y) (x\x\y) [var] {$v_{t,\yend}$};
          \node at (1.9*\x+0.9,\y-0.15) (y\x\y) [obs] {};
        }
      }
      
      % Draw bounding boxes
      \foreach \x in {1} {
        \pgfmathtruncatemacro\xend{\x+1}
        \node[draw,very thick,rectangle,rounded corners=3mm,minimum width=0.6cm,fit=(x\x0) (x\x3), label=above:{$x_{t-1}$}] (x\x){};
      }
      \foreach \x in {2} {
        \pgfmathtruncatemacro\xend{\x+1}
        \node[draw,very thick,rectangle,rounded corners=3mm,minimum width=0.6cm,fit=(x\x0) (x\x3), label=above:{$v_{t,1:2}$}] (x\x){};
      }
      %\node at (-1.9,1.5) (xinit) [initvar] {$x_{1}$};
      %\draw[edge2] (xinit) -- (x0);
      
      % Draw horizontal edges
      \foreach \x in {1} {
        \pgfmathtruncatemacro\xend{\x+1}
          \draw[edge2] (x\x) -> (x\xend);
      }
      % Draw diagonal edges
      \foreach \x in {2} {
        \pgfmathtruncatemacro\xend{\x+1}
        \foreach \y in {2,3} {
          \draw[edge3] (x\x\y) -- (y\x\y);
        }
      }
      % Draw vertical edges
      \foreach \x in {1} {
        \foreach \y in {0,1,2} {
          \pgfmathtruncatemacro\yend{\y+1}
          \draw[edge] (x\x\y) -- (x\x\yend) {};
        }
      }
      % Draw vertical edges
      \foreach \x in {2} {
        \foreach \y in {2} {
          \pgfmathtruncatemacro\yend{\y+1}
          \draw[edge] (x\x\y) -- (x\x\yend) {};
        }
      }
      \node at (0.5,1.5) (dots) [circle] {$\cdots$};
    \end{scope}
     %\draw [blue] (current bounding box.south west) rectangle (current bounding box.north east);
  \end{tikzpicture}
}
\end{subfigure}
\begin{subfigure}[b]{0.24\textwidth}
\resizebox{1.0\textwidth}{!}{
\tikzstyle{edge} = [-]
\tikzstyle{edge2} = [->,very thick,>=latex]
\tikzstyle{edge3} = [->]
\tikzstyle{arrw} = [very thick,shorten <=2pt,shorten >=2pt]
\tikzstyle{var} = [draw,circle,inner sep=0,minimum width=0.7cm]
\tikzstyle{cvar} = [draw,circle,inner sep=0,minimum width=0.7cm, fill=black!20]
\tikzstyle{initvar} = [draw,circle,inner sep=0,minimum width=0.9cm]
\tikzstyle{obs} = [draw,circle,inner sep=0,minimum width=0.5cm, fill=black!20]
\begin{tikzpicture}[>=stealth,node distance=0.6cm]
    \begin{scope}
      % Draw x-nodes and observations
      \foreach \x in {1} {
        \foreach \y in {0,1,2,3} {
          \pgfmathtruncatemacro\xend{\x+1}
          \pgfmathtruncatemacro\yend{4-\y}
          \node at (1.9*\x,\y) (x\x\y) [cvar] {};
        }
      }
      \foreach \x in {2} {
        \foreach \y in {1,2,3} {
          \pgfmathtruncatemacro\xend{\x+1}
          \pgfmathtruncatemacro\yend{4-\y}
          \node at (1.9*\x,\y) (x\x\y) [var] {$v_{t,\yend}$};
          \node at (1.9*\x+0.9,\y-0.15) (y\x\y) [obs] {};
        }
      }
      
      % Draw bounding boxes
      \foreach \x in {1} {
        \pgfmathtruncatemacro\xend{\x+1}
        \node[draw,very thick,rectangle,rounded corners=3mm,minimum width=0.6cm,fit=(x\x0) (x\x3), label=above:{$x_{t-1}$}] (x\x){};
      }
      \foreach \x in {2} {
        \pgfmathtruncatemacro\xend{\x+1}
        \node[draw,very thick,rectangle,rounded corners=3mm,minimum width=0.6cm,fit=(x\x0) (x\x3), label=above:{$v_{t,1:3}$}] (x\x){};
      }
      %\node at (-1.9,1.5) (xinit) [initvar] {$x_{1}$};
      %\draw[edge2] (xinit) -- (x0);
      
      % Draw horizontal edges
      \foreach \x in {1} {
        \pgfmathtruncatemacro\xend{\x+1}
          \draw[edge2] (x\x) -> (x\xend);
      }
      % Draw diagonal edges
      \foreach \x in {2} {
        \pgfmathtruncatemacro\xend{\x+1}
        \foreach \y in {1,2,3} {
          \draw[edge3] (x\x\y) -- (y\x\y);
        }
      }
      % Draw vertical edges
      \foreach \x in {1} {
        \foreach \y in {0,1,2} {
          \pgfmathtruncatemacro\yend{\y+1}
          \draw[edge] (x\x\y) -- (x\x\yend) {};
        }
      }
      % Draw vertical edges
      \foreach \x in {2} {
        \foreach \y in {1,2} {
          \pgfmathtruncatemacro\yend{\y+1}
          \draw[edge] (x\x\y) -- (x\x\yend) {};
        }
      }
      \node at (0.5,1.5) (dots) [circle] {$\cdots$};
    \end{scope}
     %\draw [blue] (current bounding box.south west) rectangle (current bounding box.north east);
  \end{tikzpicture}
}
\end{subfigure}
\begin{subfigure}[b]{0.24\textwidth}
\resizebox{1.0\textwidth}{!}{
\tikzstyle{edge} = [-]
\tikzstyle{edge2} = [->,very thick,>=latex]
\tikzstyle{edge3} = [->]
\tikzstyle{arrw} = [very thick,shorten <=2pt,shorten >=2pt]
\tikzstyle{var} = [draw,circle,inner sep=0,minimum width=0.7cm]
\tikzstyle{cvar} = [draw,circle,inner sep=0,minimum width=0.7cm, fill=black!20]
\tikzstyle{initvar} = [draw,circle,inner sep=0,minimum width=0.9cm]
\tikzstyle{obs} = [draw,circle,inner sep=0,minimum width=0.5cm, fill=black!20]
\begin{tikzpicture}[>=stealth,node distance=0.6cm]
    \begin{scope}
      % Draw x-nodes and observations
      \foreach \x in {1} {
        \foreach \y in {0,1,2,3} {
          \pgfmathtruncatemacro\xend{\x+1}
          \pgfmathtruncatemacro\yend{4-\y}
          \node at (1.9*\x,\y) (x\x\y) [cvar] {};
        }
      }
      \foreach \x in {2} {
        \foreach \y in {0,1,2,3} {
          \pgfmathtruncatemacro\xend{\x+1}
          \pgfmathtruncatemacro\yend{4-\y}
          \node at (1.9*\x,\y) (x\x\y) [var] {$v_{t,\yend}$};
          \node at (1.9*\x+0.9,\y-0.15) (y\x\y) [obs] {};
        }
      }
      
      % Draw bounding boxes
      \foreach \x in {1} {
        \pgfmathtruncatemacro\xend{\x+1}
        \node[draw,very thick,rectangle,rounded corners=3mm,minimum width=0.6cm,fit=(x\x0) (x\x3), label=above:{$x_{t-1}$}] (x\x){};
      }
      \foreach \x in {2} {
        \pgfmathtruncatemacro\xend{\x+1}
        \node[draw,very thick,rectangle,rounded corners=3mm,minimum width=0.6cm,fit=(x\x0) (x\x3), label=above:{$v_{t,1:n_x}$}] (x\x){};
      }
      %\node at (-1.9,1.5) (xinit) [initvar] {$x_{1}$};
      %\draw[edge2] (xinit) -- (x0);
      
      % Draw horizontal edges
      \foreach \x in {1} {
        \pgfmathtruncatemacro\xend{\x+1}
          \draw[edge2] (x\x) -> (x\xend);
      }
      % Draw diagonal edges
      \foreach \x in {2} {
        \pgfmathtruncatemacro\xend{\x+1}
        \foreach \y in {0,1,2,3} {
          \draw[edge3] (x\x\y) -- (y\x\y);
        }
      }
      % Draw vertical edges
      \foreach \x in {1,2} {
        \foreach \y in {0,1,2} {
          \pgfmathtruncatemacro\yend{\y+1}
          \draw[edge] (x\x\y) -- (x\x\yend) {};
        }
      }
      \node at (0.5,1.5) (dots) [circle] {$\cdots$};
    \end{scope}
     %\draw [blue] (current bounding box.south west) rectangle (current bounding box.north east);
  \end{tikzpicture}
}
\end{subfigure}
\caption{Illustration of forward filtering--backward sampling on $v_t$ as explained by \eqref{eq:ex:optprop}. Note that after the last step we simply set $x_t = a(x_{t-1}) + v_t'$.}\label{fig:ffbs}
\end{figure}
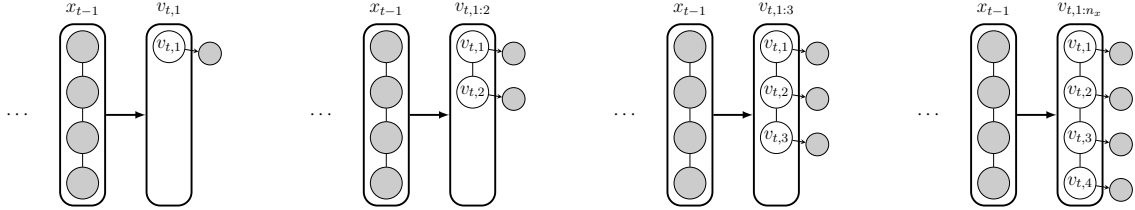

% ======================================================================
%                           Nested SMC
% ======================================================================
\subsection{Nested Sequential Monte Carlo}\label{sec:nsmc}
One way to think of the nested \smc family of methods is as an \emph{exact approximation} \citep{andrieuDH2010particle} of an \smc algorithm with resampling weights $\nu_{t-1}$ and proposal $q_t$ given as in the fully adapted \smc. Instead of exactly evaluating each $\nu_{t-1}^i$, we run a nested (or internal) \smc sampler with $M$ particles, for each $i$, on the components $x_{t,1:d}$ (or $v_{t,1:d}$) with the final target (for $d=n_x$) equal to $q_t(x_t|x_{1:t-1})$ to mimic the exact forward filtering procedure. The normalization constant estimates from these internal filters gives us \emph{unbiased} approximations of $\nu_{t-1}^i$ that we use to perform the resampling step. The resampling step not only selects the ancestors $x_{1:t-1}^{a_t^i}$, but we also resample the complete internal state, denoted by $u_{t-1}$, of the nested \smc samplers which will be used for the propagation step. Lastly we simulate $x_t^i$ by running a backward simulation procedure \citep{GodsillDW:2004,LindstenS:2013} using the resampled internal \smc sampler's $u_{t-1}^{a_t^i}$ to mimic the exact backward sampling described above. More formally, one step from iteration $t-1$ to $t$ of the \nsmc method proceeds as follows. 

% Resampling
Given an unweighted particle set $\{x_{1:t-1}^i\}_{i=1}^N$ ($w_{t-1}\equiv 1$), approximating $\pi_{t-1}$, we generate the internal states by simulating $u_{t-1}^i \sim \eta_{t-1}^M(u_{t-1}|x_{1:t-1}^i)$ (\cf forward filtering). Here $\eta_{t-1}^M$ denotes the joint distribution of all random variables generated by the internal \smc sampler. Then we extract an estimate of the resampling weights $\nu_{t-1}^i = \tau_t(u_{t-1}^i)$, where $\tau$ is a function such that
\begin{align}
\int \tau_t(u_{t-1}) \eta_{t-1}^M(u_{t-1}|x_{1:t-1}) \myd u_{t-1} &= \frac{\int \gamma_t((x_{1:t-1},x_t)) \myd x_t}{\gamma_{t-1}(x_{1:t-1})} = \nu_{t-1}, & \tau(u_{t-1}) \geq 0 ~\textrm{a.s.}\label{eq:nsmc:unbiasednu}
\end{align}
This is the normalization constant estimate at the final step of the internal \smc samplers, where the target is equal to $\frac{\gamma_t((x_{1:t-1}^i,x_t))}{\gamma_{t-1}(x_{1:t-1}^i)}$, and then \eqref{eq:nsmc:unbiasednu} is satisfied by known properties of \smc \citep[Proposition~7.4.1]{DelMoral:2004}. We now proceed to resample the internal \smc samplers, \ie generating ancestor variables $a_t^i$ such that
\begin{align}
\Prb(a_t^i = j) = \left\{\frac{\tau_t(u_{t-1}^j)}{\sum_\ell \tau_t(u_{t-1}^\ell)}\right\}_{j=1}^N,\label{eq:nsmc:resampling}
\end{align}
which concludes the resampling step.

% Propagation
Next, for propagation we generate samples $x_t^i \sim \kappa_t^M(x_t|u_{t-1}^{a_t^i})$ (\cf backward sampling), where $\kappa_t^M$ is a distribution satisfying the following condition
\begin{align}
\int \tau_t(u_{t-1}) \kappa_t^M(x_t|u_{t-1}) \eta_{t-1}^M(u_{t-1}|x_{1:t-1}) \myd u_{t-1} = \frac{\gamma_t(x_{1:t})}{\gamma_{t-1}(x_{1:t-1})}\label{eq:nsmc:prop}.
\end{align}
The distribution $\kappa_t^M$ can be realized by running backward simulation, however, a simple straightforward alternative that also satisfies \eqref{eq:nsmc:prop} can be to sample from the corresponding empirical distribution induced by the internal \smc sampler. We discuss the choice of $\eta_{t-1}^M, \kappa_t^M$ and $\tau_t$ further in the next section.

% Final
Finally, we set $x_{1:t}^i = (x_{1:t-1}^{a_t^i},x_t^i)$ and have thus obtained a new set of unweighted particles approximating $\pi_t$, \ie
\begin{align}
\pi_t^N(\myd x_{1:t}) \eqdef \frac{1}{N}\sum_{i=1}^N \delta_{x_{1:t}^i}(\myd x_{1:t}).
\end{align}

\begin{algorithm}
\caption{Nested Sequential Monte Carlo\hfill (all for $i=1,\ldots,N$)}\label{alg:nsmc}
\begin{algorithmic}[1]
\REQUIRE $\eta_{t-1}^M, \kappa_t^M, \tau_t$ that generate samples properly weighted for $\frac{\gamma_t(x_{1:t})}{\gamma_{t-1}(x_{1:t-1})}$
%\STATE $x_0^i \sim \mu(x_0)$
\FOR{$t=1$ to $T$}
\STATE Simulate $u_{t-1}^i \sim \eta_{t-1}^M(u_{t-1}|x_{1:t-1}^i)$
\STATE Draw $a_t^i$ with probability $\Prb(a_t^i = j) = \frac{\tau_t(u_{t-1}^j)}{\sum_\ell \tau_t(u_{t-1}^\ell)}$
\STATE Simulate $x_t^i \sim \kappa_t^M(x_t | u_{t-1}^{a_t^i})$
\STATE Set $x_{1:t}^i = (x_{1:t-1}^{a_t^i},x_t^i)$
\ENDFOR
\end{algorithmic}
\end{algorithm}

The two conditions on $\eta_{t-1}^M, \tau_t, \kappa_t^M$, \ie \eqref{eq:nsmc:unbiasednu} and \eqref{eq:nsmc:prop}, can in fact be replaced by the single condition that $(x_t^i,\tau_t(u_{t-1}^i))$ are \emph{properly weighted} for $\frac{\gamma_t(x_{1:t})}{\gamma_{t-1}(x_{1:t-1})}$.
\begin{definition}
We say that the (random) pair $(x_t,\tau_t(u_{t-1}))$ are properly weighted for the (unnormalized) distribution $\frac{\gamma_t(x_{1:t})}{\gamma_{t-1}(x_{1:t-1})}$ if $\tau_t(u_{t-1}) \geq 0 ~\textrm{a.s.}$ and for all measurable functions~$h$
\begin{align}
\E[h(x_t) \tau_t(u_{t-1})] = C\int \frac{\gamma_t(x_{1:t})}{\gamma_{t-1}(x_{1:t-1})} \myd x_t \int h(x_t) \pi_t(x_t|x_{1:t-1}) \myd x_t,
\end{align}
for some positive constant $C>0$ that is indepedent of the $x$'s and $u$'s.
\end{definition}
%\begin{remark}
%Note that this definition is more or less equivalent to the one used by \citet{naessethLS2015nested}, but we have made the dependence on $x_{1:t-1}$ explicit. The concept of proper weighting is also described by \citet{liu2001monte} in the context of importance sampling.
%\end{remark}
We provide a summary of the proposed method in Algorithm~\ref{alg:nsmc}. Although we here focus on approximating the fully adapted \smc sampler, the extension to arbitrary resampling weights and proposal is straightforward, see the supplementary material. Next we will illustrate how we can make use of \emph{nested} or internal \smc samplers to construct $\eta_{t-1}^M, \tau_t, \kappa_t^M$ that generate properly weighted samples.

% ======================================================================
%                     
% ======================================================================
\subsection{Constructing $\eta_{t-1}^M$, $\tau_t$ and $\kappa_t^M$}\label{sec:nested}
To construct $\eta_{t-1}^M$ we propose to run an \smc sampler targeting the components of $x_t$ (or $v_t$) one-by-one. This is done by choosing some sequence of (unnormalized) targets $p_d(x_{1:d})$ and proposals $r_d(x_d|x_{1:d-1})$ such that $p_{n_x}(\cdot) \propto \frac{\gamma_t(x_{1:t})}{\gamma_{t-1}(x_{1:t-1})}$. For notational convenience we supress the dependence on time $t$ in this section. We provide a summary in Algorithm~\ref{alg:smc}, in this case $u_{t-1} \eqdef \{x_d^{1:M}\}_{d=1}^{n_x} \bigcup \{a_d^{1:N}\}_{d=2}^{n_x}$.
\begin{algorithm}
\caption{Sequential Monte Carlo\hfill (all for $i=1,\ldots,M$)}\label{alg:smc}
\begin{algorithmic}[1]
\REQUIRE Unnormalized target distributions $p_d(x_{1:d})$, proposals $r_d(x_d|x_{1:d-1})$, and $M$
\STATE $x_1^i \sim r_1(x_1)$
\STATE Set $w_1^i = \frac{p_1(x_1^i)}{r_1(x_1^i)}$
\FOR{$d=2$ to $n_x$}
\STATE Draw $a_d^i$ with probability $\Prb(a_d^i = j) = \frac{w_{d-1}^j}{\sum_\ell w_{d-1}^\ell}$
\STATE Simulate $x_d^i \sim r_d(x_d | x_{d-1}^{a_d^i})$
\STATE Set $x_{1:d}^i = (x_{1:d-1}^{a_d^i},x_d^i)$
\STATE Set $w_d^i = \frac{p_d(x_{1:d}^i)}{p_{d-1}(x_{1:d-1}^{a_d^i}) r_d(x_d^i|x_{1:d-1}^{a_d^i})}$
\ENDFOR
\end{algorithmic}
\end{algorithm}

A first simple alternative to construct $\kappa_t^M$ can be to simply simulate directly from the empirical measure defined by the approximation in Algorithm~\ref{alg:smc}. Although this will be properly weighted it can introduce significant correlation between the samples. Instead we propose to make use of backward simulation \citep{GodsillDW:2004,LindstenS:2013} to construct a more efficient $\kappa_t^M$, see Algorithm~\ref{alg:bs}.
\begin{algorithm}
\caption{Backward Simulation}\label{alg:bs}
\begin{algorithmic}[1]
\REQUIRE $\{(x_{1:d}^i,w_d^i)\}_{i=1}^M, ~d=1,\ldots,n_x$ approximating $p_d(x_{1:d})$
\STATE Draw $b$ with probability $\Prb(b_{n_x} = j) = \frac{w_{n_x}^j}{\sum_\ell w_{n_x}^\ell}$
\STATE Set $x_{n_x} = x_{n_x}^{b}$
\FOR{$d=n_x-1$ to $1$}
\STATE Draw $b$ with probability $\Prb(b = j) \propto w_{d}^j \frac{p_{n_x}\left((x_{1:d}^j,x_{d+1:n_x})\right)}{p_d(x_{1:d}^j)}$
\STATE Set $x_{d:n_x} = (x_d^{b},x_{d+1:n_x})$
\ENDFOR
\end{algorithmic}
\end{algorithm}

Now, putting all this together we define the complete procedure in Definition~\ref{def:nested}.
\begin{definition}[\smc and BS]
\label{def:nested}
Let $\eta_{t-1}^M$, $\tau_t$, and $\kappa_t^M$ be defined as follows for some sequence $p_d(\cdot)$ such that $p_{n_x}(\cdot) \propto \frac{\gamma_t(x_{1:t})}{\gamma_{t-1}(x_{1:t-1})}$:
\begin{enumerate}
\item Simulate $u_{t-1}\sim\eta_{t-1}^M(u_{t-1}|x_{1:t-1})$ by running Algorithm~\ref{alg:smc}%, $u_{t-1} \eqdef \{x_d^{1:M}\}_{d=1}^{n_x} \bigcup \{a_d^{1:N}\}_{d=2}^{n_x}$

\item Set $\tau_t(u_{t-1}) = \prod_{d=1}^{n_x} \frac{1}{M} \sum_{i=1}^M w_d^i$

\item Simulate $x_t \sim\kappa_{t}^M(x_t|u_{t-1})$ by running Algorithm~\ref{alg:bs}
\end{enumerate}
\end{definition}
\begin{proposition}[Proper weighting]
The procedure in Definition~\ref{def:nested} generates $(x_t,\tau_t(u_{t-1}))$ that are properly weighted for $\frac{\gamma_t(x_{1:t})}{\gamma_{t-1}(x_{1:t-1})}$.
\end{proposition}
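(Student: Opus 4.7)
The goal is to verify the definition of proper weighting, i.e., to show
\[
\E\bigl[h(x_t)\,\tau_t(u_{t-1})\bigr] \;=\; C\,\overline{Z}_t\!\int h(x_t)\,\pi_t(x_t\mid x_{1:t-1})\,\myd x_t,
\]
where $\overline{Z}_t \eqdef \int \gamma_t(x_{1:t})/\gamma_{t-1}(x_{1:t-1})\,\myd x_t$ and the expectation is over all internal SMC randomness and the backward-simulation draws (conditional on the fixed path $x_{1:t-1}$). My plan is to reduce this to a known identity linking SMC normalization-constant estimates to backward-simulated trajectories.

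Write $p_{n_x}(x_{1:n_x}) = C \cdot \gamma_t(x_{1:t-1},x_t)/\gamma_{t-1}(x_{1:t-1})$ with normalizing constant $Z_{n_x} = C\,\overline{Z}_t$, so that $p_{n_x}(x_{1:n_x})/Z_{n_x} = \pi_t(x_t \mid x_{1:t-1})$ after identifying $x_t \equiv x_{1:n_x}$. In this notation, the claim to prove becomes
\[
\E\bigl[h(x_t)\,\tau_t(u_{t-1})\bigr] \;=\; \int h(x_{1:n_x})\,p_{n_x}(x_{1:n_x})\,\myd x_{1:n_x},
\]
with $\tau_t(u_{t-1}) = \prod_{d=1}^{n_x}\frac{1}{M}\sum_{i=1}^M w_d^i$ and $x_t$ the output of Algorithm~\ref{alg:bs} applied to the particle system produced by Algorithm~\ref{alg:smc}.

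The key step is the standard SMC--backward-simulation identity, proved for instance in the analysis of the particle Gibbs with backward sampling algorithm (see \citealp{LindstenS:2013}). I would establish it by induction over $d$, writing out the joint law of $(u_{t-1}, b_{1:n_x})$ explicitly as the product of the proposal densities $r_d$, the categorical ancestor distributions, and the backward-sampling probabilities. The telescoping structure is transparent: the ancestor probability at step $d$ has denominator $\sum_\ell w_{d-1}^\ell$ that cancels exactly against the factor $\frac{1}{M}\sum_\ell w_{d-1}^\ell$ appearing in $\tau_t$, while the backward-sampling probability at step $d$, proportional to $w_d^j\,p_{n_x}((x_{1:d}^j,x_{d+1:n_x}))/p_d(x_{1:d}^j)$, rewrites weight ratios into ratios of $p_d$ and $p_{n_x}$. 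After cancellation all the intermediate $p_d$ terms telescope and only $p_{n_x}$ evaluated at the selected trajectory survives; integrating out the $M-1$ unchosen particles at each step and using exchangeability collapses the expression to $\int h(x_{1:n_x}) p_{n_x}(x_{1:n_x})\,\myd x_{1:n_x}$.

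The main obstacle is the bookkeeping: the backward pass re-stitches genealogies, so one must be careful to disentangle the dependence of the backward indices $b_{1:n_x}$ on the entire forward particle system. I would handle this by introducing the extended target of \citet{LindstenS:2013} on $(u_{t-1}, b_{1:n_x})$, verifying that multiplying the forward SMC density by $\tau_t(u_{t-1})$ times the backward kernel yields precisely that extended target up to the factor $p_{n_x}$ on the retained trajectory, and then marginalizing over the remaining particles and indices. Non-negativity $\tau_t(u_{t-1}) \geq 0$ a.s.\ is immediate since each $w_d^i \geq 0$. Combining the identity with the translation $p_{n_x} = C\,\gamma_t/\gamma_{t-1}$ and $Z_{n_x} = C\,\overline{Z}_t$ yields the proper weighting condition with constant $C$, completing the proof.
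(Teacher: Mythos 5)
Your proposal is correct in outline, but note that the paper does not actually carry out this argument: its entire proof is a one-line deferral to Theorem~2 of \citet{naessethLS2015nested}. What you have written is essentially a reconstruction of the proof of that cited theorem, namely the extended-target argument underlying particle Gibbs with backward simulation \citep{LindstenS:2013}: one shows that the forward SMC law $\eta_{t-1}^M$, multiplied by $\tau_t(u_{t-1})/Z_{n_x}$ and by the backward kernel, is exactly the extended target whose marginal over the retained trajectory is $\bar p_{n_x}=p_{n_x}/Z_{n_x}$, whence $\E[h(x_t)\tau_t(u_{t-1})]=\int h\,p_{n_x}$ and the translation $p_{n_x}=C\,\gamma_t/\gamma_{t-1}$ gives proper weighting. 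You identify the right cancellations (the $\frac{1}{M}\sum_\ell w_{d-1}^\ell$ factors of $\tau_t$ against the ancestor-sampling denominators) and correctly flag the genealogy bookkeeping as the delicate point. One small caution: the backward-sampling probabilities carry their own normalizing sums which do \emph{not} cancel termwise against anything; the clean way to finish, consistent with your plan of invoking the extended target, is to verify that those normalized probabilities coincide with the conditional distributions $\tilde\pi(b_d\mid b_{d+1:n_x},u_{t-1})$ of the extended target, rather than trying to telescope them directly. With that understanding your sketch is a valid, self-contained route to the result, and arguably more informative than the paper's citation-only proof; what the paper's approach buys is brevity and reuse of an already-published verification.
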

\begin{proof}
The result follows from Theorem~2 in \citet{naessethLS2015nested}.
\end{proof}
\begin{remark}
Note that we can in fact replace Step~1 of Definition~\ref{def:nested} (\smc and BS) with running the \nsmc algorithm itself, \ie Algorithm~\ref{alg:nsmc}, and letting the $w_d \eqdef 1$ in Step~3. This will also yield properly weighted samples as discussed in \citet{naessethLS2015nested}. We will in the experiments show how this can be used to design efficient algorithms by nesting several layers of \smc samplers.
\end{remark}
Compare with the example in Section~\ref{sec:ffbs} and Figure~\ref{fig:ffbs} where we used forward filtering--backward sampling by considering the components of $v_{t,1:d}$ as our target. Instead of exact forward filtering we can use Algorithm~\ref{alg:smc}, and instead of exact backward sampling we can use Algorithm~\ref{alg:bs}, to generate properly weighted samples.

% ======================================================================
%                   Theoretical justification
% ======================================================================
\subsection{Theoretical Justification}\label{sec:theory}
In this section we will provide a central limit theorem that further motivates \nsmc, and show how the asymptotic variance depends on the internal approximation of the exact fully adapted \smc. Furthermore, we provide a result that shows how this asymptotic variance converges to that of the corresponding asymptotic variance of the exact fully adapted \smc method as $M\to\infty$.

% ====================================================================== 
%                               CLT
% ====================================================================== 
\begin{theorem}[Central Limit Theorem]
\label{thm:clt}
Under certain (standard) regularity conditions on the function $\varphi : \setX_t \mapsto \reals$, specified in the supplementary material, we have the following central limit theorem
\begin{align*}
\sqrt{N} \left(\frac{1}{N}\sum_{i=1}^N \varphi(x_{1:t}^i) - \pi_t(\varphi) \right) \convD \N\left(0,\Sigma_t^M(\varphi)\right),
\end{align*}
where $\{x_{1:t}^i\}_{i=1}^N$ are generated by Algorithm~\ref{alg:nsmc} and the asymptotic variance is given by
\begin{align*}
\Sigma_t^M(\varphi) &= \sum_{s=0}^t \sigma_{s,t}^M(\varphi),
\end{align*}
for $\sigma_{s,t}^M(\varphi)$'s defined by
\begin{align*}
\sigma_{t,t}^M(\varphi) &= \pi_t\left(\left(\varphi - \pi_t(\varphi)\right)^2\right),\\
\sigma_{s,t}^M(\varphi) &= \int \Psi_{s,t}^M(x_{1:s};\varphi) \pi_s(x_{1:s}) \myd x_{1:s}, \quad \text{for} ~ 0<s<t,\\
\sigma_{0,t}^M(\varphi) &= \int \frac{\tau_1(u_0)^2}{Z_1^2} \left( \int \left( \varphi(x_{1:t})-\pi_t(\varphi) \right)  \frac{\pi_t(x_{1:t})}{\pi_1(x_{1})} \kappa_{1}^M(x_{1} | u_{0}) \myd x_{1:t} \right)^2 \eta_{0}^M(u_{0}) \myd u_0.
\end{align*}
with 
\begin{align}
&\Psi_{s,t}^M(x_{1:s};\varphi) \eqdef \nonumber\\
&\E_{\eta_{s}^M(u_{s}|x_{1:s})}\left[\frac{Z_s^2}{Z_{s+1}^2}\tau_{s+1}(u_s)^2 \left( \int \left( \varphi(x_{1:t})-\pi_t(\varphi) \right) \frac{\pi_t(x_{1:t})}{\pi_{s+1}(x_{1:s+1})} \kappa_{s+1}^M(x_{s+1} | u_{s}) \myd x_{s+1:t} \right)^2\right]
\label{eq:condexp}
\end{align}
\end{theorem}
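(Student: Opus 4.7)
The argument proceeds by induction on $t$, using the by now standard decomposition of the SMC error into a ``within-step'' martingale increment and a ``propagated previous error'' term \citep{DelMoral:2004,douc2014nonlinear}. Assume the claimed CLT holds at time $t-1$ for every admissible test function. To carry out the inductive step at time $t$, introduce the nested filtrations $\mathcal{G}_{t-1}^N = \sigma(\{x_{1:t-1}^i\}_{i=1}^N)$ and $\mathcal{F}_{t-1}^N = \sigma(\mathcal{G}_{t-1}^N, \{u_{t-1}^i\}_{i=1}^N)$. Given $\mathcal{F}_{t-1}^N$, the new particles $x_{1:t}^i$ are conditionally i.i.d., each drawn by choosing an ancestor $a_t^i$ with probability proportional to $\tau_t(u_{t-1}^{\cdot})$ and then $x_t^i \sim \kappa_t^M(\cdot\mid u_{t-1}^{a_t^i})$. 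I would then split
\begin{align*}
\sqrt{N}\bigl(\pi_t^N(\varphi) - \pi_t(\varphi)\bigr) &= T_1^N + T_2^N + T_3^N,
\end{align*}
with $T_1^N = \sqrt{N}\bigl(\pi_t^N(\varphi) - \mathbb{E}[\pi_t^N(\varphi)\mid\mathcal{F}_{t-1}^N]\bigr)$, $T_2^N = \sqrt{N}\bigl(\mathbb{E}[\pi_t^N(\varphi)\mid\mathcal{F}_{t-1}^N] - \mathbb{E}[\pi_t^N(\varphi)\mid\mathcal{G}_{t-1}^N]\bigr)$ and $T_3^N = \sqrt{N}\bigl(\mathbb{E}[\pi_t^N(\varphi)\mid\mathcal{G}_{t-1}^N] - \pi_t(\varphi)\bigr)$.

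For $T_1^N$, a conditional Lindeberg--Feller CLT applied to the triangular array of conditionally i.i.d.\ increments given $\mathcal{F}_{t-1}^N$ yields an asymptotic variance expressed in terms of the self-normalised weights $\tau_t(u_{t-1}^j)/\sum_\ell \tau_t(u_{t-1}^\ell)$ and the kernel $\kappa_t^M$. Once the outer empirical measure $N^{-1}\sum_j \delta_{x_{1:t-1}^j}$ concentrates on $\pi_{t-1}$ and the internal states $u_{t-1}^j$ are integrated against $\eta_{t-1}^M(\cdot\mid x_{1:t-1}^j)$, the proper-weighting identity \eqref{eq:nsmc:prop} forces the conditional mean of $\varphi(x_{1:t}^i)$ to converge to $\pi_t(\varphi)$ and the conditional second moment to converge to $\pi_t(\varphi^2)$, producing exactly the intrinsic term $\sigma_{t,t}^M(\varphi) = \pi_t\bigl((\varphi-\pi_t(\varphi))^2\bigr)$.

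For $T_2^N$, a first-order (delta-method) expansion of the self-normalised importance-sampling estimate with weights $\tau_t(u_{t-1}^j)$ and integrand $\widetilde\varphi(x_{1:t-1}^j,u_{t-1}^j) = \int \varphi((x_{1:t-1}^j,x_t))\kappa_t^M(x_t\mid u_{t-1}^j)\,dx_t$ reveals that, given $\mathcal{G}_{t-1}^N$, the fluctuations of $\mathbb{E}[\pi_t^N(\varphi)\mid\mathcal{F}_{t-1}^N]$ are driven by a sum of independent terms whose variance, after the outer empirical measure concentrates on $\pi_{t-1}$, equals precisely $\sigma_{t-1,t}^M(\varphi)$, matching the $s=t-1$ instance of $\Psi_{s,t}^M$ in \eqref{eq:condexp}. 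Finally, $T_3^N$ is $\mathcal{G}_{t-1}^N$-measurable, so the inductive hypothesis applies with the modified test function $x_{1:t-1} \mapsto \int \varphi(x_{1:t})\,\pi_t(x_t\mid x_{1:t-1})\,dx_t$; unfolding the modified asymptotic variance $\Sigma_{t-1}^M$ through the Markov kernel $\pi_t(x_t\mid x_{1:t-1})$ then produces the remaining contributions $\sum_{s=0}^{t-2}\sigma_{s,t}^M(\varphi)$, and a standard Slutsky argument combines $T_1^N$, $T_2^N$ and $T_3^N$ into the stated limit law.

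The principal difficulty lies in the extra layer of randomness introduced by the internal SMC sampler: since $\tau_t(u_{t-1})$ is itself a random normalising-constant estimate, verifying the conditional Lindeberg condition and identifying the limit variance requires uniform second-moment control of $\tau_t$ and $\tau_t\cdot\widetilde\varphi$ under $\eta_{t-1}^M$, which in turn leans on Proposition~7.4.1 of \citet{DelMoral:2004} applied inside each outer particle. Carrying the two sources of noise (internal versus outer) through the decomposition so that they combine additively into $\sigma_{t,t}^M + \sigma_{t-1,t}^M + \Sigma_{t-1}^M$, with no surviving cross-covariances, is where the regularity conditions on $\varphi$ alluded to in the statement must do their work, both for tightness of $T_1^N$ and for consistency of the delta-method expansion in $T_2^N$.
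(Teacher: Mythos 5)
Your overall strategy is sound, but it is a genuinely different route from the one the paper takes. The paper does not re-derive the CLT by induction: it imports the general CLT for \nsmc from \citet{naessethLS2015nested} (stated on the extended space $\setX_t\times\mathsf{U}_t$, with asymptotic variance given by the recursion $\tilde V_t^M$, $V_t^M$, $\hat V_t^M$), specialises to the fully adapted case where $w_t'=\tfrac{Z_t}{Z_{t+1}}\tau_{t+1}(u_t)$ and $\omega_t=1$, and then does the real work in three lemmas: unrolling the variance recursion into $\var_{\eta_0^M}(h_0)+\sum_s\var_{\Pi_{s-1}',Q_s^M}(h_s)$, identifying the $h_s$ by backward induction, and computing their variances to obtain the $\sigma_{s,t}^M$. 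Your forward induction with the decomposition $T_1^N+T_2^N+T_3^N$ is essentially how the imported general theorem is itself proved, so you are re-deriving the machinery rather than specialising it; what your route buys is a self-contained argument that makes the probabilistic origin of each $\sigma_{s,t}^M$ visible, at the cost of having to re-verify the conditional Lindeberg conditions and the asymptotic additivity of the three terms, which the paper gets for free from the cited result.

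Two places in your sketch need repair before the variance bookkeeping closes. First, the allocation of terms between $T_2^N$ and $T_3^N$ is not as you state it. Writing $\psi(x_{1:t-1})\eqdef\tfrac{Z_{t-1}}{Z_t}\nu_{t-1}(x_{1:t-1})\bigl(\int\varphi(x_{1:t})\pi_t(x_t|x_{1:t-1})\myd x_t-\pi_t(\varphi)\bigr)$, the conditional (given $\mathcal{G}_{t-1}^N$) variance of the linearised $T_2^N$ converges to $\sigma_{t-1,t}^M(\varphi)-\pi_{t-1}(\psi^2)$, not to $\sigma_{t-1,t}^M(\varphi)$, because the summands $\tau_t(u_{t-1}^j)\bigl(\tilde\varphi^j-\pi_t(\varphi)\bigr)$ have nonzero conditional mean $\psi(x_{1:t-1}^j)$ given the outer particle. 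The missing piece $\pi_{t-1}(\psi^2)$ is exactly the leading term $\pi_{t-1}\bigl((\psi-\pi_{t-1}(\psi))^2\bigr)$ of $\Sigma_{t-1}^M(\psi)$ delivered by $T_3^N$, so the total is correct, but the split you assert is not. Second, and for the same reason, the induction hypothesis in $T_3^N$ must be applied not to $x_{1:t-1}\mapsto\int\varphi(x_{1:t})\pi_t(x_t|x_{1:t-1})\myd x_t$ but to the $\nu_{t-1}$-weighted, $\pi_t(\varphi)$-centred function $\psi$, obtained by a delta-method linearisation of the ratio $\pi_{t-1}^N(\nu_{t-1}\bar\varphi)/\pi_{t-1}^N(\nu_{t-1})$; it is precisely this weighting that converts the factors $\pi_{t-1}(x_{1:t-1})/\pi_{s+1}(x_{1:s+1})$ appearing in $\Sigma_{t-1}^M$ into the factors $\pi_t(x_{1:t})/\pi_{s+1}(x_{1:s+1})$ appearing in $\Psi_{s,t}^M$, and fixes the centring at $\pi_t(\varphi)$ rather than $\pi_{t-1}(\bar\varphi)$. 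With these two corrections the telescoping $\sigma_{s,t-1}^M(\psi)=\sigma_{s,t}^M(\varphi)$ goes through and your argument matches the stated variance.
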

\begin{proof}
See the supplementary material.
\end{proof}

This theorem shows that, even for a fixed and finite value of $M$, the \nsmc method obtains the standard $\sqrt{N}$ convergence rate. %Furthermore, note that the expression of the asymptotic variance is much more informative than the central limit theorem of \citet{naessethLS2015nested}. 
We can see how the asymptotic variance depends on how well we approximate $q_t$ and its normalization constant with $\kappa_t$ and $\tau_t$. Furthermore, this lets us study convergence of the variance in $M$ and also analytic expressions for a high-dimensional state space model.

% ====================================================================== 
%                       Convergence to FAPF
% ======================================================================
To show the convergence to fully adapted \smc as the approximation improves with increasing $M$ we make some further assumptions detailed below.

% UI assumption
\begin{assumption}[Uniform integrability]
The sequence (in $M$) of random variables $\{\Psi_{s,t}^M(x_{1:s};\varphi)\}$ is uniformly integrable.
\label{ass:ui}
\end{assumption}
\begin{remark}
Note that a sufficient condition for Assumption~\ref{ass:ui} to hold is that for some $\delta > 0$ and for all $s, M \geq 1$ the following holds
\begin{align*}
\int \Psi_{s,t}^M(x_{1:s};\varphi)^{1+\delta} \pi_s(x_{1:s}) \myd x_{1:s} < \infty.
\end{align*}
\end{remark}
%Furthermore, $\sigma_0^M(\varphi)<\infty$ for all $M \geq 1$.

% Strong mixing-type assumption
\begin{assumption}[Strong mixing]
\label{ass:strong}
For all $s$, $t$, there exists
\begin{align*}
\lambda_{s+1,t}^{-} \cdot \pi_t(x_{s+2:t}|x_{1:s+1}) \leq \frac{\pi_t(x_{1:t})}{\pi_{s+1}(x_{1:s+1})} \leq \lambda_{s+1,t}^{+} \cdot \pi_t(x_{s+2:t}|x_{1:s+1}),
\end{align*}
where $0 < \lambda_{s+1,t}^-, \lambda_{s+1,t}^+ < \infty$.
\end{assumption}
\begin{remark}
In the supplementary material we detail a weaker assumption for which Proposition~\ref{prop:convfapf} still holds.
\end{remark}

% Convergence proposition
\begin{proposition}
\label{prop:convfapf}
Under the assumptions of Theorem~\ref{thm:clt}, Assumption~\ref{ass:ui} and \ref{ass:strong} the following limit holds:
\begin{align*}
&\lim_{M\to \infty} \Sigma_t^M(\varphi) = \\
&=\pi_t\left(\left(\varphi - \pi_t(\varphi)\right)^2\right) + \sum_{s=1}^{t-1} \int \frac{\pi_t(x_{1:s})^2}{\pi_s(x_{1:s})} \left(\int \varphi(x_{1:t}) \pi_t(x_{s+1:t}|x_{1:s}) \myd x_{s+1:t}-\pi_t(\varphi)\right)^2 \myd x_{1:s}.
\end{align*}
\end{proposition}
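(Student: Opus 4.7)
The plan is to analyse $\Sigma_t^M(\varphi) = \sum_{s=0}^t \sigma_{s,t}^M(\varphi)$ term by term. The $s=t$ summand is already $M$-independent and matches the leading term in the claimed limit, so attention turns to the $M$-dependent pieces $\sigma_{s,t}^M(\varphi)$ for $0 \le s \le t-1$. For $1 \le s \le t-1$ the strategy is (i) identify the pointwise (in $x_{1:s}$) limit of $\Psi_{s,t}^M(x_{1:s};\varphi)$ as $M\to\infty$, and (ii) pass this limit under the outer integral against $\pi_s(x_{1:s})\myd x_{1:s}$ using Assumption~\ref{ass:ui}. For $s=0$ the same logic will yield $\sigma_{0,t}^M(\varphi)\to 0$.

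For the pointwise limit, introduce $\nu_s(x_{1:s}) \eqdef Z_{s+1}\pi_{s+1}(x_{1:s})/(Z_s \pi_s(x_{1:s}))$, $\varphi_s(x_{1:s}) \eqdef \int \varphi(x_{1:t})\pi_t(x_{s+1:t}|x_{1:s})\myd x_{s+1:t}$, and
\begin{align*}
g_s(x_{s+1}) \eqdef \int \left(\varphi(x_{1:t})-\pi_t(\varphi)\right)\frac{\pi_t(x_{1:t})}{\pi_{s+1}(x_{1:s+1})}\myd x_{s+2:t}.
\end{align*}
Writing $K_s^M(u_s) \eqdef \int g_s(x_{s+1})\kappa_{s+1}^M(x_{s+1}|u_s)\myd x_{s+1}$, the integrand in the outer $\eta_s^M$-expectation defining $\Psi_{s,t}^M$ is $(Z_s/Z_{s+1})^2\tau_{s+1}(u_s)^2 K_s^M(u_s)^2$. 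Standard SMC consistency applied to the nested sampler (Algorithm~\ref{alg:smc}) gives $\tau_{s+1}(u_s)\to\nu_s(x_{1:s})$ in probability under $\eta_s^M(\cdot|x_{1:s})$, while consistency of backward simulation (Algorithm~\ref{alg:bs}) gives $K_s^M(u_s)\to \int g_s(x_{s+1})\pi_{s+1}(x_{s+1}|x_{1:s})\myd x_{s+1} = \tfrac{\pi_t(x_{1:s})}{\pi_{s+1}(x_{1:s})}(\varphi_s(x_{1:s})-\pi_t(\varphi))$. Assumption~\ref{ass:strong} delivers uniform boundedness of $g_s$ (the ratio $\pi_t(x_{1:t})/\pi_{s+1}(x_{1:s+1})$ is dominated by $\lambda_{s+1,t}^+\pi_t(x_{s+2:t}|x_{1:s+1})$, and $\varphi$ is bounded), providing the domination needed to push these joint in-probability limits through the outer expectation. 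The algebra $(Z_s/Z_{s+1})^2\nu_s(x_{1:s})^2(\pi_t(x_{1:s})/\pi_{s+1}(x_{1:s}))^2 = \pi_t(x_{1:s})^2/\pi_s(x_{1:s})^2$ then yields
\begin{align*}
\lim_{M\to\infty}\Psi_{s,t}^M(x_{1:s};\varphi) = \frac{\pi_t(x_{1:s})^2}{\pi_s(x_{1:s})^2}\left(\varphi_s(x_{1:s})-\pi_t(\varphi)\right)^2.
\end{align*}

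Assumption~\ref{ass:ui} together with this pointwise convergence gives $L^1(\pi_s)$-convergence by Vitali's theorem, so $\sigma_{s,t}^M(\varphi)\to \int \tfrac{\pi_t(x_{1:s})^2}{\pi_s(x_{1:s})}(\varphi_s(x_{1:s})-\pi_t(\varphi))^2\myd x_{1:s}$. For $s=0$ the analogous inner limit is $\int(\varphi(x_{1:t})-\pi_t(\varphi))\pi_t(x_{1:t})\myd x_{1:t}=0$, so the same Vitali argument forces $\sigma_{0,t}^M(\varphi)\to 0$. Summing the $s=t$ contribution, the vanishing $s=0$ contribution, and the $1\le s\le t-1$ limits reproduces the displayed expression.

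The main obstacle is the pointwise limit of $\Psi_{s,t}^M$: $\tau_{s+1}(u_s)$ and $K_s^M(u_s)$ are random and jointly depend on the same $u_s$, so averaging them separately is not enough. A joint convergence argument (for instance continuous-mapping applied to consistency of both the nested forward filter and the backward simulator) is needed, and the uniform boundedness supplied by Assumption~\ref{ass:strong} is essential for upgrading in-probability convergence of the individual pieces into convergence of the squared product under $\eta_s^M$ without further moment control on $\tau_{s+1}(u_s)$.
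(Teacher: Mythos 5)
Your proposal follows essentially the same route as the paper: establish the pointwise limit of $\Psi_{s,t}^M(x_{1:s};\varphi)$ (and the vanishing of $\sigma_{0,t}^M$) via consistency of the nested SMC/backward-simulation pair under the strong mixing assumption, then invoke the Vitali convergence theorem under Assumption~\ref{ass:ui} to pass the limit through the outer $\pi_s$-integral. The paper relegates the pointwise convergence step to ``standard SMC results'' citing \citet{DelMoral:2004}, so your more explicit joint-convergence argument for $\tau_{s+1}(u_s)$ and $K_s^M(u_s)$ only fills in detail the paper leaves implicit (though note the paper's stated regularity conditions on $\varphi$ are moment conditions, not boundedness, so your domination step should be phrased accordingly).
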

\begin{proof}
See the supplementary material.
%By Assumption~\ref{ass:conv} we have that $\sigma_{0,t}(\varphi)\convAS 0$; and together with Assumption~\ref{ass:ui} it follows that the convergence in \eqref{eq:PsiConvAS} holds in distribution \citep{goggin1994}. Thus by the Vitali Convergence Theorem \citep{TODO} it follows that as $M \to \infty$
%\begin{align*}
%\sigma_{s,t}^M(\varphi) \convAS \int \frac{\pi_t(x_{1:s})^2}{\pi_s(x_{1:s})} \left(\int \varphi(x_{1:t}) \pi_t(x_{s+1:t}|x_{1:s}) \myd x_{s+1:t}-\pi_t(\varphi)\right)^2 \myd x_{1:s}.
%\end{align*}
\end{proof}
\begin{remark}
The attained asymptotic variance is exactly the one derived for the fully adapted SMC asymptotic variance by \citet{johansen2008}.
\end{remark}

% ======================================================================
%                   Choosing N vs M
% ======================================================================
\subsection{Choosing $N$ vs $M$}\label{sec:nvsm}
The computational complexity for the two-level \nsmc is proportional to $\mathcal{O}(NM)$, and it is interesting to study the trade-off between the number of particles in the outer procedure ($N$) and the inner ($M$). To this end we consider a fairly simple model and test function that leads to analytical expressions for the asymptotic variance in the CLT above. We propose to study a high-dimensional \ssm, given in Definition~\ref{def:indepmodel}, \ie obtained by making $n_x$ independent copies of an \ssm. For this model we can obtain analytical solutions given by Proposition~\ref{prop:nvsm}.
\begin{definition}
\label{def:indepmodel}
Define the independent state space model as follows
\begin{align*}
\pi_t(x_{1:t}) \propto \prod_{d=1}^{n_x} \left[\mu(x_{1,d}) \prod_{s=1}^t g(y_{s,d}|x_{s,d}) \prod_{s=2}^t f(x_{s,d}|x_{s-1,d}) \right].
\end{align*}
For simplicity we also assume that $y_{s,d} = y_{s,e}, \forall d,e$ and that $\E_{\pi_t}[x_t] = 0$.
\end{definition}
\begin{proposition}[$N$ vs $M$]
\label{prop:nvsm}
For the model in Definition~\ref{def:indepmodel} and $\varphi(x_{1:t}) = \sum_{d=1}^{n_x} x_{t,d}$, we have that the asymptotic variance of fully adapted \smc is given by
\begin{align*}
\Sigma_t^{\text{FA}}(\varphi) = n_x A_t + \sum_{s=1}^{t-1} n_x B_s^{n_x-1} A_s + n_x(n_x-1) B_s^{n_x-2} C_s^{2},
\end{align*}
and using $r(x_{s,d}|x_{s-1,d})$ as proposal in the \nsmc method in Definition~\ref{def:nested} we get that the asymptotic variance of \nsmc is
\begin{align*}
\Sigma_t^M(\varphi) &= n_x A_t + \sum_{s=0}^{t-1} \Bigg[ n_x B_s^{n_x-1} \left( A_s + M^{-1}\left(\tilde A_s - A_s \right) \right) \left(1 - \frac{1}{M} \right)^{n_x-1}\left(1 + \frac{\tilde B_s}{B_s (M-1)} \right)^{n_x-1} \\
& +n_x(n_x-1) B_s^{n_x-2} \left(C_s + M^{-1}\left( \tilde C_s - C_s\right) \right)^2  \left(1 - \frac{1}{M} \right)^{n_x-2}\left(1 + \frac{\tilde B_s}{B_s (M-1)} \right)^{n_x-2} \Bigg],
\end{align*}
for the (finite) positive constants $A_t, A_s, \tilde A_s, B_s, \tilde B_s, C_s$, and $\tilde C_s$ defined in the supplementary material.
\end{proposition}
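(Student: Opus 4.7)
The plan is to apply Theorem~\ref{thm:clt} and Proposition~\ref{prop:convfapf} to the model of Definition~\ref{def:indepmodel} with $\varphi(x_{1:t})=\sum_{d=1}^{n_x}x_{t,d}$, and to exploit the fact that both the target $\pi_t$ and the nested SMC sampler of Definition~\ref{def:nested} factorise across the $n_x$ independent components. Throughout, write a superscript $(1)$ for the one-component marginal so that $\pi_t(x_{1:t})=\prod_{d=1}^{n_x}\pi_t^{(1)}(x_{1:t,d})$; since $\E_{\pi_t}[x_t]=0$ we have $\pi_t(\varphi)=0$, so every ``centering'' drops out.

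For the fully adapted formula, I would apply Proposition~\ref{prop:convfapf}. The leading term $\pi_t((\varphi-\pi_t(\varphi))^2)$ reduces by independence to $n_x A_t$ with $A_t$ the per-component variance at time $t$. For each $s\in\{1,\dots,t-1\}$, independence factorises $\pi_t(x_{1:s})^2/\pi_s(x_{1:s})$ as a product over $d$, and the inner conditional expectation equals $\sum_d h_s(x_{1:s,d})$ where $h_s$ is the per-component smoothing mean. Squaring $\sum_d h_s$ produces $n_x$ diagonal squares and $n_x(n_x-1)$ cross terms. The diagonal squares integrate to $A_s B_s^{n_x-1}$ (one $A_s$ factor from the component carrying $h_s^2$, and $B_s$ from each of the other $n_x-1$ components), while each cross pair integrates to $C_s^2 B_s^{n_x-2}$. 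Summing over $s$ yields the claimed expression for $\Sigma_t^{\text{FA}}(\varphi)$.

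For the NSMC variance I would start from the expression of $\Sigma_t^M(\varphi)$ in Theorem~\ref{thm:clt} and expand $\Psi_{s,t}^M$ in \eqref{eq:condexp}. Because the internal target of Algorithm~\ref{alg:smc} is $\gamma_t/\gamma_{t-1}$, which factorises over components, the internal SMC with proposal $r$ runs as $n_x$ independent one-component filters; hence $\tau_{s+1}(u_s)$ is a product of $n_x$ independent normalisation-constant estimators and $\kappa_{s+1}^M$ is a product of $n_x$ independent backward samplers. Expanding the squared inner integral as a double sum $\sum_{d,e}$ then splits into diagonal and off-diagonal contributions exactly as in the fully adapted case, but now each per-component expectation carries a $1/M$ correction from the internal approximation. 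A single-component second moment against the backward draw yields $A_s + M^{-1}(\tilde A_s - A_s)$, a single-component first moment squared yields $(C_s+M^{-1}(\tilde C_s-C_s))^2$, and each of the $n_x-1$ (resp.\ $n_x-2$) spectator components contributes the mixed factor $(1-1/M)(1+\tilde B_s/(B_s(M-1)))$ coming from the joint law of the weight used by $\tau$ and the particle selected by backward sampling. Assembling the $n_x$-fold products and summing over $s=0,\dots,t-1$ produces the stated expression, with the $s=0$ term handled identically using $\sigma_{0,t}^M$.

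The main obstacle will be the bookkeeping in the spectator factor. Conditional on the backward draw landing on a specific internal particle, the remaining $M-1$ weights contribute (via $\tau_{s+1}$) a second moment of $B_s(1+\tilde B_s/(B_s(M-1)))$, while the probability that the backward index avoids the ``selected'' particle contributes $(1-1/M)$; these two effects must be separated cleanly and then raised to the power equal to the number of spectator components. Once this single-component decomposition is in place, the product structure over components and the linearity of the sum over $s$ yield the result by elementary algebra. The explicit definitions of $A_s,\tilde A_s,B_s,\tilde B_s,C_s,\tilde C_s$ and the routine single-component moment computations are deferred to the supplementary material. As a consistency check, sending $M\to\infty$ collapses every $1/M$ correction and the spectator factor to $1$, reducing $\Sigma_t^M(\varphi)$ to $\Sigma_t^{\text{FA}}(\varphi)$, in agreement with Proposition~\ref{prop:convfapf}.
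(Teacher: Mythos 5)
Your proposal follows essentially the same route as the paper's proof: exploit the factorisation of both $\pi_t$ and the internal sampler over the $n_x$ independent components, expand the squared sum over dimensions into $n_x$ diagonal and $n_x(n_x-1)$ cross terms, and obtain the $1/M$ corrections and the spectator factor $\left(1-\tfrac{1}{M}\right)\left(1+\tfrac{\tilde B_s}{B_s(M-1)}\right)$ from the coincidence structure of the two internal particle indices arising when $\tau_{s+1}^2\kappa_{s+1}^M{}^{\otimes 2}$ is expanded. The bookkeeping you flag as the main obstacle is exactly the ``straightforward but tedious'' combinatorial step the paper also defers, and your $M\to\infty$ sanity check matches Proposition~\ref{prop:convfapf} as in the paper.
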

\begin{proof}
See the supplementary material.
\end{proof}
\begin{remark}
As expected the asymptotic variance of the \nsmc will (like fully adapted \smc) in general scale exponentially bad with the dimension $n_x$ of the state. However, to control the additional approximation introduced by not evaluating $\nu_{t-1}$ and sampling $q_t$ exactly, we only need to scale $M \propto n_x$, even as $n_x \to \infty$. We expect that intuition and rule-of-thumbs from running standard \smc also apply to the internal approximation targeting $\frac{\gamma_t(x_{1:t})}{\gamma_{t-1}(x_{1:t-1})}$.
\end{remark}

%\cn{Perhaps a numerical example, LGSS with $t=2$?}

% ======================================================================
%                           Experiments
% ======================================================================
\section{Numerical Results}\label{sec:experiments}
% ======================================================================
%                           LGSS
% ======================================================================
\subsection{Gaussian Model}\label{sec:lgss}
We start by considering a Gaussian spatio-temporal state space model where the exact solution is available via the Kalman filter \citep{Kalman:1960}, and we can implement exact fully adapted \smc as explained in Section~\ref{sec:ffbs}. The model is given by
\begin{subequations}
\begin{align}
x_t &= 0.5 x_{t-1}+v_t,&v_t \sim \frac{1}{Z_v} \exp{\left(-\frac{\tau}{2}\sum_{d=1}^{n_x}  v_{t,d}^2-\frac{\lambda}{2} \sum_{d=2}^{n_x} (v_{t,d}-v_{t,d-1})^2\right)}\\
y_t|x_t &\sim \N(x_t,\sigma_y^2 I).
\end{align}
\end{subequations}
The results for $N=100, T=10, \tau=\lambda=1$ and $\sigma_y^2 = 0.25^2$, \ie with fairly high signal to noise ratio, is given in Figure~\ref{fig:lgss}. We compare \nsmc with (and without) backward simulation to the bootstrap particle filter (BPF) that uses the transition probability as proposal. We give all methods equivalent computational budget as the number of internal particles $M$ grow, \ie BPF gets $N_{\text{BPF}}=100\cdot M$ particles. Furthermore, for illustrative purposes we include fully adapted \smc (FAPF), the method that \nsmc approximates, for a fixed number of particles $N_{\text{FAPF}}=100$. The experiments are run ten times independently and we show the median squared error (MSE) as well as 25\%/75\% quantiles, for estimates of the log-likelihood, $\E[x_{T,1}]$ and $\E[x_{T,n_x}]$ with $n_x \in \{10, 100\}$. The expectations are with respect to the posterior distribution.

%\begin{figure}[htbp]
    %\vspace{-5mm}
    %\centering
    %\begin{subfigure}[b]{0.48\textwidth}
        %\includegraphics[width=\textwidth]{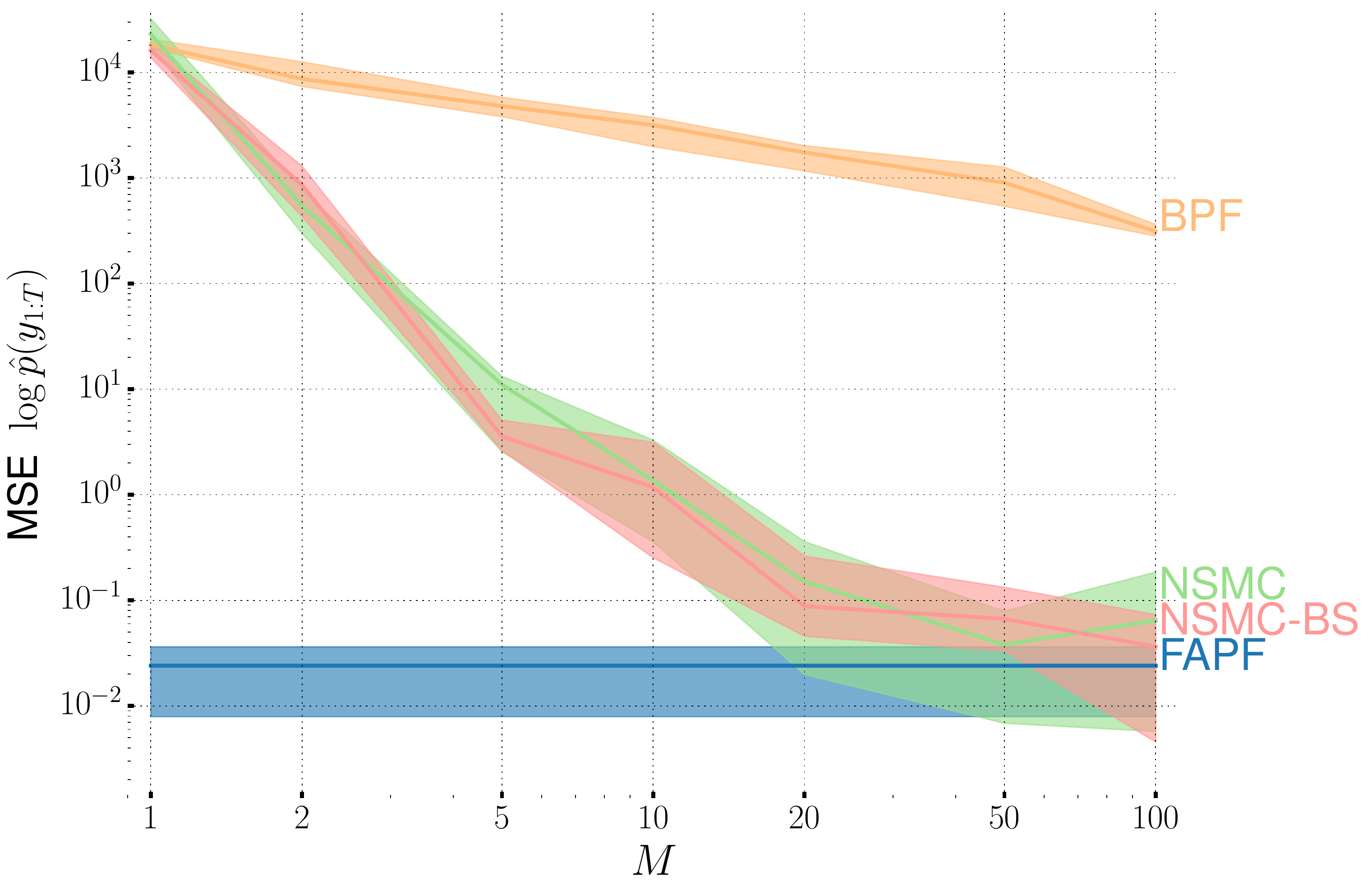}
        %\caption{\footnotesize$n_x=10$}
        %\label{fig:D10:ll}
    %\end{subfigure}
    %~
    %\begin{subfigure}[b]{0.48\textwidth}
        %\includegraphics[width=\textwidth]{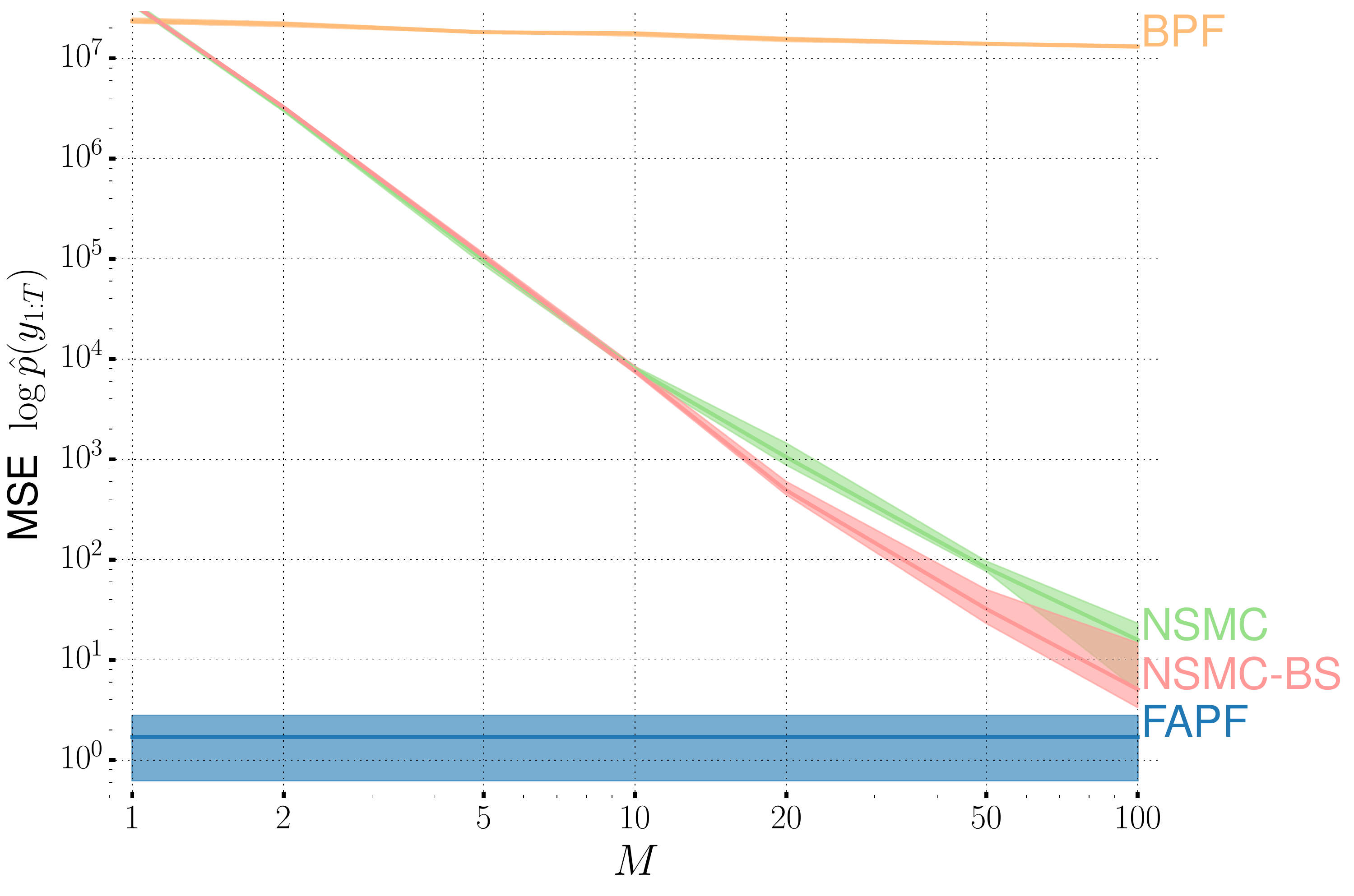}
        %\caption{\footnotesize$n_x=100$}
        %\label{fig:D100:ll}
    %\end{subfigure}
    
    %\begin{subfigure}[b]{0.48\textwidth}
        %\includegraphics[width=\textwidth]{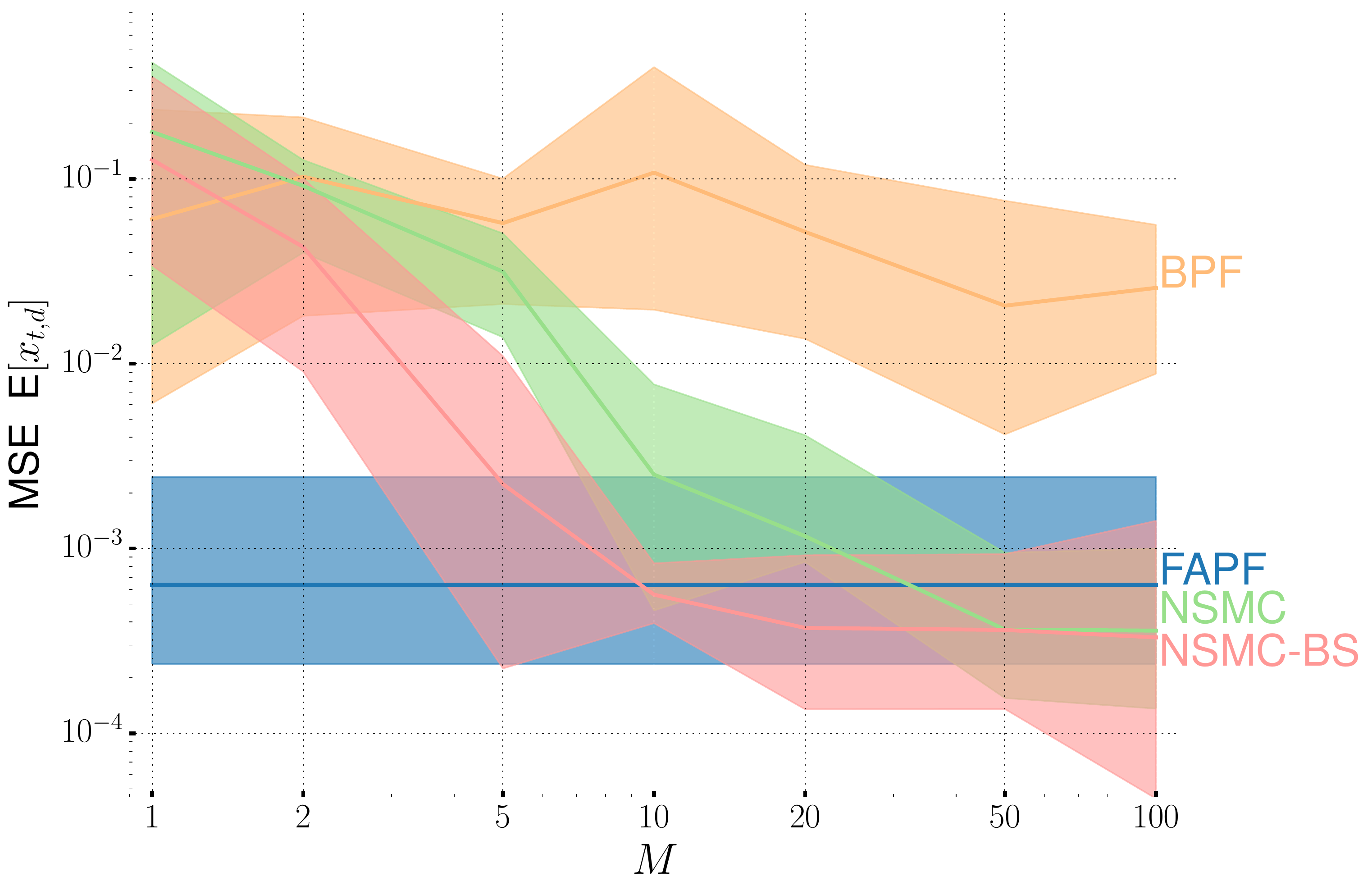}
        %\caption{\footnotesize$d=1,n_x=10$}
        %\label{fig:D10:ex0}
    %\end{subfigure}
    %~
    %\begin{subfigure}[b]{0.48\textwidth}
        %\includegraphics[width=\textwidth]{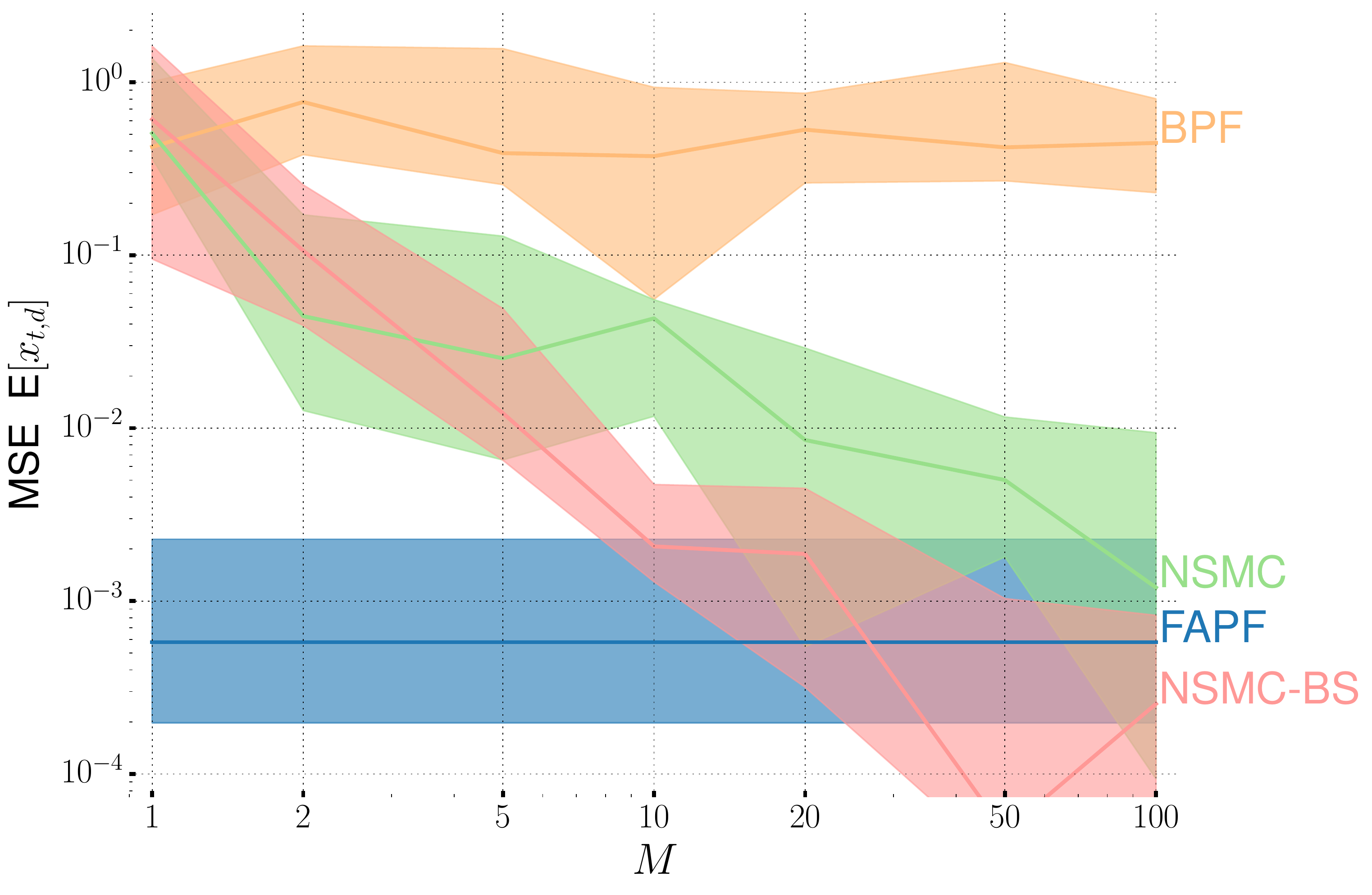}
        %\caption{\footnotesize$d=1,n_x=100$}
        %\label{fig:D100:ex0}
    %\end{subfigure}
    
    %\begin{subfigure}[b]{0.48\textwidth}
        %\includegraphics[width=\textwidth]{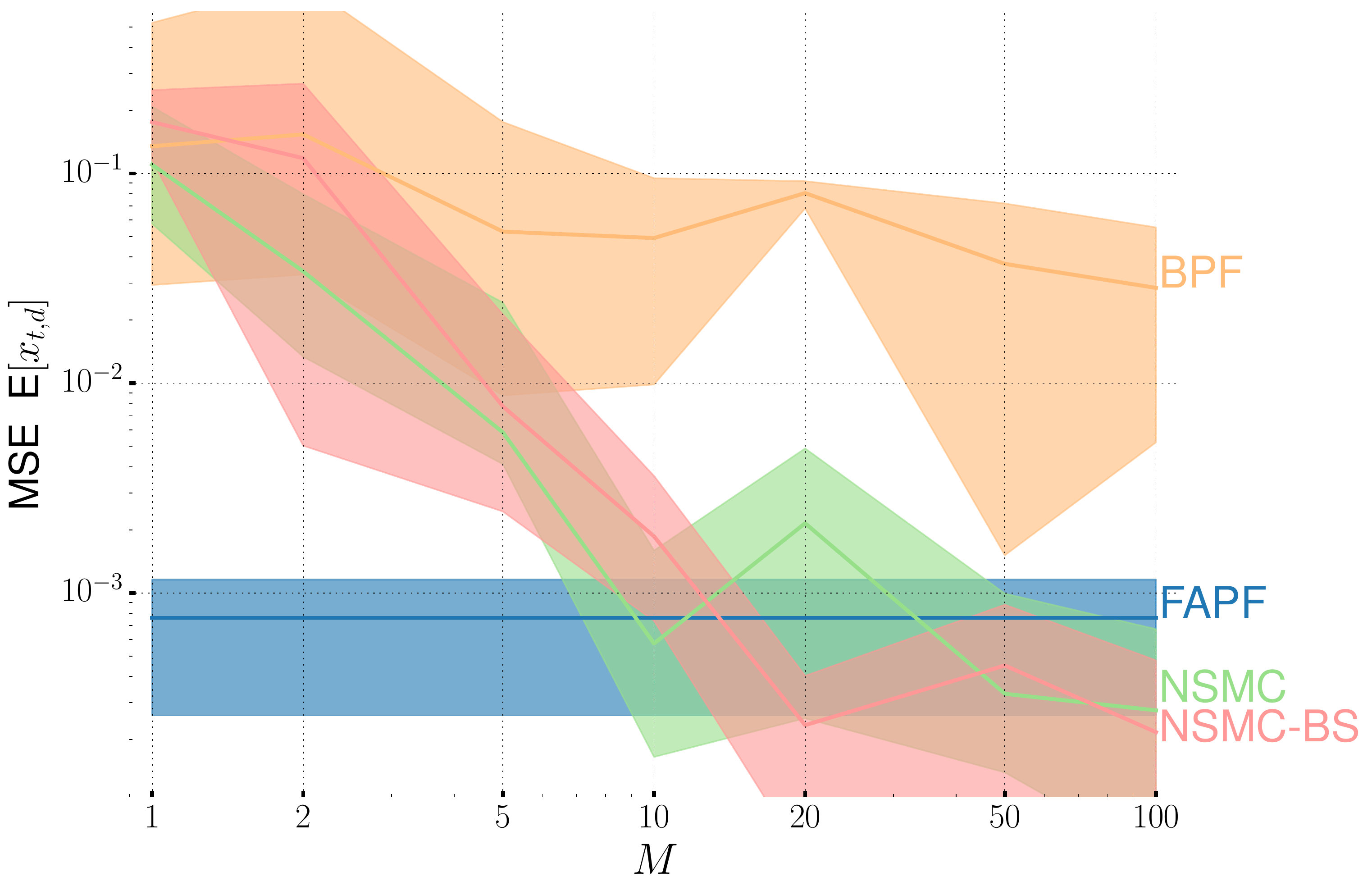}
        %\caption{\footnotesize$d=10, n_x=10$}
        %\label{fig:D10:ex9}
    %\end{subfigure}
    %~
    %\begin{subfigure}[b]{0.48\textwidth}
        %\includegraphics[width=\textwidth]{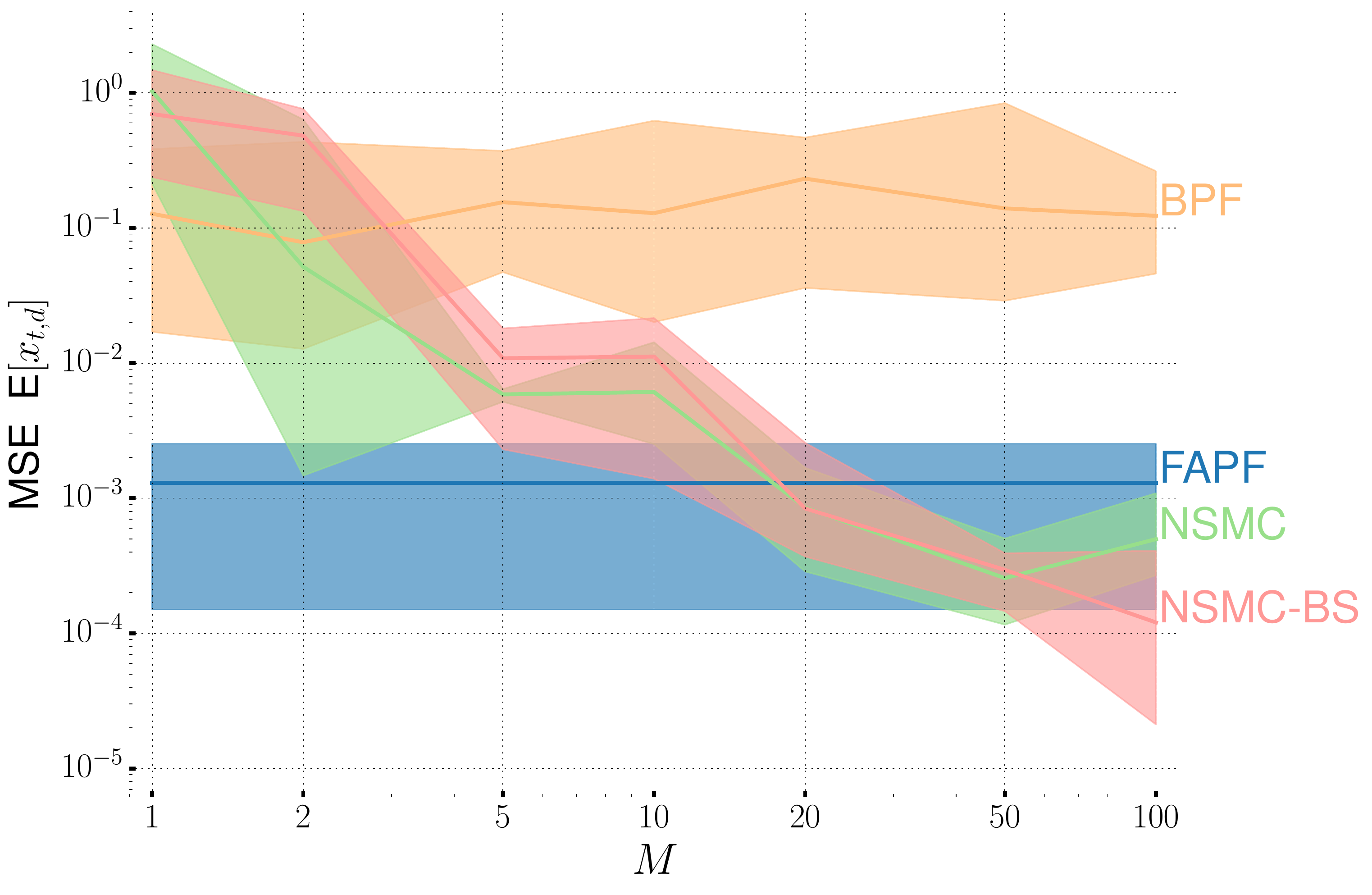}
        %\caption{\footnotesize$d=100,n_x=100$}
        %\label{fig:D100:ex99}
    %\end{subfigure}
    %\caption{\footnotesize Median MSE and 25\%/75\% quantiles of Monte Carlo estimates of $log p(y_{1:T}), \E[x_{T,1}], \E[x_{T,n_x}]$ for BPF, FAPF and two variants of \nsmc. $N=100$ for FAPF and \nsmc and BPF has equivalent computational budget $N=100*M$. Left column $n_x=10$, right column $n_x=100$ and $T=10,\sigma_y^2 = 0.25^2$ in all experiments.}\label{fig:lgss}
%\end{figure}

\begin{figure}[htbp]
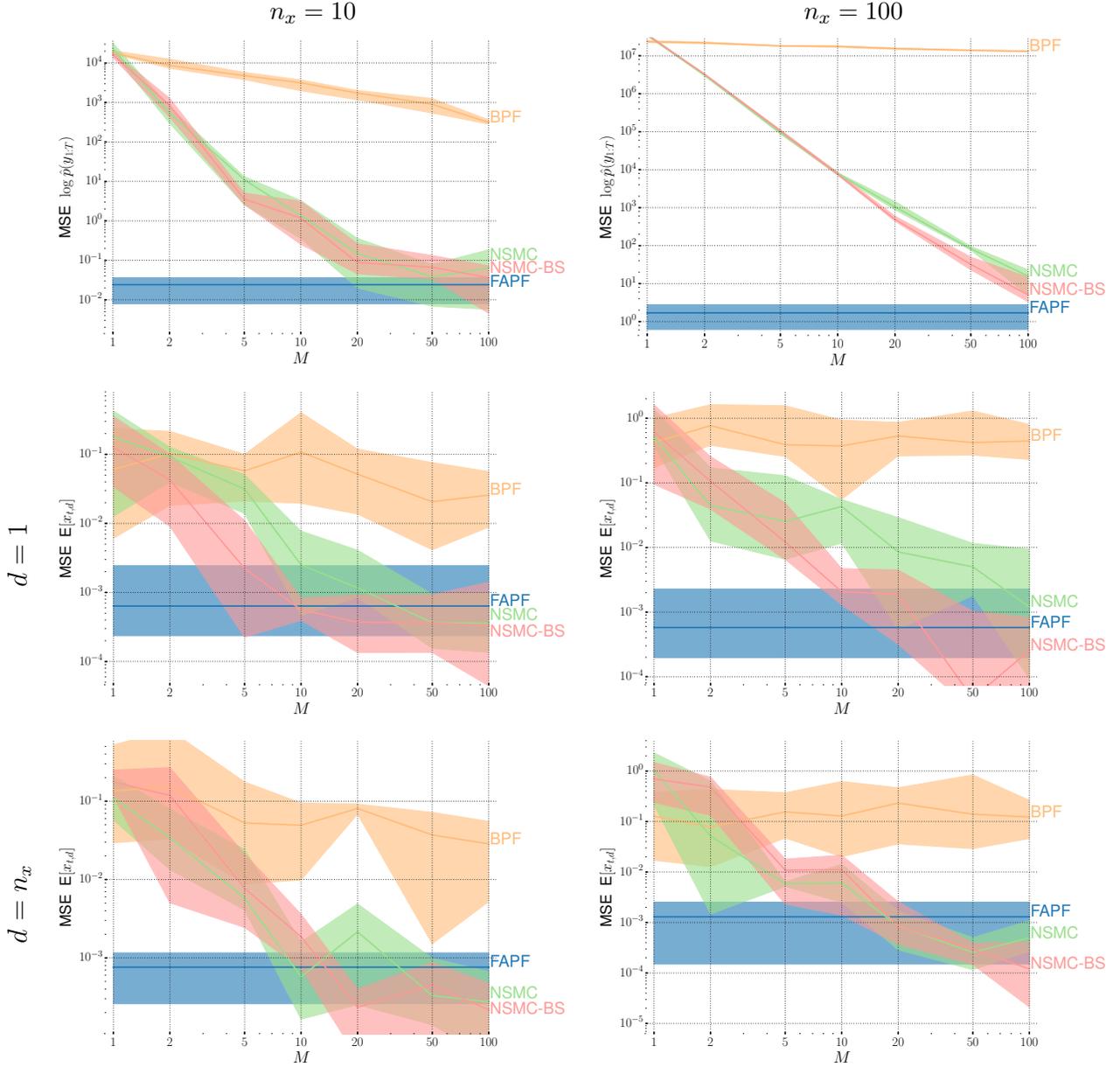

    \centering
	\begin{tabular}{m{.02\textwidth} m{.48\textwidth} m{.48\textwidth}  m{.000005\textwidth} }
	 & \centering \small $n_x=10$ & \centering \small $n_x=100$ & \\
	 &\includegraphics[width=.48\textwidth]{{T10D10_sigma0.25_loglikelihood}.pdf} & \includegraphics[width=.48\textwidth]{{T10D100_sigma0.25_loglikelihood}.pdf} & \\
     \rotatebox{90}{$d=1$} & \includegraphics[width=.48\textwidth]{{T10D10_sigma0.25_ex_t9d0}.pdf} & \includegraphics[width=.48\textwidth]{{T10D100_sigma0.25_ex_t9d0}.pdf} & \\
     \rotatebox{90}{$d=n_x$} &\includegraphics[width=.48\textwidth]{{T10D10_sigma0.25_ex_t9d9}.pdf} & \includegraphics[width=.48\textwidth]{{T10D100_sigma0.25_ex_t9d99}.pdf} &
	\end{tabular}
	\caption{\footnotesize Median SE and 25\%/75\% quantiles of Monte Carlo estimates of $\log p(y_{1:T}), \E[x_{T,1}], \E[x_{T,n_x}]$ for BPF, FAPF and two variants of \nsmc. $N=100$ for FAPF and \nsmc and BPF has equivalent computational budget $N=100\cdot M$. Left column $n_x=10$, right column $n_x=100$ and $T=10,\sigma_y^2 = 0.25^2$ in all experiments.}\label{fig:lgss}
\end{figure}

We can see that \nsmc is significantly better than BPF and that it converges quickly towards the fully adapted \smc. Backward simulation also clearly helps with estimates of $\E[x_{T,d}]$ for $d=1$, alleviating the correlation between generated samples. It is worthwhile to point out that for small $M$ the \nsmc seems to improve much more quickly than the standard asymptotic rate $M^{-1}$. For the likelihood estimate the rate almost exceeds $M^{-4}$. We provide results for different settings of $\sigma_y^2$ in the supplementary material. In general we see less striking improvement of \nsmc over BPF when the signal to noise ratio is low, \ie $\sigma_y^2$ is high compared to $\tau^{-1}$, which is to be expected \citep{snyder2015performance}.

\subsection{Soil Carbon Cycles}\label{sec:nlss}
We move on to study the performance of \nsmc and compare it to \stpf on a spatio-temporal model inspired by the soil carbon cycle model of \citep{murray2016,clifford2014}. The simplified model that we use to profile the two state-of-the-art methods is defined by
\begin{subequations}
\begin{align}
x_t &= 0.5(x_{t-1}+e^{\xi_t}) e^{v_t}, \qquad v_t \sim \frac{1}{Z_v}\exp\Big(-\frac{\tau}{2}\sum_{i \in \Ve} v_{t,i}^2 - \frac{\lambda}{2}\sum_{(i,j)\in \Ed} (v_{t,i}-v_{t,j})^2\Big),\\
y_t | x_t & \sim \operatorname{Truncated Normal}\left( x_t, \sigma^2 I , 0, \infty\right),
\end{align}
\label{eq:soilcarbon}
\end{subequations}
where $\xi_t$ is a known input signal and $(\Ve,\Ed)$ is a square lattice, $\sqrt{n_x} \times \sqrt{n_x}$, with nearest neigbour interaction, \ie $(i,j) \in \Ed$ if $i$ and $j$ are neighbors on the lattice. The latent variables $x_t$ are positive and it is not possible to implement the exact fully adapted \smc method. We set $\sigma=0.2$, $\tau = 2$, and $\lambda = 1.0$ and run \nsmc and \stpf with matched computational complexity. Figure~\ref{fig:nlss} displays the median, over the $n_x$ dimensions, mean squared error for each time-point $t$ estimated by running the algorithms $20$ times independently. Ground truth is estimated using $20$ independent runs of the method of \citet{naessethLS2014} with $\thsnd{64}$ samples.
\begin{figure}[htbp]
    \centering
	\includegraphics[width=.48\textwidth]{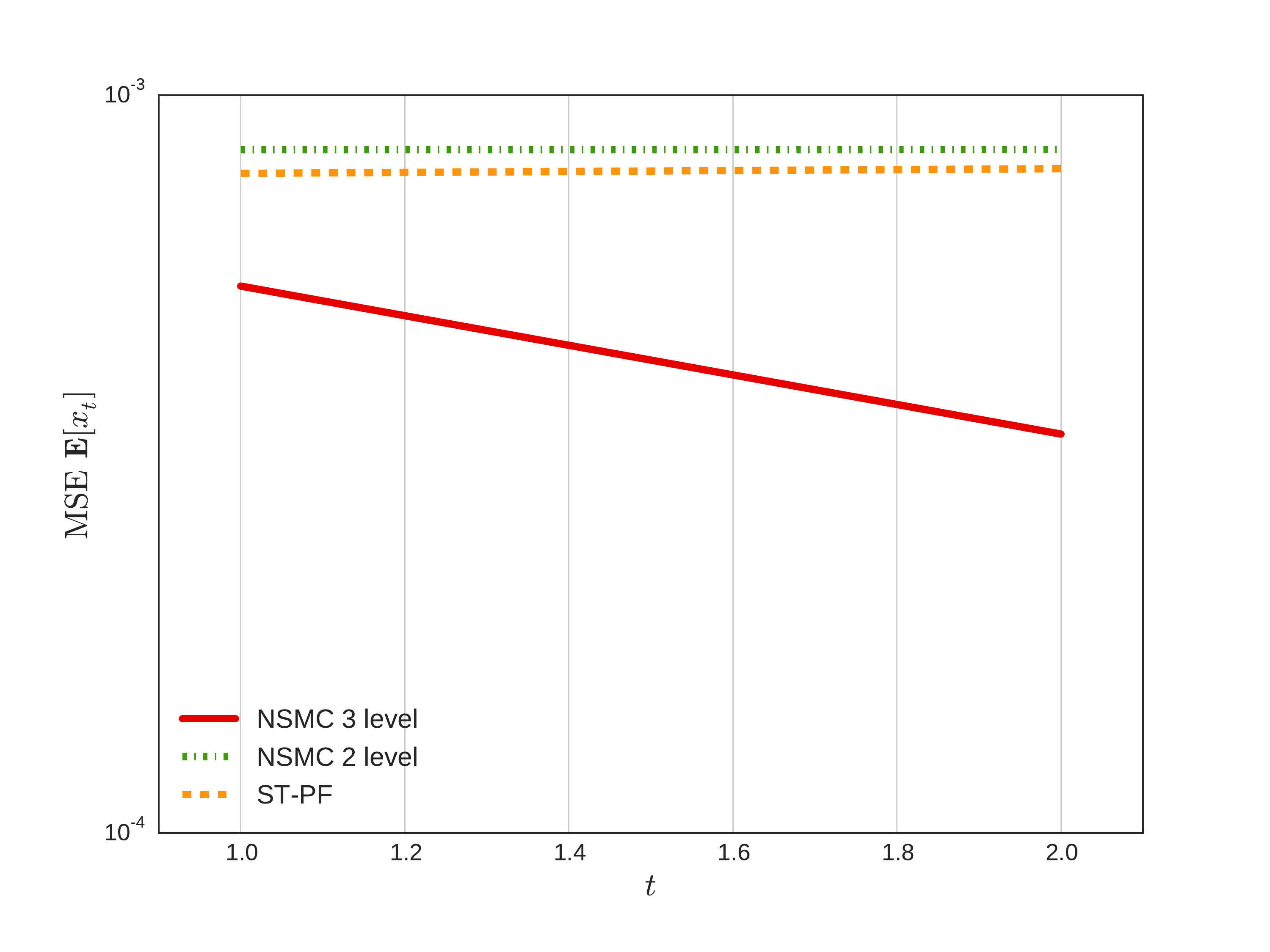}
    \includegraphics[width=.48\textwidth]{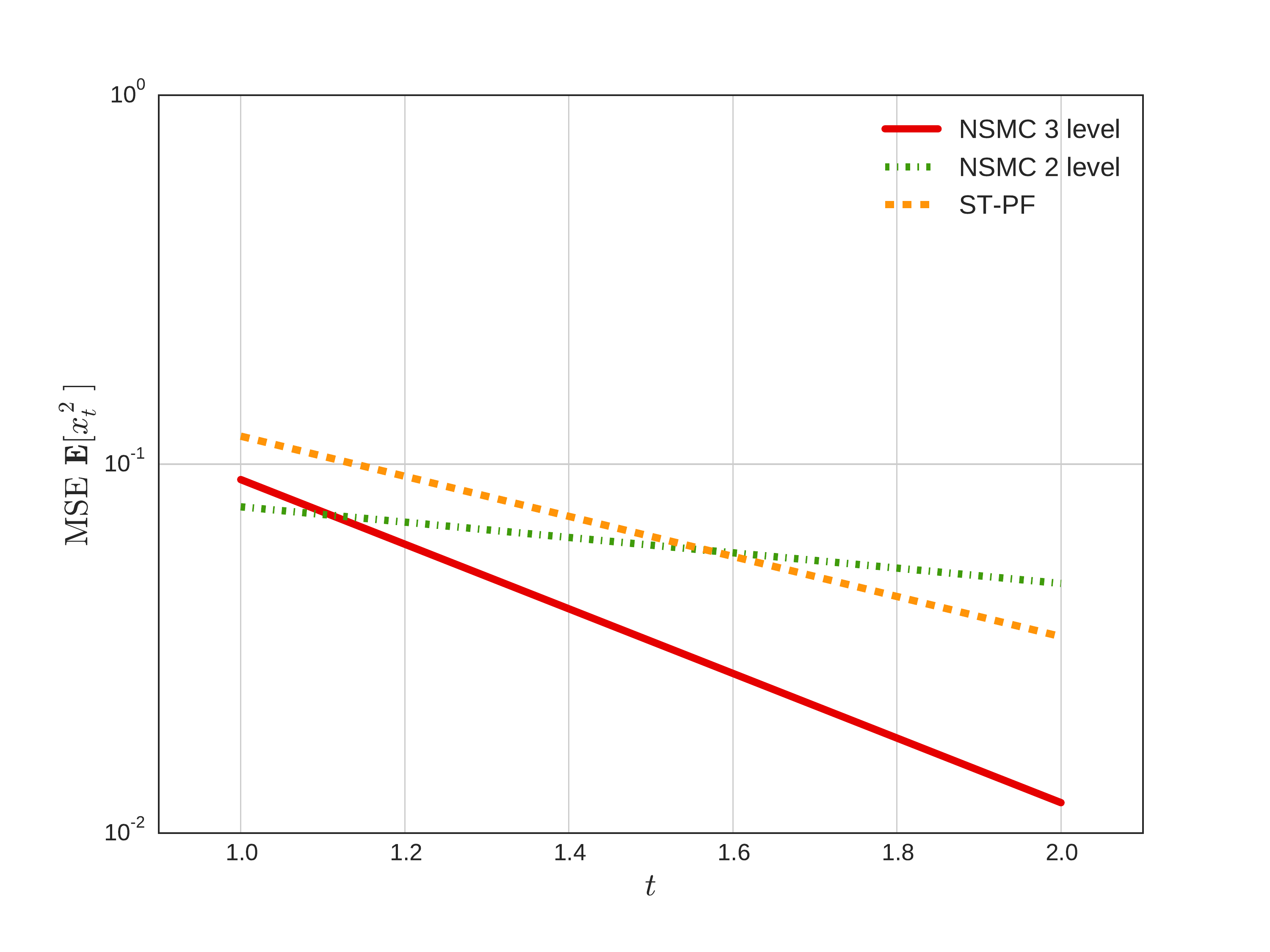}
	\caption{Results for $T=2$, $n_x = 64$ ($8\times 8$).}\label{fig:nlss}
\end{figure}
We can see that the different \nsmc versions either perform as well, or better than \stpf. This is without taking into account that \nsmc simplifies distribution of the computation and is more memory efficient, only $N$ rather than $NM$ samples need to be retained at each step.

\subsection{Mixture Model}\footnote{The results in this section have been previously published by the authors in \citet{naessethLS2015nested}.}
Finally, we consider an example with a non-Gaussian \stssm, borrowed from \citet{beskosCJKZ2014a} where the full details of the model are given.
The transition probability $f(x_t \mid x_{t-1})$ is a spatially localised Gaussian mixture and the measurement probability $g(y_t \mid x_t)$ is Student's t-distributed. The model dimension is $n_x=1\thinspace024$. \citet{beskosCJKZ2014a} report improvements for \stpf over both the BPF 
and the block PF by \citet{rebeschiniH2015can}. Following \citet{beskosCJKZ2014a} we use $N = M = 100$ for both \stpf and \nsmc 
\begin{figure}[h]
\begin{center}
    \includegraphics[width=0.5\columnwidth]{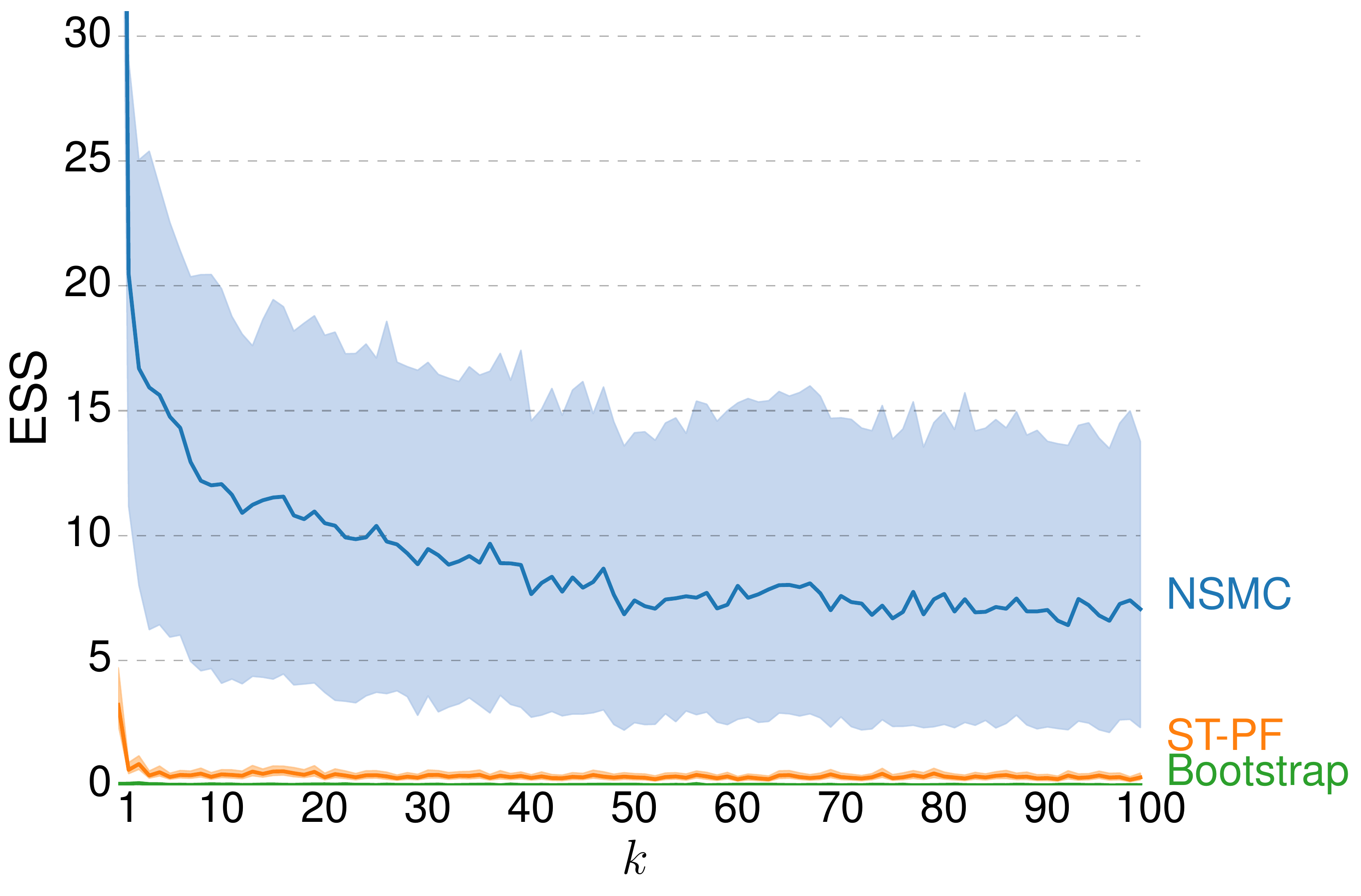}
  \end{center}
\caption{Median \ess with $15-85\%$ percentiles (shaded region) for the non-Gaussian \ssm.}
\label{fig:nless}
\end{figure}
and the BPF is given $N=\thsnd{10}$. In Figure~\ref{fig:nless} we report the effective sample size (\ess, higher is better), estimated according to \citet{CarpenterCF:1999}. The \ess for the BPF is close to $0$, for \stpf around 1--2, and for \nsmc slightly higher at 7--8. However, we note that all methods perform quite poorly on this model, and to obtain satisfactory results it would be necessary to use more particles.

% ======================================================================
%                           Appendix/Supp
% ======================================================================
\appendix
\section{Supplementary Material}

% ======================================================================
%                             Method
% ======================================================================
\subsection{General Nested Sequential Monte Carlo}\label{sec:method}
Assume that we are interested in approximating an arbitrary auxiliary \smc sampler with proposal $q_t(x_t|x_{1:t-1}) = \frac{r_t(x_t|x_{1:t-1})}{\int r_t(x_t|x_{1:t-1}) \myd x_t}$ and \emph{adjustment multipliers} $\nu_{t-1}(x_{1:t-1})$. The fully adapted \smc that we focus on in this paper is then attained as a special case when $q_t(x_t|x_{1:t-1}) \propto \frac{\gamma_t(x_{1:t})}{\gamma_{t-1}(x_{1:t-1})}$ and $\nu_{t-1}(x_{1:t-1}) = \int \frac{\gamma_t(x_{1:t})}{\gamma_{t-1}(x_{1:t-1})} \myd x_t$. 

We can just as easily use a nested Monte Carlo method that produces properly weighted samples with respect to an arbitrary proposal $q_t$ and multipliers $\nu_{t-1}$, see Algorithm~\ref{alg:nsmc}.

\begin{algorithm}
\caption{Nested Sequential Monte Carlo\hfill (all for $i=1,\ldots,N$)}\label{alg:nsmc}
\begin{algorithmic}[1]
\REQUIRE $\eta_{t-1}^M, \kappa_t^M, \tau_t$ that generate samples properly weighted for $q_t(x_t|x_{1:t-1})$% \propto r_t(x_t|x_{1:t-1})$
%\STATE $x_0^i \sim \mu(x_0)$
\FOR{$t=1$ to $T$}
\STATE Simulate $u_{t-1}^i \sim \eta_{t-1}^M(u_{t-1}|x_{1:t-1}^i)$
\STATE Draw $a_t^i$ with probability $\Prb(a_t^i = j) = \frac{\hat\nu_{t-1}(x_{1:t-1}^j,u_{t-1}^j) w_{t-1}^j}{\sum_\ell \hat\nu_{t-1}(x_{1:t-1}^\ell,u_{t-1}^\ell) w_{t-1}^\ell}$
\STATE Simulate $x_t^i \sim \kappa_t^M(x_t | u_{t-1}^{a_t^i})$
\STATE Set $x_{1:t}^i = (x_{1:t-1}^{a_t^i},x_t^i)$
\STATE Set $w_t^i = \frac{\gamma_t(x_{1:t}^i)}{\gamma_{t-1}(x_{1:t-1}^{a_t^i})} \frac{\tau_t(u_{t-1}^{a_t^i})}{\hat\nu_{t-1}(x_{1:t-1}^{a_t^i},u_{t-1}^{a_t^i}) r_t(x_t^i|x_{1:t-1}^{a_t^i})}$
\ENDFOR
\end{algorithmic}
\end{algorithm}

\begin{remark}
Note that if the adjustment multipliers $\nu_{t-1}$ do not depend on $u_{t-1}$, simulating from $\eta_{t-1}$ can be done after resampling (simulating $a_t$). This ensures that the new samples are conditionally independent, thus decreasing correlation between samples.
\end{remark}

\subsubsection*{Generating Properly Weighted Samples using \is}
There are many ways of generating properly weighted samples with respect to a distribution, one example is using sequential Monte Carlo with or without backward simulation as explained in the main manuscript. However, perhaps one of the most straightforward and simple approaches is to use standard importance sampling. This means we would define $\eta_{t-1}^M, \kappa_t^M, \tau_t$ as follows:
\begin{description}
\item{$\eta_{t-1}^M(u_{t-1}|x_{1:t-1})$:} Set $u_{t-1} = \{\tilde x_t^i\}_{i=1}^M$, where $\tilde x_t^i \sim p_t(x_t|x_{1:t-1})$ for some proposal $p_t$,

\item{$\kappa_t^M(x_t|u_{t-1})$:} Set $x_t = \tilde x_t^B$, where $B$ is simulated with probability $\Prb(B = j) = \frac{\tilde w_t^j}{\sum_\ell \tilde w_t^\ell}$ with $w_t^j = \frac{r_t(\tilde x_t^j|x_{1:t-1})}{p_t(\tilde x_t^j|x_{1:t-1})}$,

\item{$\tau_t(u_{t-1})$:} Set $\tau_t(u_{t-1}) = \frac{1}{M} \sum_{i=1}^M \tilde w_t^i$.
\end{description}
It is straightforward to show that the above procedure generates properly weighted samples for $q_t$ as long as $p_t>0$ whenever $q_t$ is. Now, if we want to use the above to approximate fully adapted \smc we simply let $r_t = \gamma_t/\gamma_{t-1}$ and $\hat\nu_{t-1} = \tau_t$.

%\newpage
% ======================================================================
%                             Theory
% ======================================================================
\subsection{Theoretical Results}\label{sec:theory}
%\cn{Change so numbers from main are note hard-coded}
% ======================================================================
%                           THEOREM 1
% ======================================================================
\subsubsection*{Proof of Theorem~\ref{thm:clt}}
We reproduce the central limit theorem of \citet{naessethLS2015nested} here for clarity, see the Appendix of the extended version \citet{naessethLS2015nestedARXIV} for details. %Assume that we are approximating an arbitrary auxiliary \smc sampler with proposal $q_t(x_t|x_{1:t-1}) = \frac{r_t(x_t|x_{1:t-1})}{\int r_t(x_t|x_{1:t-1}) \myd x_t}$ and \emph{adjustment multipliers} $\nu_{t-1}(x_{1:t-1})$. The fully adapted \smc that we focus on in this paper is then attained as a special case when $q_t(x_t|x_{1:t-1}) \propto \frac{\gamma_t(x_{1:t})}{\gamma_{t-1}(x_{1:t-1})}$ and $\nu_{t-1}(x_{1:t-1}) = \int \frac{\gamma_t(x_{1:t})}{\gamma_{t-1}(x_{1:t-1})} \myd x_t$. The complete is summarized in A

% NOTATION
\subsubsection*{Notation and Definitions}
To explicitly state the general theorem we need some notation defined below:
\begin{align*}
\Gamma_t(x_{1:t},u_{0:t}) &= \frac{\tau_t(u_{t-1}) \eta_t^M(u_t|x_{1:t}) \kappa_t^M(x_t|u_{t-1})}{r_t(x_t|x_{1:t-1})} \frac{\gamma_t(x_{1:t})}{\gamma_{t-1}(x_{1:t-1})} \Gamma_{t-1}(x_{1:t-1},u_{0:t-1}), \\
\Pi_t(x_{1:t},u_{0:t}) &= \frac{\Gamma_t(x_{1:t},u_{0:t})}{Z_t},\\
w_t(x_{1:t},u_{0:t}) &\propto \frac{\gamma_t(x_{1:t})}{\gamma_{t-1}(x_{1:t-1})} \frac{\tau_t(u_{t-1})}{\nu_{t-1}(x_{1:t-1},u_{t-1}) r_t(x_t|x_{1:t-1})},\\
\Gamma_t'(x_{1:t},u_{0:t}) &= \nu_t(x_{1:t},u_t)\Gamma_t(x_{1:t},u_{0:t}),\\
\Pi_t'(x_{1:t},u_{0:t}) &\propto \Gamma_t'(x_{1:t},u_{0:t}),\\
Q_t^M(x_t,u_t|x_{1:t-1},u_{t-1}) &= \eta_t^M(u_t|x_{1:t}) \kappa_t^M(x_t|u_{t-1}), \\
w_t'(x_{1:t},u_{0:t}) &= \frac{\Pi_t'(x_{1:t},u_{0:t})}{Q_t^M(x_t,u_t|x_{1:t-1},u_{t-1})\Pi_{t-1}'(x_{1:t-1},u_{0:t-1})} \propto \nu_t(x_{1:t},u_t) w_t(x_{1:t},u_{0:t}),\\
\omega_t(x_{1:t},u_{0:t}) &= \frac{\Pi_t(x_{1:t},u_{0:t})}{Q_t^M(x_t,u_t|x_{1:t-1},u_{t-1})\Pi_{t-1}'(x_{1:t-1},u_{0:t-1})} \propto w_t(x_{1:t},u_{0:t}).
\end{align*}

Domain of $\Pi_t(x_{1:t},u_{0:t})$ is denoted by $\Theta_t = \setX_t \times \mathsf{U}_t$. For a function $h:\setX_t \mapsto \reals$, we define the extension of $h$ to $\Theta_t$ by $h^e(x_{1:t},u_{0:t}) \eqdef h(x_{1:t})$. Let $\Phi_t$ be defined recursively to be the set of measurable functions $h : \Theta_t \mapsto \reals$ such that there exists a $\delta > 0$ with $\E_{Q_t^M \Pi_{t-1}'}[\|w_t' h\|^{2+\delta}] < \infty$, and such that $(x_{1:t-1},u_{0:t-1}) \mapsto \E_{Q_t^M}[w_t' h]$ is in $\Phi_{t-1}$. We are now ready to state the more general central limit theorem of \citet{naessethLS2015nested}.

\begin{theorem}[Central Limit Theorem]
\label{thm:gen:clt}
Assume that $\varphi : \setX_t \mapsto \reals$ is a function such that ${\E_{Q_t^M \Pi_{t-1}'}[\|w_t' \varphi^e\|^{2+\delta}] < \infty}$ for some $\delta > 0$, and that $(x_{1:t-1},u_{0:t-1}) \mapsto \E_{Q_t^M}[\omega_t \varphi^e]$ is in $\Phi_{t-1}$. Then we have the following central limit theorem
\begin{align*}
\sqrt{N} \left(\sum_{i=1}^N \frac{w_t^i}{\sum_{\ell=1}^N w_t^\ell}\varphi(x_{1:t}^i) - \pi_t(\varphi) \right) \convD \N\left(0,\Sigma_t^M(\varphi)\right),
\end{align*}
where $\{(w_t^i,x_{1:t}^i)\}_{i=1}^N$ are generated by Algorithm~5 in \citet{naessethLS2015nestedARXIV} and the asymptotic variance is given by
\begin{align*}
\Sigma_t^M(\varphi) &= \tilde V_t^M(\omega_t(\varphi^e- \E_{\Pi_t}[\varphi^e])),
\end{align*}
where $\tilde V_t^M$ is defined by the following set of recursions for measurable functions $h : \Theta_t \mapsto \reals$ 
\begin{align*}
\tilde V_t^M(h) &= \hat V_{t-1}^M \left( \E_{Q_t^M}[h] \right) + \E_{\Pi_{t-1}'}\left[ \var_{Q_t^M}(h) \right], &t>0, \\
V_t^M(h) &= \tilde V_t^M \left( w_t' (h-\E_{\Pi_t'}[h])\right), & t \geq 0, \\
\hat V_t^M(h) &= V_t^M(h) + \var_{\Pi_t'}(h), &t \geq 0.
\end{align*}
initialized by $\tilde V_0^M(h) = \var_{\eta_0^M}(h)$ for $h : \Theta_0 \mapsto \reals$.
\end{theorem}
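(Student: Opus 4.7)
The plan is to prove this CLT by induction on $t$, exploiting the observation that NSMC on the original state space can be reinterpreted as an auxiliary particle filter on the \emph{extended} state space $\Theta_t = \setX_t \times \mathsf{U}_t$. Under this lift, the algorithm's samples $\{(x_{1:t}^i, u_{0:t}^i)\}_{i=1}^N$ target $\Pi_t$ (after weighting by $w_t$) and $\Pi_t'$ (after weighting by $w_t'$, i.e.\ immediately before resampling at the next step). Because the $x_{1:t}$-marginal of $\Pi_t$ coincides with $\pi_t$, we have $\E_{\Pi_t}[\varphi^e] = \pi_t(\varphi)$, so a CLT for the standard self-normalized importance estimator on $\Theta_t$ applied to the extended function $\varphi^e$ will deliver the desired limit on the original space.

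First I would verify that each iteration of Algorithm~\ref{alg:nsmc} (in its general form) is structurally identical to one step of a standard auxiliary particle filter on $\Theta_t$, with proposal kernel $Q_t^M = \eta_t^M \kappa_t^M$, adjustment multiplier $\nu_t$, and an importance weight whose $u$-marginal collapses to $w_t$. The proper-weighting identities built into $\tau_t$ and $\kappa_t^M$ (Proposition~1) are exactly what ensure that $\Pi_t$ and $\Pi_t'$ are targeted correctly on the extended space; this lets us import the usual variance recursions rather than redoing them from scratch.

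Second I would carry out the induction. The base case $t=0$ uses that the $u_0^i \sim \eta_0^M$ are i.i.d., giving an i.i.d.\ CLT with variance $\tilde V_0^M(h) = \var_{\eta_0^M}(h)$. The inductive step decomposes one iteration into three substeps, each contributing one term to the stated recursion: propagation by $Q_t^M$ across the surviving particles adds the conditional-variance term $\E_{\Pi_{t-1}'}[\var_{Q_t^M}(h)]$ on top of the propagated mean variance $\hat V_{t-1}^M(\E_{Q_t^M}[h])$, by a Slutsky/conditional-variance argument across a triangular array that is row-wise conditionally i.i.d.; self-normalized reweighting by $w_t'$ produces $V_t^M(h) = \tilde V_t^M(w_t'(h - \E_{\Pi_t'}[h]))$ via the delta method; and multinomial resampling injects the within-sample variance $\var_{\Pi_t'}(h)$, yielding $\hat V_t^M$. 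Applying the recursion at time $t$ with the centered choice $h = \omega_t(\varphi^e - \pi_t(\varphi))$ and one final delta-method pass to handle the self-normalization $\sum_i w_t^i \varphi(x_{1:t}^i)/\sum_i w_t^i$ yields the claimed $\Sigma_t^M(\varphi) = \tilde V_t^M(\omega_t(\varphi^e - \E_{\Pi_t}[\varphi^e]))$.

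The main obstacle will be propagating the moment and measurability conditions through the induction. The class $\Phi_t$ is defined recursively precisely so that $h \in \Phi_t$ implies $(x_{1:t-1},u_{0:t-1}) \mapsto \E_{Q_t^M}[w_t' h] \in \Phi_{t-1}$, which is exactly what is needed to invoke a Lindeberg-type uniform-integrability condition at the next lower level of the induction. Verifying that the stated $(2+\delta)$-moment hypothesis on $\varphi^e$ indeed places every intermediate function arising in the recursion into the appropriate $\Phi_s$, and controlling the triangular-array structure of the particles at each step, is the real technical content of the proof. Once these bookkeeping conditions are in hand, the Feynman--Kac-style variance decomposition machinery of \citet{naessethLS2015nestedARXIV} (itself an extended-space adaptation of \citet{DelMoral:2004}) closes the induction and produces the stated asymptotic variance.
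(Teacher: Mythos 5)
Your proposal is correct and follows essentially the same route as the paper, which defers the detailed argument to the appendix of \citet{naessethLS2015nestedARXIV}: there the result is obtained precisely by recognising \nsmc as a standard auxiliary \smc sampler on the extended space $\Theta_t$ targeting $\Pi_t$ (whose $x_{1:t}$-marginal is $\pi_t$ by proper weighting) and then invoking the usual three-step variance induction --- propagation, reweighting, resampling --- that the recursions for $\tilde V_t^M$, $V_t^M$ and $\hat V_t^M$ encode, with the class $\Phi_t$ propagating the $(2+\delta)$-moment conditions down the induction exactly as you describe.
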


% NEW STUFF
\subsubsection*{Approximating the Fully Adapted SMC}
When we are approximating the fully adapted \smc, \ie when we have $q_t(x_{t}|x_{1:t-1})\propto \frac{\gamma_t(x_{1:t})}{\gamma_{t-1}(x_{1:t-1})}$ and $\nu_t(x_{1:t},u_t) = \tau_{t+1}(u_t)$, we can make significant simplifications of the expressions in the general central limit theorem above. Specifically we get that
\begin{align*}
\Pi_t'(x_{1:t},u_{0:t}) &= \frac{\tau_{t+1}(u_t)}{Z_{t+1}} \Gamma_t(x_{1:t},u_{0:t}),\\
w_t'(x_{1:t},u_{0:t}) &= \frac{Z_t}{Z_{t+1}} \tau_{t+1}(u_t),\\
\omega_t(x_{1:t},u_{0:t}) & = 1.
\end{align*}

% Lemma
\begin{lemma}
\label{lem:sum}
The asymptotic variance $\Sigma_t^M(\varphi)$ in Theorem~\ref{thm:gen:clt} when approximating the fully adapted \smc is given by
\begin{align}
\Sigma_t^M(\varphi) = \var_{\eta_0^M}(h_0) + \sum_{s=1}^t \var_{\Pi_{s-1}',Q_s^M}(h_s),
\end{align}
for $h_s$ defined by
\begin{align*}
h_t &= \varphi^e - \E_{\Pi_t}[\varphi^e],\\
h_s &= \frac{Z_s}{Z_{s+1}} \tau_{s+1}(u_s)\left(\E_{Q_{s+1}^M}[h_{s+1}] - \E_{\Pi_s'}\left[ \E_{Q_{s+1}^M}[h_{s+1}]\right] \right), & 1 \leq  s \leq t-1,\\
h_0 &= \frac{1}{Z_1} \tau_1(u_0) \left(\E_{Q_1^M}[h_1] - \E_{\Pi_0'}\left[ \E_{Q_1^M}[h_1]\right] \right),
\end{align*}
where $\Pi_0'(u_0) = \frac{\tau_1(u_0)}{Z_1} \eta_0^M(u_0)$.
\end{lemma}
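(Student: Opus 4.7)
The plan is to invoke the general CLT of Theorem~\ref{thm:gen:clt} and then exploit the fully-adapted simplifications $\omega_t \equiv 1$ and $w_t' = \frac{Z_t}{Z_{t+1}}\tau_{t+1}(u_t)$ to unroll the recursive variance functionals $\tilde V_\cdot^M, V_\cdot^M, \hat V_\cdot^M$ into a telescoping sum. First I would note that $\omega_t \equiv 1$ gives
\begin{align*}
\Sigma_t^M(\varphi) = \tilde V_t^M\!\left(\varphi^e - \E_{\Pi_t}[\varphi^e]\right) = \tilde V_t^M(h_t),
\end{align*}
so the task reduces to proving by induction on $t$ that $\tilde V_t^M(h_t) = \tilde V_{t-1}^M(h_{t-1}) + \var_{\Pi_{t-1}',Q_t^M}(h_t)$.

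For the inductive step I would chain the three recursions. Setting $g_t(x_{1:t-1},u_{0:t-1}) \eqdef \E_{Q_t^M}[h_t]$, the $\tilde V$-recursion gives $\tilde V_t^M(h_t) = \hat V_{t-1}^M(g_t) + \E_{\Pi_{t-1}'}\!\left[\var_{Q_t^M}(h_t)\right]$. Expanding $\hat V_{t-1}^M(g_t) = V_{t-1}^M(g_t) + \var_{\Pi_{t-1}'}(g_t)$ and then $V_{t-1}^M(g_t) = \tilde V_{t-1}^M\!\left(w_{t-1}'(g_t - \E_{\Pi_{t-1}'}[g_t])\right)$, the crucial observation is that under full adaptation
\begin{align*}
w_{t-1}'\bigl(g_t - \E_{\Pi_{t-1}'}[g_t]\bigr) = \frac{Z_{t-1}}{Z_t}\tau_t(u_{t-1})\bigl(\E_{Q_t^M}[h_t] - \E_{\Pi_{t-1}'}[\E_{Q_t^M}[h_t]]\bigr) = h_{t-1},
\end{align*}
which matches the definition of $h_{t-1}$ exactly. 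Hence $V_{t-1}^M(g_t) = \tilde V_{t-1}^M(h_{t-1})$.

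To collapse the two leftover terms I would apply the law of total variance:
\begin{align*}
\var_{\Pi_{t-1}'}(g_t) + \E_{\Pi_{t-1}'}\!\left[\var_{Q_t^M}(h_t)\right] = \var_{\Pi_{t-1}',Q_t^M}(h_t),
\end{align*}
since $g_t = \E_{Q_t^M}[h_t]$. Combining these yields the desired one-step identity. Unrolling down to $s=1$ gives $\tilde V_t^M(h_t) = \tilde V_0^M(h_0) + \sum_{s=1}^{t}\var_{\Pi_{s-1}',Q_s^M}(h_s)$, and the base case is immediate from the initialization $\tilde V_0^M(h) = \var_{\eta_0^M}(h)$, noting that $h_0$ is indeed a function on $\Theta_0$.

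The main obstacle is bookkeeping: verifying that, under the fully-adapted identifications, the quantity $w_{s-1}'\bigl(\E_{Q_s^M}[h_s] - \E_{\Pi_{s-1}'}[\E_{Q_s^M}[h_s]]\bigr)$ produced by the $V$-recursion matches the prescribed $h_{s-1}$ at every layer, and that the measurability/integrability conditions defining $\Phi_\cdot$ in Theorem~\ref{thm:gen:clt} propagate backward through the induction. Once the one-step identity is established, the telescoping and law-of-total-variance step are mechanical.
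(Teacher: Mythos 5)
Your proposal is correct and follows essentially the same route as the paper: unroll the $\tilde V^M_\cdot$, $\hat V^M_\cdot$, $V^M_\cdot$ recursions of Theorem~\ref{thm:gen:clt}, use the fully adapted identities $\omega_t \equiv 1$ and $w_{t-1}' = \tfrac{Z_{t-1}}{Z_t}\tau_t(u_{t-1})$ to recognize the argument of $\tilde V_{t-1}^M$ as $h_{t-1}$, and combine the two residual terms via the law of total variance before telescoping to the $\tilde V_0^M = \var_{\eta_0^M}$ base case. Your write-up is in fact slightly more explicit than the paper's, which compresses the total-variance step into an ``$=\ldots=$''.
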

\begin{proof}
For a function $h_t : \Theta_t \mapsto \reals$ we have by Theorem~\ref{thm:gen:clt} that
\begin{align*}
\tilde V_t^M(h_t) &= \hat V_{t-1}^M \left( \E_{Q_t^M}[h_t] \right) + \E_{\Pi_{t-1}'}\left[ \var_{Q_t^M}(h_t) \right] \\
&= V_{t-1}^M\left( \E_{Q_t^M}[h_t] \right)+ \var_{\Pi_{t-1}'}\left(\E_{Q_t^M}[h_t]\right) +\E_{\Pi_{t-1}'}\left[ \var_{Q_t^M}(h_t) \right] \\
&= \tilde V_{t-1}^M \left( w_{t-1}' (\E_{Q_t^M}[h_t] -\E_{\Pi_{t-1}'}[\E_{Q_t^M}[h_t] ])\right)+ \var_{\Pi_{t-1}'}\left(\E_{Q_t^M}[h_t]\right) +\E_{\Pi_{t-1}'}\left[ \var_{Q_t^M}(h_t) \right] \\
&= \ldots = \tilde V_{t-1}^M \left(\frac{Z_{t-1}}{Z_t} \tau_t(u_{t-1}) \left(\E_{Q_t^M}[h_t] -\E_{\Pi_{t-1}'}[\E_{Q_t^M}[h_t] ] \right)\right) + \var_{\Pi_{t-1}',Q_t^M}\left(h_t\right)
\end{align*}
Recursion with $h_{t-1} \eqdef \frac{Z_{t-1}}{Z_t} \tau_t(u_{t-1}) \left(\E_{Q_t^M}[h_t] -\E_{\Pi_{t-1}'}[\E_{Q_t^M}[h_t] ] \right)$ gives the result.
\end{proof}

% Lemma
\begin{lemma}
\label{lem:hs}
\begin{align}
h_t &= \varphi - \pi_t(\varphi),\\
h_s &= \frac{Z_s}{Z_{s+1}} \tau_{s+1}(u_s) \int \left(\varphi(x_{1:t})-\pi_t(\varphi) \right) \frac{\pi_t(x_{1:t})}{\pi_{s+1}(x_{1:s+1})} \kappa_{s+1}^M(x_{s+1}|u_s) \myd x_{s+1:t}, & 1 \leq  s \leq t-1,\label{eq:hs}\\
h_0 &= \frac{1}{Z_1} \tau_1(u_0) \int \left(\varphi(x_{1:t})-\pi_t(\varphi) \right) \frac{\pi_t(x_{1:t})}{\pi_{1}(x_{1})} \kappa_{1}^M(x_{1}|u_0) \myd x_{1:t},
\end{align}
\end{lemma}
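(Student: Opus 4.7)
The plan is to proceed by downward induction on $s$, from $s = t$ down to $s = 0$. For the base case $s = t$ I would note that the extension $\varphi^e$ depends only on $x_{1:t}$; once I have verified that $\Pi_t$ marginalizes to $\pi_t$ in its $x$-coordinates---a straightforward auxiliary induction using the proper-weighting identity \eqref{eq:nsmc:prop} combined with the recursive definition of $\Gamma_t$---I immediately obtain $h_t = \varphi - \pi_t(\varphi)$.

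For the inductive step I would introduce the shorthand $G_s(x_{1:s},u_s) \eqdef \int (\varphi(x_{1:t}) - \pi_t(\varphi)) \frac{\pi_t(x_{1:t})}{\pi_{s+1}(x_{1:s+1})} \kappa_{s+1}^M(x_{s+1}|u_s) \myd x_{s+1:t}$ for the integral on the right-hand side of the claim, and attack the two nested expectations in the Lemma~\ref{lem:sum} recursion separately. For the first piece, $\E_{Q_{s+1}^M}[h_{s+1}]$, I would plug in the inductive formula for $h_{s+1}$ and use the proper-weighting identity at level $s+2$ to carry out the $u_{s+1}$-integral; the resulting factor $\frac{Z_{s+1}}{Z_{s+2}} \cdot \frac{\gamma_{s+2}}{\gamma_{s+1}} = \frac{\pi_{s+2}}{\pi_{s+1}}$ telescopes against the $\pi_{s+2}$-denominator already sitting inside $G_{s+1}$, leaving exactly $G_s(x_{1:s},u_s)$. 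The edge case $s+1 = t$ is even simpler and follows directly from $h_t = \varphi - \pi_t(\varphi)$ with the degenerate ratio $\pi_t/\pi_t = 1$.

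For the centering term I would show $\E_{\Pi_s'}[G_s] = 0$. The crucial identity here is $Q_{s+1}^M \Pi_s' = \Pi_{s+1}$, which for fully adapted SMC falls out of the definitions $\Pi_s' = \tau_{s+1}(u_s)\Gamma_s/Z_{s+1}$, $Q_{s+1}^M = \kappa_{s+1}^M \eta_{s+1}^M$, and $\Gamma_{s+1} = \tau_{s+1}(u_s) \eta_{s+1}^M \kappa_{s+1}^M \Gamma_s$ once the $Z_{s+1}$-normalizations cancel. Applying the tower property gives $\E_{\Pi_s'}[G_s] = \E_{\Pi_{s+1}}[h_{s+1}]$, and substituting the inductive formula for $h_{s+1}$ absorbs the factor $\frac{Z_{s+1}}{Z_{s+2}}\tau_{s+2}(u_{s+1})$ into the measure---which is precisely the reweighting that turns $\Pi_{s+1}$ into $\Pi_{s+1}'$---so the result is $\E_{\Pi_{s+1}'}[G_{s+1}]$. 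Iterating produces the chain $\E_{\Pi_s'}[G_s] = \E_{\Pi_{s+1}'}[G_{s+1}] = \cdots = \E_{\Pi_{t-1}'}[G_{t-1}] = \E_{\Pi_t}[\varphi - \pi_t(\varphi)] = 0$. Combining the two pieces gives $h_s = \frac{Z_s}{Z_{s+1}} \tau_{s+1}(u_s) G_s$, and the $s = 0$ case is handled by exactly the same argument with the convention $Z_0 \eqdef 1$.

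The hard part, I expect, will be the bookkeeping rather than any single estimate: keeping track of which expectations are under $\Pi_s$ versus $\Pi_s'$ and pinpointing exactly where the $\tau_{s+1}$-factor converts one into the other. Once those relations are written out, every manipulation reduces to either the proper-weighting identity \eqref{eq:nsmc:prop} or a tower/Fubini argument, with no genuinely new ingredients needed.
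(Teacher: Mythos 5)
Your proposal is correct and follows essentially the same route as the paper: a downward induction on $s$ driven by the recursion of Lemma~\ref{lem:sum}, using the proper-weighting identity \eqref{eq:nsmc:prop} to collapse the $u_{s+1}$-integral into the ratio $\pi_{s+2}/\pi_{s+1}$ and the identity $Q_{s+1}^M\Pi_s'=\Pi_{s+1}$ to telescope the centering term to $\E_{\Pi_t}[\varphi^e-\pi_t(\varphi)]=0$. The only difference is that you spell out the two computations the paper abbreviates with ``$=\ldots=$''; your bookkeeping of $\Pi_s$ versus $\Pi_s'$ via the $\frac{Z_s}{Z_{s+1}}\tau_{s+1}(u_s)$ reweighting is exactly what those ellipses hide.
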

\begin{proof}
The first, $h_t$, follows straightforwardly by the definition of $\varphi^e$ and $\Pi_t$. The remaining will be proved by induction. Assume that for $s \leq t-1$ \eqref{eq:hs} holds. We will now show that this in fact holds for both $h_{t-1}$ and $h_{s-1}$; thus the result follows. Start by considering $h_{t-1}$ using the definition in Lemma~\ref{lem:sum}
\begin{align*}
h_{t-1} &= \frac{Z_{t-1}}{Z_{t}} \tau_{t}(u_{t-1})\left(\E_{Q_{t}^M}[h_{t}] - \E_{\Pi_{t-1}'}\left[ \E_{Q_{t}^M}[h_{t}]\right] \right) = \frac{Z_{t-1}}{Z_{t}} \tau_{t}(u_{t-1})\left(\E_{Q_{t}^M}\left[\varphi - \pi_t(\varphi)\right] - 0\right)\\
&=\frac{Z_{t-1}}{Z_{t}} \tau_{t}(u_{t-1})\left(\int \varphi(x_{1:t}) \kappa_t^M(x_t|u_{t-1}) \myd x_t - \pi_t(\varphi)\right).
\end{align*}
Now, for $h_{s-1}$ let us start by studying $\E_{Q_{s}^M}[h_{s}]$ and $ \E_{\Pi_{s-1}'}\left[ \E_{Q_{s}^M}[h_{s}]\right]$
\begin{align*}
\E_{Q_{s}^M}[h_{s}] &= \E_{Q_{s}^M}\left[\frac{Z_s}{Z_{s+1}} \tau_{s+1}(u_s) \int \left(\varphi(x_{1:t})-\pi_t(\varphi) \right) \frac{\pi_t(x_{1:t})}{\pi_{s+1}(x_{1:s+1})} \kappa_{s+1}^M(x_{s+1}|u_s) \myd x_{s+1:t}\right] \\
&=\ldots = \int \left(\varphi(x_{1:t}-\pi_t(\varphi)\right)\frac{\pi_t(x_{1:t})}{\pi_{s}(x_{1:s})}\kappa_s^M(x_s|u_{s-1}) \myd x_{s:t},\\
 \E_{\Pi_{s-1}'}\left[ \E_{Q_{s}^M}[h_{s}]\right] &= \ldots = 0.
\end{align*}
This gives us that
\begin{align*}
h_{s-1} &= \frac{Z_{s-1}}{Z_{s}} \tau_{s}(u_{s-1})\left(\E_{Q_{s}^M}[h_{s}] - \E_{\Pi_{s-1}'}\left[ \E_{Q_{s}^M}[h_{s}]\right] \right) \\
&= \frac{Z_{s-1}}{Z_{s}} \tau_{s}(u_{s-1}) \int \left(\varphi(x_{1:t})-\pi_t(\varphi) \right) \frac{\pi_t(x_{1:t})}{\pi_{s}(x_{1:s})} \kappa_{s}^M(x_{s}|u_{s-1}) \myd x_{s:t}.
\end{align*}
The results follows by noting that the procedure is the same for $h_0$ taking into account edge effects, \ie $Z_0 = 1$.
\end{proof}

% Lemma
\begin{lemma}
\label{lem:varhs}
\begin{align*}
\var_{\Pi_{t-1}',Q_t^M}(h_t) &= \pi_t\left((\varphi-\pi_t(\varphi))^2\right),\\
\var_{\Pi_{s-1}',Q_s^M}(h_s) &= \int \Bigg[ \frac{Z_s^2 \tau_{s+1}(u_s)^2}{Z_{s+1}^2} \left( \int \left(\varphi(x_{1:t}) -\pi_t(\varphi) \right) \frac{\pi_t(x_{1:t})}{\pi_{s+1}(x_{1:s+1})} \kappa_{s+1}^M(x_{s+1}|u_s) \myd x_{s+1:t} \right)^2 \\
&\quad \eta_s^M(u_s|x_{1:s-1})\pi_s(x_{1:s}) \Bigg] \myd u_s \myd x_{1:s}, \quad\quad 1 \leq s \leq t-1,\\
\var_{\eta_0^M}(h_0) &= \int \frac{\tau_1(u_0)^2}{Z_1^2} \left( \int \left(\varphi(x_{1:t}) -\pi_t(\varphi) \right) \frac{\pi_t(x_{1:t})}{\pi_1(x_1)} \kappa_1^M(x_1|u_0) \myd x_{1:t} \right)^2 \eta_0^M(u_0)\myd u_0
\end{align*}
\end{lemma}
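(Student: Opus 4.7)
\textbf{Proof plan for Lemma~\ref{lem:varhs}.} The strategy is to reduce each variance to an explicit second-moment integral by (a) identifying the correct marginal of the outer joint distribution, (b) checking that the integrand $h_s$ has mean zero under that marginal, and (c) substituting the closed form for $h_s$ from Lemma~\ref{lem:hs}.

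First I would establish the marginalization facts used throughout. In the fully adapted setting we have
\[
\Gamma_s(x_{1:s},u_{0:s}) = \tau_s(u_{s-1})\,\eta_s^M(u_s|x_{1:s})\,\kappa_s^M(x_s|u_{s-1})\,\Gamma_{s-1}(x_{1:s-1},u_{0:s-1}),
\]
so that $\Pi_{s-1}'\cdot Q_s^M = \Pi_s$. The proper weighting identity \eqref{eq:nsmc:prop} then yields, by induction on $s$, the two key facts that I will use repeatedly: the $x_{1:s}$-marginal of $\Pi_s$ equals $\pi_s(x_{1:s})$, and the conditional of $u_s$ given $x_{1:s}$ under $\Pi_s$ equals $\eta_s^M(u_s|x_{1:s})$. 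Equivalently, the $(x_{1:s},u_s)$-marginal of $\Pi_{s-1}'\cdot Q_s^M$ is $\pi_s(x_{1:s})\eta_s^M(u_s|x_{1:s})$.

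With this in hand the three parts are short. For $h_t=\varphi-\pi_t(\varphi)$, which depends only on $x_{1:t}$, the first fact immediately gives $\E_{\Pi_{t-1}',Q_t^M}[h_t]=0$ and $\var_{\Pi_{t-1}',Q_t^M}(h_t)=\pi_t((\varphi-\pi_t(\varphi))^2)$. For $1\le s\le t-1$, I first check that $\E_{\Pi_{s-1}',Q_s^M}[h_s]=0$: since $\Pi_s' = (Z_s/Z_{s+1})\tau_{s+1}(u_s)\Pi_s$, the identity $\E_{\Pi_s}[\tau_{s+1}(u_s)\,G(u_s)] = (Z_{s+1}/Z_s)\,\E_{\Pi_s'}[G(u_s)]$ applied to $G=\E_{Q_{s+1}^M}[h_{s+1}]$ and to $G\equiv 1$ shows that the two terms in the centred form of $h_s$ cancel. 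The variance then equals the second moment, and substituting the expression for $h_s$ from Lemma~\ref{lem:hs} together with the marginalization fact above gives exactly the stated integral over $(x_{1:s},u_s)$ against $\pi_s(x_{1:s})\eta_s^M(u_s|x_{1:s})$. For $h_0$ the argument is analogous but slightly more delicate since the outer distribution is $\eta_0^M$ rather than a joint; applying the proper weighting identity $\int \tau_1(u_0)\kappa_1^M(x_1|u_0)\eta_0^M(u_0)\,\myd u_0 = \gamma_1(x_1)$ (recalling $\gamma_0\equiv 1$) inside $\E_{\eta_0^M}[h_0]$ collapses the expression to $\int(\varphi-\pi_t(\varphi))\pi_t(x_{1:t})\,\myd x_{1:t}=0$, and the variance is then the second moment as displayed.

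The main obstacle I anticipate is not algebraic but bookkeeping: one must track which version of the auxiliary measure ($\Pi_s$ vs.\ $\Pi_s'$) one is working with and when the $(Z_s/Z_{s+1})\tau_{s+1}$ reweighting converts between them, and one must justify the marginalization identities for $\Pi_s$ by induction using the proper weighting condition rather than just asserting them. Once those bookkeeping identities are in place, each of the three variance formulas follows by plugging in the expression for $h_s$ from Lemma~\ref{lem:hs} and recognizing the resulting integral.
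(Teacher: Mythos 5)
Your proposal is correct and follows essentially the same route as the paper's proof: identify $\Pi_{s-1}'Q_s^M=\Pi_s$, use $\E[h_s]=0$ to reduce the variance to a second moment, marginalize to $\pi_s(x_{1:s})\eta_s^M(u_s|x_{1:s})$, and substitute the expression for $h_s$ from Lemma~\ref{lem:hs}, with the $s=0$ case handled as an edge case. You are somewhat more explicit than the paper in justifying the marginalization identities via induction on the proper weighting condition, but this is a matter of detail rather than a different argument.
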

\begin{proof}
We get the first equality
\begin{align*}
\var_{\Pi_{t-1}',Q_t^M}(h_t) &= \E_{\Pi_{t-1}',Q_t^M}\left[ \left( \varphi - \pi_t(\varphi) \right)^2 \right] - \left(\E_{\Pi_{t-1}',Q_t^M}\left[ \varphi - \pi_t(\varphi) \right] \right)^2 = \pi_t\left( (\varphi-\pi_t(\varphi)^2\right),
\end{align*}
due to Lemma~\ref{lem:hs} and because $\Pi_{t-1}'(x_{1:t-1},u_{0:t-1}) Q_t^M(x_t,u_t|x_{1:t-1},u_{0:t-1}) = \Pi_t(x_{1:t},u_{0:t})$.
\begin{align*}
&\var_{\Pi_{s-1}',Q_s^M}(h_s) = \E_{\Pi_{s-1}',Q_s^M}\left[h_s^2\right] - \left( \E_{\Pi_{s-1}',Q_s^M}\left[h_s\right] \right)^2 = \E_{\Pi_{s-1}',Q_s^M}\left[h_s^2\right] \\
&=\int h_s(x_{1:s},u_s)^2 \Pi_s(x_{1:s},u_{0:s}) \myd u_{0:s} \myd x_{1:s} = \int h_s(x_{1:s},u_s)^2 \eta_s^M(u_s|x_{1:s}) \pi_s(x_{1:s}) \myd u_{s} \myd x_{1:s} \\
&= \int \Bigg[ \frac{Z_s^2 \tau_{s+1}(u_s)^2}{Z_{s+1}^2} \left( \int \left(\varphi(x_{1:t}) -\pi_t(\varphi) \right) \frac{\pi_t(x_{1:t})}{\pi_{s+1}(x_{1:s+1})} \kappa_{s+1}^M(x_{s+1}|u_s) \myd x_{s+1:t} \right)^2 \\
&\quad \eta_s^M(u_s|x_{1:s-1})\pi_s(x_{1:s}) \Bigg] \myd u_s \myd x_{1:s}, \quad\quad 1 \leq s \leq t-1,
\end{align*}
where the second equality follows by noting that $\E_{\Pi_{s-1}',Q_s^M}\left[h_s\right] = 0$. Analogously to Lemma~\ref{lem:hs} the expression for $s=0$ follows by taking into account the edge effects.
\end{proof}

Finally, with Lemmas~\ref{lem:sum}, \ref{lem:hs}, and \ref{lem:varhs} together the result, \ie Theorem~1 of the main manuscript, follows.

% ======================================================================
%                           PROPOSITION CONVERGENCE
% ======================================================================
\subsubsection*{Proof of Proposition~\ref{prop:convfapf} in the Main Manuscript}
% Approx. assumption
\begin{assumption}[Approximation property]
\label{ass:conv}
The approximation of $q_s(x_s|x_{1:s-1}) \propto \frac{\pi_s(x_{1:s})}{\pi_{s-1}(x_{1:s-1})}$ and $\nu_{s-1}(x_{1:s-1}) = \int \frac{\pi_s(x_{1:s})}{\pi_{s-1}(x_{1:s-1})} \myd x_s $ based on $\eta_s^M$, $\kappa_{s+1}^M$ and $\tau_s$ satisfies
\begin{align}
\Psi_{s,t}^M(x_{1:s};\varphi) \convD \frac{\pi_t(x_{1:s})^2}{\pi_{s}(x_{1:s})^2} \left( \int \varphi(x_{1:t}) \pi_t(x_{s+1:t}|x_{1:s})\myd x_{s+1:t} - \pi_t(\varphi)\right)^2, ~\text{as}~M \to \infty.
\label{eq:PsiConvD}
\end{align}
Furthermore, assume that $\sigma_{0,t}^M(\varphi) \convD 0$ as $M \to \infty$.
\end{assumption}
\begin{lemma}
The strong mixing assumption of the main manuscript,
\begin{align*}
\lambda_{s+1,t}^{-} \cdot \pi_t(x_{s+2:t}|x_{1:s+1}) \leq \frac{\pi_t(x_{1:t})}{\pi_{s+1}(x_{1:s+1})} \leq \lambda_{s+1,t}^{+} \cdot \pi_t(x_{s+2:t}|x_{1:s+1}),
\end{align*}
where $0 < \lambda_{s+1,t}^-, \lambda_{s+1,t}^+ < \infty$, implies that
\begin{align}
\Psi_{s,t}^M(x_{1:s};\varphi) \convD \frac{\pi_t(x_{1:s})^2}{\pi_{s}(x_{1:s})^2} \left( \int \varphi(x_{1:t}) \pi_t(x_{s+1:t}|x_{1:s})\myd x_{s+1:t} - \pi_t(\varphi)\right)^2, ~\text{as}~M \to \infty.
\end{align}
\end{lemma}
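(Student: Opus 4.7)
The plan is to recognise $\Psi_{s,t}^M$ as an expectation (under the internal sampler) of the product of $\tau_{s+1}(u_s)^2$ with the square of a bounded functional of $\kappa_{s+1}^M$, and then to pass to the limit using dominated convergence together with the consistency results for the internal SMC sampler. The strong mixing assumption enters only through a deterministic $L^\infty$ bound that makes the dominated-convergence step legitimate; it is not used in the consistency step itself.

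First, I will introduce the shorthands
\begin{align*}
\phi_s(x_{s+1}) &\eqdef \int (\varphi(x_{1:t})-\pi_t(\varphi)) \frac{\pi_t(x_{1:t})}{\pi_{s+1}(x_{1:s+1})} \myd x_{s+2:t},\\
H_M(u_s) &\eqdef \int \phi_s(x_{s+1}) \kappa_{s+1}^M(x_{s+1}|u_s) \myd x_{s+1},
\end{align*}
so that $\Psi_{s,t}^M(x_{1:s};\varphi) = (Z_s/Z_{s+1})^2\,\E_{\eta_s^M}[\tau_{s+1}(u_s)^2 H_M(u_s)^2]$. Applying the upper half of the strong mixing bound under the integral defining $\phi_s$ gives $|\phi_s(x_{s+1})| \leq \lambda_{s+1,t}^{+} \int |\varphi-\pi_t(\varphi)|\,\pi_t(x_{s+2:t}|x_{1:s+1}) \myd x_{s+2:t}$, which under the regularity conditions on $\varphi$ inherited from Theorem~\ref{thm:clt} is bounded by a finite constant $K_{\varphi}$; hence $|H_M(u_s)| \leq K_{\varphi}$ uniformly in $u_s$ and $M$.

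Second, I will invoke the usual consistency results for the internal SMC sampler (see \eg \citet[Proposition~7.4.1]{DelMoral:2004}): as $M \to \infty$ the normalisation constant estimate converges in probability to $\tau_{s+1}(u_s) \to \nu_s(x_{1:s}) = (Z_{s+1}/Z_s)(\pi_{s+1}(x_{1:s})/\pi_s(x_{1:s}))$, while the backward-simulation marginal yields $H_M(u_s) \to \int \phi_s(x_{s+1}) \pi_{s+1}(x_{s+1}|x_{1:s}) \myd x_{s+1}$ in probability. Combined with the $L^\infty$ bound on $H_M$ and a uniform-in-$M$ moment bound on $\tau_{s+1}^2$ (a standard consequence of SMC stability), dominated convergence (or a Vitali-style argument) then delivers
\begin{align*}
\E_{\eta_s^M}[\tau_{s+1}^2 H_M^2] \;\longrightarrow\; \nu_s(x_{1:s})^2 \left(\int \phi_s(x_{s+1})\, \pi_{s+1}(x_{s+1}|x_{1:s}) \myd x_{s+1}\right)^2.
\end{align*}

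Finally, I will simplify the limit. Using $\pi_{s+1}(x_{s+1}|x_{1:s}) = \pi_{s+1}(x_{1:s+1})/\pi_{s+1}(x_{1:s})$ and Fubini one obtains $\int \phi_s \pi_{s+1}(\cdot|x_{1:s})\myd x_{s+1} = (\pi_t(x_{1:s})/\pi_{s+1}(x_{1:s})) \bigl(\int \varphi\, \pi_t(x_{s+1:t}|x_{1:s}) \myd x_{s+1:t} - \pi_t(\varphi)\bigr)$; multiplying by $(Z_s/Z_{s+1})^2 \nu_s^2 = \pi_{s+1}(x_{1:s})^2/\pi_s(x_{1:s})^2$ cancels the $\pi_{s+1}(x_{1:s})^2$ factor and produces exactly the right-hand side of \eqref{eq:PsiConvD}. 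The main technical obstacle is the dominated-convergence step: strong mixing dominates $H_M$ but not $\tau_{s+1}$, so uniform integrability of $\tau_{s+1}^2$ has to be extracted separately from standard internal-SMC stability (bounded-potential assumptions or a direct $L^{2+\delta}$ estimate on the normalisation constant estimator), and pinning this down without imposing ad-hoc conditions on the internal proposals and targets is where the care is required; everything else follows mechanically.
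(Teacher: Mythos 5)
Your argument is correct and follows essentially the same route as the paper, whose own proof of this lemma is a single sentence citing standard SMC consistency results from \citet{DelMoral:2004} under strong mixing; your version just makes explicit the decomposition of $\Psi_{s,t}^M$ into $\tau_{s+1}(u_s)^2$ times the square of a functional $H_M$ that the mixing bound renders bounded, the identification of the limits of $\tau_{s+1}$ and $H_M$, and the algebra that cancels the $\pi_{s+1}(x_{1:s})^2$ factors. The obstacle you flag at the end --- that strong mixing controls $H_M$ but says nothing about uniform integrability of $\tau_{s+1}(u_s)^2$ under $\eta_s^M$, which must be extracted from separate stability properties of the internal sampler --- is genuine and is likewise left implicit in the paper's one-line proof (as is your tacit strengthening of the Theorem~\ref{thm:clt} moment conditions to a bound on the conditional absolute moment of $\varphi$ needed for $|H_M|\leq K_\varphi$).
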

\begin{proof}
% IMPROVE FOR SUBMISSION
Under the strong mixing assumption and given that we use a \smc method to generate properly weighted samples the result follows from standard \smc results \citep{DelMoral:2004}.
\end{proof}

\begin{theorem}[Vitali Convergence Theorem]
\label{thm:vct}
If $\{\Psi_{s,t}^M(x_{1:s};\varphi)\}$ is uniformly integrable and if $\Psi_{s,t}^M(x_{1:s};\varphi) \convD \Psi_{s,t}(x_{1:s};\varphi)$, then
\begin{align*}
\lim_{M\to\infty}\int \Psi_{s,t}^M(x_{1:s};\varphi) \pi_s(x_{1:s}) \myd x_{1:s} = \int \Psi_{s,t}(x_{1:s};\varphi) \pi_s(x_{1:s}) \myd x_{1:s}.
\end{align*}
\end{theorem}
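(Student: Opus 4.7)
The plan is to deduce the result from the classical Vitali convergence theorem for almost-sure convergence plus uniform integrability, after first using Skorohod's representation theorem to strengthen the hypothesized convergence in distribution to almost-sure convergence on an auxiliary probability space. Regard each $\Psi^M \equiv \Psi_{s,t}^M(\,\cdot\,;\varphi)$ and $\Psi \equiv \Psi_{s,t}(\,\cdot\,;\varphi)$ as a real-valued random variable on $(\setX_s,\pi_s)$; then the two integrals in the conclusion are exactly the expectations $\E_{\pi_s}[\Psi^M]$ and $\E_{\pi_s}[\Psi]$.

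First I would apply Skorohod's representation theorem: since $\Psi^M \convD \Psi$ in $\reals$, there exist a probability space $(\Omega,\mathcal{F},\Prb)$ and random variables $\widetilde\Psi^M, \widetilde\Psi$ on $\Omega$ with $\widetilde\Psi^M$ having the same law as $\Psi^M$, $\widetilde\Psi$ having the same law as $\Psi$, and $\widetilde\Psi^M \to \widetilde\Psi$ almost surely under $\Prb$. Uniform integrability is a property of the marginal distribution alone, so the transferred family $\{\widetilde\Psi^M\}$ remains uniformly integrable under $\Prb$.

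Next I would execute the standard truncation argument underlying Vitali's theorem. Fix $\epsilon>0$ and, using uniform integrability, choose $K$ lying outside the at-most countable set of atoms of $\widetilde\Psi$ such that $\sup_M \E[|\widetilde\Psi^M|\mathbf{1}_{|\widetilde\Psi^M|>K}] < \epsilon$. Fatou's lemma then gives $\E|\widetilde\Psi| < \infty$ and $\E[|\widetilde\Psi|\mathbf{1}_{|\widetilde\Psi|>K}] \le \epsilon$. Define $\phi_K(x) = x\,\mathbf{1}_{|x|\le K}$; this map is bounded and is continuous at $\widetilde\Psi$ almost surely thanks to our choice of $K$, so the bounded convergence theorem applied to $\phi_K(\widetilde\Psi^M) \to \phi_K(\widetilde\Psi)$ yields $\E[\phi_K(\widetilde\Psi^M)] \to \E[\phi_K(\widetilde\Psi)]$. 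Combining this with the two tail bounds produces
\begin{equation*}
\limsup_{M\to\infty} \bigl|\E[\widetilde\Psi^M] - \E[\widetilde\Psi]\bigr| \le 2\epsilon,
\end{equation*}
and since $\epsilon$ was arbitrary we conclude $\E[\widetilde\Psi^M] \to \E[\widetilde\Psi]$.

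Finally, because expectations depend only on the law of a random variable, $\E[\widetilde\Psi^M] = \int \Psi_{s,t}^M(x_{1:s};\varphi)\,\pi_s(x_{1:s})\,\myd x_{1:s}$ and analogously for $\widetilde\Psi$, which is precisely the stated conclusion. The only technical point meriting care — the closest thing to an obstacle — is choosing the truncation level $K$ outside the countable set of atoms of $\widetilde\Psi$ so that the bounded convergence step is legitimate via the continuous mapping theorem; beyond that, the argument is a textbook measure-theoretic derivation and the scheme transparently explains why convergence in distribution alone is inadequate without the uniform integrability hypothesis.
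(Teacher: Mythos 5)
Your proof is correct. It differs from the paper's treatment only in that the paper offers no argument at all: the stated proof is a bare citation to Folland (Chapter 6) for the classical Vitali convergence theorem. Your self-contained derivation --- Skorohod representation to upgrade $\convD$ to almost-sure convergence on an auxiliary space, followed by the standard truncation argument with a truncation level $K$ chosen off the atoms of the limit law --- is sound, and it actually buys something the citation does not: the textbook Vitali theorem in Folland is stated for convergence in measure (or a.e.\ convergence) on a fixed measure space, whereas the hypothesis here is convergence in distribution of $\Psi_{s,t}^M$ viewed as a random variable under $\pi_s$. The Skorohod step is precisely what bridges that gap, and your observation that both uniform integrability and the expectations depend only on the marginal laws is the reason the transfer back to the original space is legitimate. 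The one delicate point, the choice of $K$ outside the countable atom set so that $\phi_K$ is a.s.\ continuous at $\widetilde\Psi$, is handled explicitly and correctly.
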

\begin{proof}
See \citet[Chapter 6]{folland1999}.% \cn{Actually just a statement of the theorem... as an exercise...}
\end{proof}

Under assumptions of uniform integrability and strong mixing (or Assumption~\ref{ass:conv}), the result now follows by using the Vitali convergence theorem~\ref{thm:vct} and noting that
\begin{align*}
\int \Psi_{s,t}(x_{1:s};\varphi) \pi_s(x_{1:s}) \myd x_{1:s} = \int \frac{\pi_t(x_{1:s})^2}{\pi_{s}(x_{1:s})} \left( \int \varphi(x_{1:t}) \pi_t(x_{s+1:t}|x_{1:s})\myd x_{s+1:t} - \pi_t(\varphi)\right)^2 \myd x_{1:s}.
\end{align*}

\newpage
% ======================================================================
%                           N vs M
% ======================================================================
\subsubsection*{Choosing $N$ and $M$}
The constants in Proposition~\ref{prop:nvsm} in the main manuscript are defined as follows
\begin{align*}
A_t &= \int x_{t,d}^2 \pi_t(x_{1:t,d}) \myd x_{1:t,d}, \\
A_s &= \int \frac{\pi_t(x_{1:s,d})^2}{\pi_s(x_{1:s,d})} \left(\int x_{t,d} \pi_t(x_{t,d}|x_{s,d}) \myd x_{t,d} \right)^2 \myd x_{1:s,d}, \\
\tilde A_s &= \int \frac{\pi_t(x_{1:s+1,d})^2}{\pi_s(x_{1:s,d}) r(x_{s+1,d}|x_{s,d})} \left(\int x_{t,d} \pi_t(x_{t,d}|x_{s+1,d}) \myd x_{t,d} \right)^2 \myd x_{1:s+1,d},\\
B_s &= \int \frac{\pi_t(x_{1:s,d})^2}{\pi_s(x_{1:s,d})} \myd x_{1:s,d}, \quad\quad \tilde B_s = \int \frac{\pi_t(x_{1:s+1,d})^2}{\pi_s(x_{1:s,d}) r(x_{s+1,d}|x_{s,d})} \myd x_{1:s+1,d},\\
C_s &= \int \frac{\pi_t(x_{1:s,d})^2}{\pi_s(x_{1:s,d})} \int x_{t,d} \pi_t(x_{t,d}|x_{s,d}) \myd x_{t,d} \myd x_{1:s,d},\\
\tilde C_s &= \int \frac{\pi_t(x_{1:s+1,d})^2}{\pi_s(x_{1:s,d}) r(x_{s+1,d}|x_{s,d})} \int x_{t,d} \pi_t(x_{t,d}|x_{s+1,d}) \myd x_{t,d} \myd x_{1:s+1,d},
\end{align*}
with $A_0 = 0, B_0 = 1, C_0 = 0$, 
\begin{align*}
\tilde A_0 &= \int \frac{\pi_t(x_{1,d})^2}{r(x_{1,d})} \left(\int x_{t,d} \pi_t(x_{t,d}|x_{1,d}) \myd x_{t,d} \right)^2 \myd x_{1,d},\\
\tilde B_0 &= \int \frac{\pi_t(x_{1,d})^2}{r(x_{1,d})} \myd x_{1,d},\\
\tilde C_0 &= \int \frac{\pi_t(x_{1,d})^2}{r(x_{1,d})} \int x_{t,d} \pi_t(x_{t,d}|x_{1,d}) \myd x_{t,d} \myd x_{1,d},
\end{align*}
and 
\begin{align*}
\tilde A_{t-1} &= \int \frac{\pi_t(x_{1:t,d})^2}{\pi_s(x_{1:t-1,d}) r(x_{t,d}|x_{t-1,d})} x_{t,d}^2 \myd x_{1:t,d},\\
\tilde B_{t-1} &= \int \frac{\pi_t(x_{1:t,d})^2}{\pi_s(x_{1:t-1,d}) r(x_{t,d}|x_{t-1,d})} \myd x_{1:t,d},\\
\tilde C_{t-1} &= \int \frac{\pi_t(x_{1:t,d})^2}{\pi_s(x_{1:t-1,d}) r(x_{t,d}|x_{t-1,d})} x_{t,d} \myd x_{1:t,d}.
\end{align*}

\subsubsection*{Proof of Proposition~\ref{prop:nvsm}}
For fully adapted \smc we have from the result in \citet{johansen2008} (see also our convergence result in the previous section) and for the model defined in the main manuscript
\begin{align*}
&\pi_t((\varphi-\pi_t(\varphi))^2) = n_x A_t\\
&\int \frac{\pi_{t}(x_{1:s})^2}{\pi_s(x_{1:s})} \left(\sum_{d=1}^{n_x}\int  x_{t,d} \pi_t(x_{s+1:t}|x_{1:s}) \myd x_{s+1:t} \right)^2 \myd x_{1:s} = \\
&=\sum_{e=1}^{n_x} \sum_{f=1}^{n_x} \int \left[ \frac{\pi_{t}(x_{1:s})^2}{\pi_s(x_{1:s})} \int  x_{t,e} \pi_t(x_{s+1:t}|x_{1:s}) \myd x_{s+1:t} \cdot \int  x_{t,f} \pi_t(x_{s+1:t}|x_{1:s}) \myd x_{s+1:t} \right]\myd x_{1:s}\\
&= \sum_{e=1}^{n_x} \sum_{f=1}^{n_x} \int \left[ \frac{\pi_{t}(x_{1:s})^2}{\pi_s(x_{1:s})} \int  x_{t,e} \pi_t(x_{t,e}|x_{s,e}) \myd x_{t,e} \cdot \int  x_{t,f} \pi_t(x_{t,f}|x_{s,f}) \myd x_{t,f} \right]\myd x_{1:s} \\
&= n_x B_s^{n_x-1} A_s + n_x(n_x-1) B_s^{n_x-2} C_s^2,
\end{align*}
with constants as defined above.

For nested \smc we have $r(x_s|x_{s-1}) = \prod_{d=1}^{n_x} r(x_{s,d}|x_{s-1,d})$ and due to the independence between dimensions we will have no dependence on internal ancestor variables in $\eta_{s}, 
\kappa_{s+1}, \tau_{s+1}$, \ie
\begin{align*}
\eta_s(u_s|x_{1:s}) &= \prod_{d=1}^{n_x} \prod_{j=1}^M r(x_{s+1,d}^j|x_{s,d}),\\
\kappa_{s+1}(x_{s+1}|u_s) &= \prod_{d=1}^{n_x} \sum_{j=1}^M \frac{w_d^j}{\sum_\ell w_d^\ell} \delta_{x_{s+1,d}^j}(\myd x_{s+1,d}),\\
\tau_{s+1}(u_s) &= \prod_{d=1}^{n_x} \frac{1}{M} \sum_{j=1}^M w_d^j,\\
w_d^j &= \frac{f(x_{s+1,d}^j|x_{s,d})g(y_{s+1,d}|x_{s+1,d}^j)}{r(x_{s+1,d}^j|x_{s,d})}.
\end{align*}
For the variance contribution of the final step we obtain $\pi_t((\varphi-\pi_t(\varphi))^2) = n_x \sigma_x^2$, the same result as fully adapted \smc. The remaining can be calculated as follows
\begin{align}
&\int \Bigg[ \frac{Z_s^2 \tau_{s+1}(u_s)^2}{Z_{s+1}^2} \left( \int \varphi(x_{1:t}) \frac{\pi_t(x_{1:t})}{\pi_{s+1}(x_{1:s+1})} \kappa_{s+1}^M(x_{s+1}|u_s) \myd x_{s+1:t} \right)^2 \eta_s^M(u_s|x_{1:s-1})\pi_s(x_{1:s}) \Bigg] \myd u_s \myd x_{1:s} \nonumber\\
&= \frac{1}{p(y_{s+1}|y_{1:s})^2} \int \Bigg[ \tau_{s+1}(u_s)^2 \frac{1}{M^{2n_x} \tau_{s+1}(u_s)^2}
\eta_s^M(u_s|x_{1:s-1})\pi_s(x_{1:s}) \nonumber\\
& \quad \cdot \left( \sum_{e=1}^{n_x} \left[\sum_{j=1}^M w_e^j \frac{\int x_{t,e} \pi_t(x_{1:s,e},x_{s+1,e}^j,x_{t,e}) \myd x_{t,e}}{\pi_{s+1}(x_{1:s,e},x_{s+1,e}^j)} \cdot \prod_{d\neq e} \sum_{j=1}^M w_d^j \frac{\pi_t(x_{1:s,d},x_{s+1,d}^j)}{\pi_{s+1}(x_{1:s,d},x_{s+1,d}^j)}\right]\right)^2 \Bigg] \myd u_s \myd x_{1:s} \nonumber\\
&=\frac{1}{p(y_{s+1}|y_{1:s})^2 M^{2n_x}} \sum_{e=1}^{n_x}\sum_{e'=1}^{n_x} \int \tilde h_e(x_{1:s},u_s) \tilde h_{e'}(x_{1:s},u_s) \prod_{d=1}^{n_x} \left[ \pi_s(x_{1:s,d}) \prod_{j=1}^M r(x_{s+1,d}^j|x_{s,d})\right]\myd u_s \myd x_{1:s},\label{eq:indep:incvar}
\end{align}
for $\tilde h_e$ defined by
\begin{align*}
\tilde h_e(x_{1:s},u_s) &= \sum_{j=1}^M w_e^j \frac{\int x_{t,e} \pi_t(x_{1:s,e},x_{s+1,e}^j,x_{t,e}) \myd x_{t,e}}{\pi_{s+1}(x_{1:s,e},x_{s+1,e}^j)} \cdot \prod_{d\neq e} \sum_{j=1}^M w_d^j \frac{\pi_t(x_{1:s,d},x_{s+1,d}^j)}{\pi_{s+1}(x_{1:s,d},x_{s+1,d}^j)}.
\end{align*}
Now, note that
\begin{align*}
\tilde h_e(x_{1:s},u_s)^2 &= \sum_{i_{1:n_x},j_{1:n_x}} \Bigg[ \prod_{d=1}^{n_x} w_d^{i_d} w_d^{j_d} \cdot \prod_{d \neq e} \frac{\pi_t(x_{1:s,d},x_{s+1,d}^{i_d})}{\pi_{s+1}(x_{1:s,d},x_{s+1,d}^{i_d})} \frac{\pi_t(x_{1:s,d},x_{s+1,d}^{j_d})}{\pi_{s+1}(x_{1:s,d},x_{s+1,d}^{j_d})} \\
&\quad \cdot \frac{\int x_{t,e} \pi_t(x_{1:s,e},x_{s+1,e}^{i_e},x_{t,e}) \myd x_{t,e}}{\pi_{s+1}(x_{1:s,e},x_{s+1,e}^{i_e})} \frac{\int x_{t,e} \pi_t(x_{1:s,e},x_{s+1,e}^{j_e},x_{t,e}) \myd x_{t,e}}{\pi_{s+1}(x_{1:s,e},x_{s+1,e}^{j_e})}\Bigg],\\
\tilde h_e(x_{1:s},u_s) \tilde h_{e'}(x_{1:s},u_s) &= \sum_{i_{1:n_x},j_{1:n_x}} \Bigg[ \prod_{d=1}^{n_x} w_d^{i_d} w_d^{j_d} \cdot \prod_{d \neq e} \frac{\pi_t(x_{1:s,d},x_{s+1,d}^{i_d})}{\pi_{s+1}(x_{1:s,d},x_{s+1,d}^{i_d})} \prod_{d \neq e'} \frac{\pi_t(x_{1:s,d},x_{s+1,d}^{j_d})}{\pi_{s+1}(x_{1:s,d},x_{s+1,d}^{j_d})} \\
&\quad \cdot \frac{\int x_{t,e} \pi_t(x_{1:s,e},x_{s+1,e}^{i_e},x_{t,e}) \myd x_{t,e}}{\pi_{s+1}(x_{1:s,e},x_{s+1,e}^{i_e})} \frac{\int x_{t,e'} \pi_t(x_{1:s,e'},x_{s+1,e'}^{j_{e'}},x_{t,e'}) \myd x_{t,e'}}{\pi_{s+1}(x_{1:s,e'},x_{s+1,e'}^{j_{e'}})}\Bigg],
\end{align*}
with $e \neq e'$ and all $i_d,j_d \in \{1,\ldots,M\}$. 

We will in the sequel also make use of the following observation
\begin{align}
\frac{w_d^i}{\pi_{s+1}(x_{1:s,d},x_{s+1,d}^{i})} = \frac{p(y_{s+1,d}|y_{1:s,d})}{r(x_{s+1,d}^i|x_{s,d}) \pi_s(x_{1:s,d})}.\label{eq:weightsident}
\end{align}

Now, we consider the case in \eqref{eq:indep:incvar} when $e=e'$:
\begin{align*}
&\frac{1}{p(y_{s+1}|y_{1:s})^2 M^{2n_x}}\int \tilde h_e(x_{1:s},u_s)^2 \prod_{d=1}^{n_x} \left[ \pi_s(x_{1:s,d}) \prod_{j=1}^M r(x_{s+1,d}^j|x_{s,d})\right]\myd u_s \myd x_{1:s}\\
&= \frac{1}{M^{2n_x}} \sum_{i_{1:n_x},j_{1:n_x}} \Bigg[ \prod_{d\neq e} \int \frac{\prod_{j=1}^M r(x_{s+1,d}^j|x_{s,d}) \pi_t(x_{1:s,d},x_{s+1,d}^{i_d}) \pi_t(x_{1:s,d},x_{s+1,d}^{j_d})}{r(x_{s+1,d}^{i_d}|x_{s,d}) r(x_{s+1,d}^{j_d}|x_{s,d}) \pi_s(x_{1:s,d})} \myd u_{s,d} \myd x_{1:s,d} \\
&\cdot \int \Big[ \frac{\prod_{j=1}^M r(x_{s+1,e}^j|x_{s,e}) \pi_t(x_{1:s,e})^2}{r(x_{s+1,e}^{i_e}|x_{s,e}) r(x_{s+1,e}^{j_e}|x_{s,e}) \pi_s(x_{1:s,e})} \\
&\quad\quad\int x_{t,e} \pi_t(x_{s+1,e}^{i_e},x_{t,e}|x_{s,e})\myd x_{t,e} \int x_{t,e} \pi_t(x_{s+1,e}^{j_e},x_{t,e}|x_{s,e})\myd x_{t,e} \Big]  \myd u_{s,e} \myd x_{1:s,e} \Bigg]\\
&= B_s^{n_x-1} \left( A_s + M^{-1}\left(\tilde A_s - A_s \right) \right) \left(1 - \frac{1}{M} \right)^{n_x-1}\left(1 + \frac{\tilde B_s}{B_s (M-1)} \right)^{n_x-1},
\end{align*}
where in the first equality we have used \eqref{eq:weightsident} and independency over dimensions. The second equality follows by straightforward (but tedious) calculations using combinatorial identities and noting that by definition of the model the constants do not depend on the dimension~$d$.

Let us now consider the case in \eqref{eq:indep:incvar} when $e\neq e'$:
\begin{align*}
&\frac{1}{p(y_{s+1}|y_{1:s})^2 M^{2n_x}}\int \tilde h_e(x_{1:s},u_s) \tilde h_{e'}(x_{1:s},u_s)  \prod_{d=1}^{n_x} \left[ \pi_s(x_{1:s,d}) \prod_{j=1}^M r(x_{s+1,d}^j|x_{s,d})\right]\myd u_s \myd x_{1:s}\\
&= \frac{1}{M^{2n_x}} \sum_{i_{1:n_x},j_{1:n_x}} \Bigg[ \prod_{d\neq e,e'} \int \frac{\prod_{j=1}^M r(x_{s+1,d}^j|x_{s,d}) \pi_t(x_{1:s,d},x_{s+1,d}^{i_d}) \pi_t(x_{1:s,d},x_{s+1,d}^{j_d})}{r(x_{s+1,d}^{i_d}|x_{s,d}) r(x_{s+1,d}^{j_d}|x_{s,d}) \pi_s(x_{1:s,d})} \myd u_{s,d} \myd x_{1:s,d} \\
&\cdot \int \frac{\pi_t(x_{1:s,e},x_{s+1,e}^{j_e}) \prod_{j=1}^M r(x_{s+1,e}^j|x_{s,e}) }{r(x_{s+1,e}^{i_e}|x_{s,e}) r(x_{s+1,e}^{j_e}|x_{s,e}) \pi_s(x_{1:s,e})} \int x_{t,e} \pi_t(x_{1:s,e},x_{s+1,e}^{i_e},x_{t,e})\myd x_{t,e} \myd u_{s,e} \myd x_{1:s,e} \\
&\cdot  \int \frac{ \pi_t(x_{1:s,e'},x_{s+1,e'}^{i_{e'}})\prod_{j=1}^M r(x_{s+1,e'}^j|x_{s,e'})}{r(x_{s+1,e'}^{i_{e'}}|x_{s,e'}) r(x_{s+1,e'}^{j_{e'}}|x_{s,e'}) \pi_s(x_{1:s,e'})}\int x_{t,e'} \pi_t(x_{1:s,e'},x_{s+1,e'}^{j_e},x_{t,e'})\myd x_{t,e'}  \myd u_{s,e'} \myd x_{1:s,e'} \Bigg]\\
&= \left(C_s + M^{-1}\left( \tilde C_s - C_s\right) \right)^2  \left(1 - \frac{1}{M} \right)^{n_x-2}\left(1 + \frac{\tilde B_s}{B_s (M-1)} \right)^{n_x-2},
\end{align*}
where again we have made use of indepency over dimensions $d$ and \eqref{eq:weightsident}. The last equality follows again by straightforward manipulations and we can see that product $\prod_{d\neq e,e'} \cdot$ is more or less equal to the one above, hence we obtain $B_s^{n_x-2}$ instead of $B_s^{n_x-1}$.

Putting all this together we get that
\begin{align*}
\Sigma_t^M(\varphi) &= n_x A_t + \sum_{s=0}^{t-1} \Bigg[ n_x B_s^{n_x-1} \left( A_s + M^{-1}\left(\tilde A_s - A_s \right) \right) \left(1 - \frac{1}{M} \right)^{n_x-1}\left(1 + \frac{\tilde B_s}{B_s (M-1)} \right)^{n_x-1} \\
& +n_x(n_x-1) B_s^{n_x-2} \left(C_s + M^{-1}\left( \tilde C_s - C_s\right) \right)^2  \left(1 - \frac{1}{M} \right)^{n_x-2}\left(1 + \frac{\tilde B_s}{B_s (M-1)} \right)^{n_x-2} \Bigg],
\end{align*}
equality follows by noting that $\sum_{e,e'}  = \sum_{e}\sum_{e'=e} + \sum_{e}\sum_{e'\neq e}$ and that the constants do not depend on $e/e'$.

\newpage
% ======================================================================
%                           Experiments
% ======================================================================
\subsection{Comparison with Independent Resampling Particle Filter}\label{supp:exp}

% ======================================================================
%                           IR-PF
% ======================================================================
We compare several variants of \nsmc to Independent Resampling Particle Filter (IR-PF) on the same setup studied in \citet[High dimensional problems]{lamberti2016independent}, for more information on the model and setup we refer to that paper. Figure~\ref{fig:IRPF} illustrates the results for  $N=M \in \{10,100\}$ and as we can see \nsmc outperforms IR-PF significantly in root mean square error (RMSE). \nsmc-\is and \nsmc-PF both approximate the optimal proposal \smc and as such generate conditionally independent samples (see supplementary methods section above for how to use \is  as a nested procedure). \nsmc-FAPF, clearly the best of all of them, on the other hand, approximates the fully adapted \smc and generates conditionally \emph{dependent} samples.
\begin{figure}
    \centering
    \begin{subfigure}[b]{0.7\textwidth}
        \includegraphics[width=\textwidth]{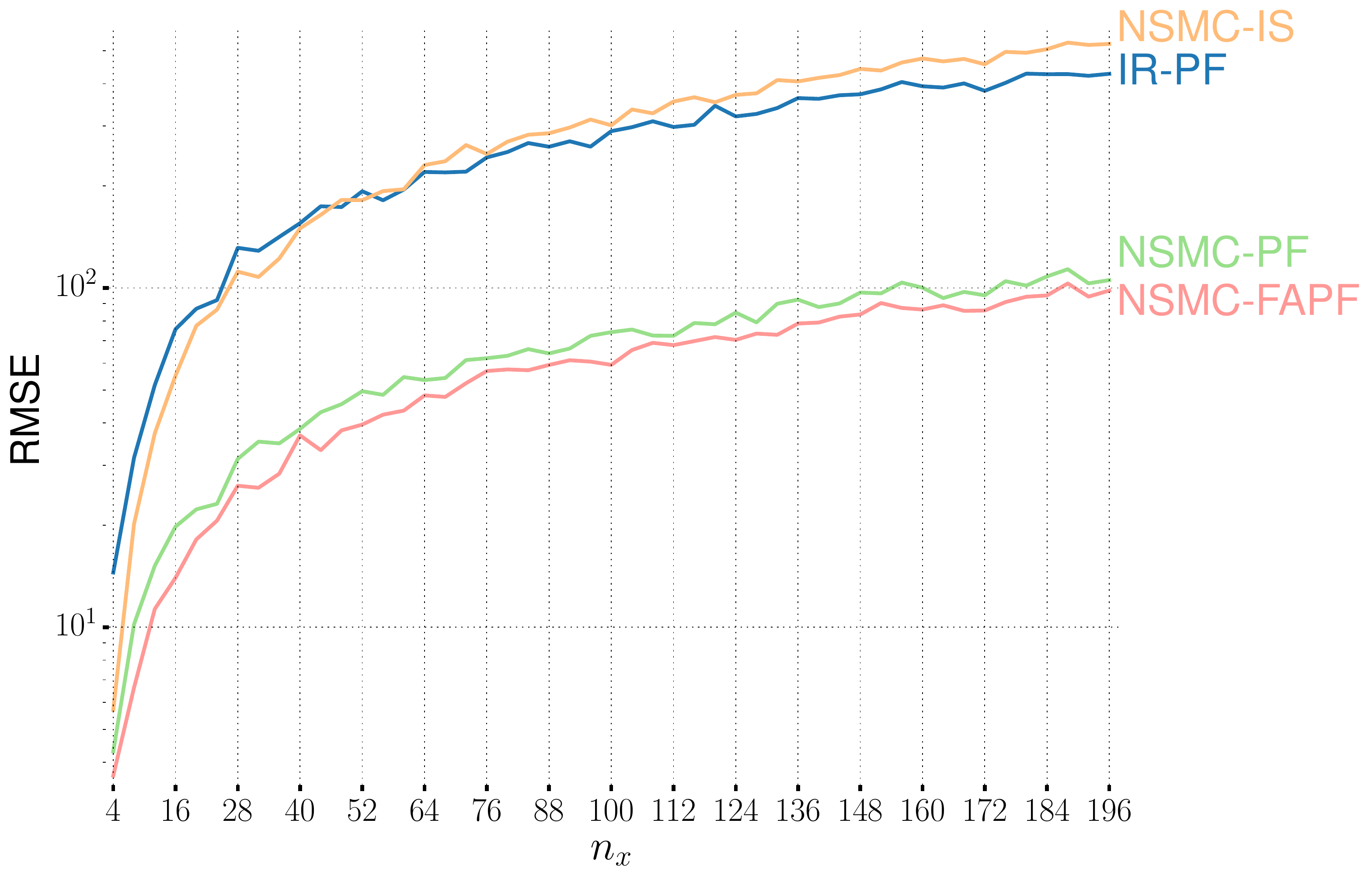}
        \caption{$N=M=10$}
        \label{fig:N10M10}
    \end{subfigure}
    
    \begin{subfigure}[b]{0.7\textwidth}
        \includegraphics[width=\textwidth]{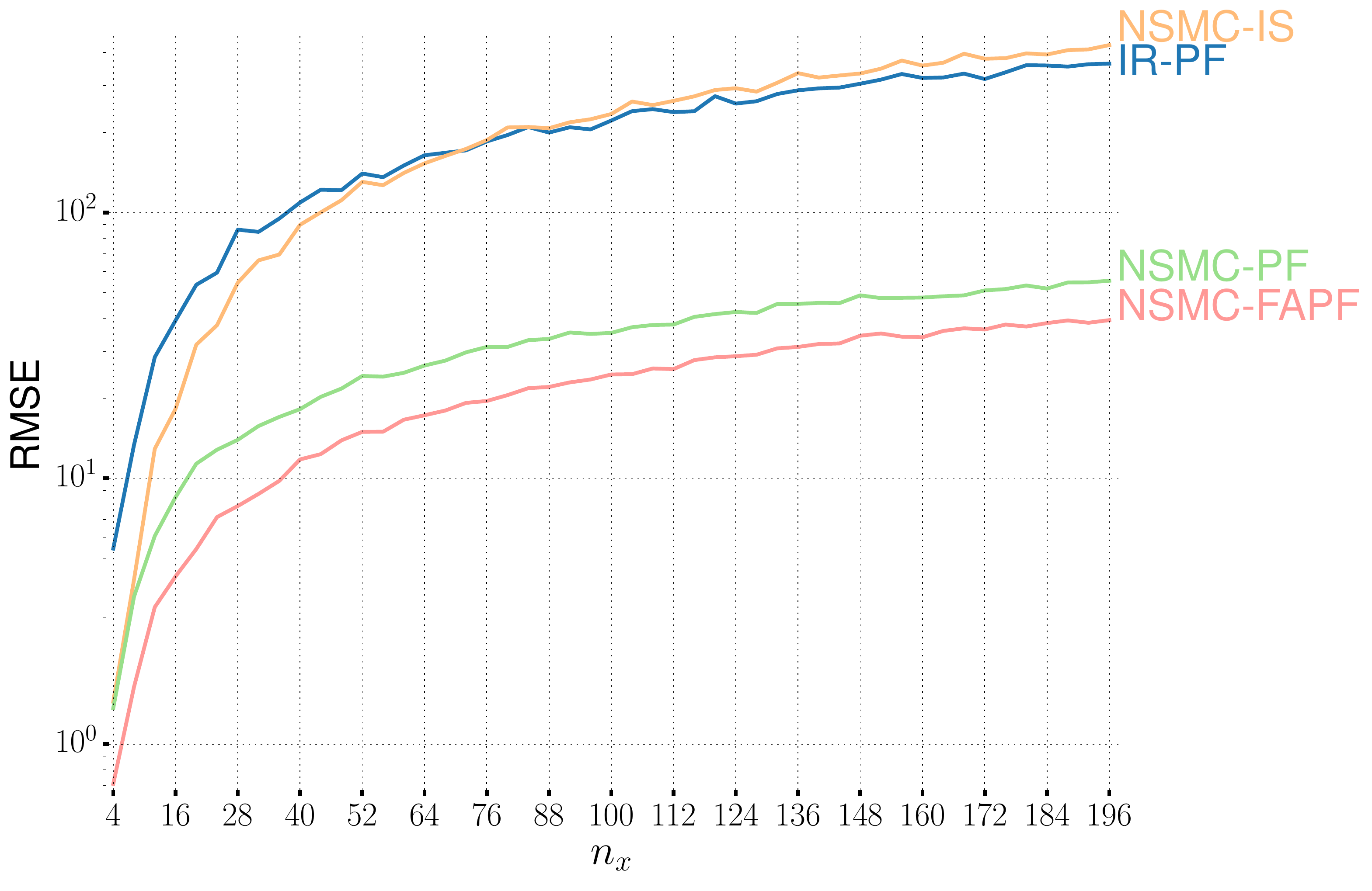}
        \caption{$N=M=100$}
        \label{fig:N100M100}
    \end{subfigure}
    \caption{RMSE of the IR-PF and three types of \nsmc methods, approximation of optimal proposal \smc using IS (orange) and PF (green), approximation of fully adapted \smc using PF with BS (red).}\label{fig:IRPF}
\end{figure}

\bibliographystyle{abbrvnat}
\bibliography{references}
\end{document}